\newtheorem{remark}{Remark}
\newtheorem{theorem}{Theorem}
\newtheorem{definition}{Definition}
\newtheorem{lemma}{Lemma}
\newtheorem{corollary}{Corollary}
\newtheorem{proposition}{Proposition}
\newtheorem{claim}{Claim}
\newtheorem*{theorem*}{Theorem}
\newcommand{\cB}{\mathcal{B}}
\newcommand{\cH}{\mathcal{H}}
\newcommand{\cK}{\mathcal{K}}
\newcommand{\cN}{\mathcal{N}}
\newcommand{\cR}{\mathcal{R}}
\newcommand{\cS}{\mathcal{S}}
\newcommand{\E}{\mathbb{E}}
\newcommand{\beq}{\begin{equation}}
\newcommand{\eeq}{\end{equation}}
\newcommand{\beqa}{\begin{eqnarray}}
\newcommand{\eeqa}{\end{eqnarray}}
\newcommand{\ket}[1]{\ensuremath{\left|#1\right\rangle}}
\newcommand{\ketbra}[2]{\mbox{$ |#1 \rangle \langle #2 | $}}
\newcommand{\token}{{\mathrm{token}}}
\newcommand{\match}{{\mathrm{color}}}
\def\opone{\leavevmode\hbox{\small1\normalsize\kern-.33em1}}
\newcommand{\Tr}[1]{\mathrm{Tr}\left(#1\right)}
\renewcommand{\today}{\number\day\space\ifcase\month\or
   January\or February\or March\or April\or May\or June\or
   July\or August\or September\or October\or November\or December\fi
   \space\number\year}
\begin{document}

\title{Network Nonlocality via Rigidity of Token-Counting and Color-Matching}
\date{\today}
\author{Marc-Olivier Renou}
\affiliation{ICFO-Institut de Ciencies Fotoniques, 08860 Castelldefels (Barcelona), Spain}
\author{Salman Beigi}
\affiliation{School of Mathematics, Institute for Research in Fundamental Sciences (IPM), P.O.~Box 19395-5746, Tehran, Iran}

\begin{abstract}
\emph{Network Nonlocality} is the study of the \emph{Network Nonlocal} correlations created by several independent entangled states shared in a network. In this paper, we provide the first two generic strategies to produce nonlocal correlations in large classes of networks without input. In the first one, called \emph{Token-Counting} (TC), each source distributes a fixed number of tokens and each party counts the number of received tokens. In the second one, called \emph{Color-Matching} (CM), each source takes a color and a party checks if the color of neighboring sources match. Using graph theoretic tools and Finner's inequality, we show that TC and CM distributions are \emph{rigid} in wide classes of networks, meaning that there is essentially a unique classical strategy to simulate such correlations. Using this rigidity property, we show that certain quantum TC and CM strategies  produce correlations that cannot be produced classicality. This leads us to several examples of Network Nonlocality without input. These examples involve creation of \emph{coherence throughout the whole network}, which we claim to be a fingerprint of genuine forms of Network Nonlocality. 
This work extends a more compact parallel work~\cite{PRL} on the same subject and provides all the required technical proofs. 
\end{abstract}

\maketitle

\section{Introduction}

Bell's theorem shows that local measurements of an entangled state creates Nonlocal quantum correlations~\cite{Bell1964}, a fundamental signature of quantum physics. 
Bell inequalities provide standard methods to characterize the nonlocal correlations of a single quantum state~\cite{BrunnerRMP}, and constitute a powerful toolbox for a large spectrum of applications~\cite{Mayer1998QuantCryptoUnsecAppar,Supic2020SelfTestReview,Kaniewski2016SelfTest,Pironio2010DIQRNG, Herrero2019,Bennett1984BB84, Acin2007DIQKD_CollAttacks}.
Nowadays experiments, such as the first loophole-free violation of a Bell inequality~\cite{HensenLoopholeFree2015}, often involve networks of several sources used to simulate standard single-source Bell protocols. 
Similar procedures are envisioned for long range QKD protocols using multimode quantum repeaters~\cite{Sangouard2011QuantRepeaters}. 

It was recently understood that networks can lead to new forms of nonlocality called \emph{Network Nonlocality}, which is imposed by the network topology.
This remarkable point of view can be exploited for new applications such as certification of entangled measurements~\cite{Renou2018BSMSelfTest,Bancal2018BSMSelfTest}, and detection of the nonlocality of \emph{all} entangled states~\cite{Bowles2018DIAllEntState} which is out of reach in standard Bell scenarios~\cite{Tsirelson1993}. 
At a conceptual level, Network Nonlocality was recently used to propose alternative definitions for genuine multipartite entanglement~\cite{navascues2020gnme,kraft2020networkentanglement,luo2020networkentanglement} and genuine multipartite nonlocality~\cite{Coiteux2021a,Coiteux2021b}, and to understand the role of complex numbers in quantum theory~\cite{Renou2021}.

Recently, several techniques have been developed to characterize classical (also called local) and nonlocal correlations in networks \cite{wolfe2019inflation,wolfe2021qinflation,Pozas2019}. Various examples of quantum Network Nonlocal correlations are known, such as a first example in the bilocal network and its generalization~\cite{branciard2010,branciard2012,gisin2017,tavakoli2017,luo2018} or from other technics~\cite{rosset2016,tavakoli2020}.
Also, a new form of Network Nonlocality, qualified of genuine Network Nonlocality, was found in the triangle network with no input~\cite{Renou2019a}. Nevertheless, except for the last example that is generalized to ring networks with an odd number of nodes, no generic construction of network nonlocality, unrelated to the standard Bell nonlocality, is known.

This paper develops the first generic examples of Network Nonlocality in a wide class of networks with no input based on general \emph{Color-Matching} (CM) strategies, that are also presented in a parallel letter~\cite{PRL}. We also present an alternative family of strategies, called \emph{Token-Counting} (TC) that provide Network Nonlocality in another class of networks. These two techniques are unrelated to Fritz's embedding of Bell nonlocality~\cite{fritz2012}.
In a TC strategy each source distributes a fixed number of tokens and each party counts and outputs the number of received tokens. 
In a CM strategy each source takes a color and each party checks if all its neighboring sources take the same color. 
We show that in appropriate networks, these strategies produce correlations (distributions over the outputs) that are \emph{rigid} in the sense that there is essentially a unique classical strategy to generate them. 
This rigidity property, which can be thought of as the self-testing of a network classical strategy, limits the set of possible classical strategies that may generate a quantum TC or CM distribution. 
We prove rigidity of TC and CM distributions using graph theoretic tools and Finner's inequality~\cite{RenouFinner2019,Finner1992}.
We then use this limitation to prove that certain quantum TC and CM distributions in wide classes of networks are nonlocal.
This work allows to reinterpret~\cite{Renou2019a} as the preliminary fingerprint of an entirely novel approach to network nonlocality through the rigidity of some strategies in adapted classes of networks. 

In our proofs of network nonlocality we observe the emergence of a global 
{
entangled state involving all the sources and the parties of the network.
}
 For instance, in the examples based on TC, we show that in order to generate certain outputs, the tokens must be distributed according to certain patterns in the network. 
Then, in the explicit quantum strategy we observe the superposition of these patterns, which amounts to a 
{
global entangled state through the whole network. 
Since this form of entanglement is missing in the classical case, we are able to prove infeasibility of simulating those correlations by a classical strategy. 
}

The present text extends a more compact parallel letter on the same subject~\cite{PRL} and furnishes all the required technical proofs. 

\subsection{Notations} 
In this paper, we represent a network $\cN$ by a bipartite graph where one part of the vertices consists of sources and the other one consists of parties. We assume that there are $I$ sources $S_1, \dots S_I$ and  $J$ parties  $A_1, \dots,  A_J$. We write $S_i\rightarrow A_j$ and sometimes $i\to j$ when the source $S_i$ is connected to the party $A_j$. We assume that the sources are not redundant meaning that there are no distinct sources $S_i, S_{i'}$ such that $S_{i}$ is connected to all parties connected to $S_{i'}$ (as otherwise  $S_{i'}$ can be merged to $S_i$). In this paper, we consider networks with no inputs. Then, a strategy corresponds to assigning multipartite states to sources that are distributed to the parties, and assigning measurements to the parties who measure the receives subsystems. Each party $A_j$ outputs her measurement outcome $a_j$. The joint distribution of outputs is denoted by $P=\{P(a_1, \dots,  a_J)\}$.

%\marco{in TC, $a_j$ is a sub-output, the output is $(n_j,a_j)$. Maybe change this? I see that we only use $\alpha$ for some sources for which we could go to $\lambda, \mu, \nu$ and in the appendix where we can find something else. Hence I propose to write $a_j=(n_j,\alpha_j)$.\\
%A second solution is to go to ${\bf{a}}_j=(n_j,a_j)$.\\
%Unless you have an opinion, I'll change for $a_j=(n_j,\alpha_j)$.}

\subsection{Classical and quantum models of network correlations}\label{sec:ClassicalQuantumModelNetworkCorrelations}

{
Given a network $\cN$ as defined above, we are interested in the set of probability distributions which can be created when the sources distribute either classical or quantum signals to the parties.
}

{
In a classical strategy, each source $S_i$ distributes classical information. Letting $\cS_i$ be the set of all potential values taken by $S_i$, the source picks a value $s_i\in \cS_i$ with some probability $p(s_i)$.
A party $A_j$ receives all source values $\{s_i:\, i\to j\}$ and provides an output $a_j$ given these values, according to a probabilistic output function $P(a_j|\{s_i:\, i\to j\})$.
Then, the joint distribution of outputs is given by
\begin{align}
P(a_1, \dots,  a_J)=
%\sum_{s_1, \dots, s_I} p(s_1)\cdots p(s_I)  P(a_1&|\{s_i:i\to 1\})\cdots P(a_J|\{s_i:i\to J\}).\\
\sum_{s_1, \dots, s_I} \prod_i p(s_i)\cdot \prod_j  P(a_j&|\{s_i:i\to j\}).
\label{eq:ClassicalModel}
\end{align}
}

{
In a quantum strategy, sources distribute quantum information, and each party performs a quantum measurement. 
Without loss of generality, we assume that all states distributed by the sources are pure, and measurement performed by the parties are projective.
To be more precise, each arrow $i\to j$ of $\cN$ corresponds to a Hilbert space $\cH_{i,j}$. Source $S_i$ distributes a quantum state $\ket{\psi_i}$ in the Hilbert space $\cH_{i}=\bigotimes_{j:\,i\to j}\cH_{i,j}$, where $\cH_{i,j}$ contains the part of the state sent to party $A_j$ by source $S_i$. 
Party $A_j$ performs a projective measurement $\{\Pi_{a_j}:\, a_j\}$  consisting of projections acting on the Hilbert space $\cH_{j}=\bigotimes_{i:i\to j}\cH_{i,j}$ associated to the subsystems she receives.
Hence, the joint distribution of the outputs equals
\begin{align}
P(a_1, \dots,  a_J)=\Tr{\otimes_i\ketbra{\psi_i}{\psi_i} \cdot \otimes_j \Pi_{a_j}    }.
\label{eq:QuantumModel}
\end{align}
}

{
In this paper, we present two generic methods to obtain quantum distributions as in~\eqref{eq:QuantumModel} which do not admit any classical model as in \eqref{eq:ClassicalModel} for two large classes of networks.
}

\subsection{Overview of the paper}

In Section~\ref{sec:TC} we discuss Token-Counting (TC) strategies in networks. We first define TC strategies and distributions (Definition~\ref{definition:TC}){, and provide an overview of our method for network nonlocality from TC.}
Then, we prove our first main result, Theorem~\ref{theorem:TC}, which states that under the assumption that $\cN$ is a \emph{No Double Common-Source (NDCS) network} (see Definition~\ref{definition:NDCSnetworks}), TC distributions are rigid, meaning that only TC strategies can classically simulate distributions arising from a TC strategy.
We illustrate how network nonlocality can be obtained from TC strategies in subsection~\ref{subsec:TC5-0}.
At last, we provide an additional rigidity property of TC distributions in Subsection~\ref{subsec:ExtraRigidityTC}.

Section~\ref{sec:CM} contains similar results for Color-Matching (CM) strategies. We first define these strategies (Definition~\ref{definition:CM}){, and provide an overview of our method for network nonlocality from TC.} 
We then prove our second main result, Theorem~\ref{theorem:CM}, which shows that in \emph{Exclusive Common-Source (ECS) networks} admitting a \emph{Perfect Fractional Independent Set} (PFIS), CM strategies are rigid. We then obtain nonlocality from CM in Subsection~\ref{subsec:1-2CM} using the extra rigidity property we establish in Corollary~\ref{corollary:CMExtraMeasures}.

The presentation of general methods of TC and CM is followed by specific examples of network nonlocality in some networks. In particular, we study \emph{ring networks with bipartite sources} in Section~\ref{sec:AllRingScenariosWithBipartiteSources}, and \emph{all bipartite source complete networks} in Section~\ref{sec:BipartiteSourceCompleteNetwork}. In Section~\ref{sec:GraphColoring}, by considering a particular network, we show how proper coloring of graphs would result in network nonlocality via CM strategies.

Conclusions and final remarks are discussed in Section~\ref{sec:conclusion}. 
{
In particular, we emphasize the creation of a global entangled state through the whole network in our method, and argue that this is the fundament of the network nonlocality in our examples.
We also explain why we believe our examples of network nonlocality are essentially different from the network nonlocality obtained from existing embedding of Bell's nonlocality~\cite{fritz2012}.
}

Let us finish this subsection by advising the reader to first concentrate on the introduction, Section~\ref{sec:TC} (except Subsection~\ref{subsec:ExtraRigidityTC}) and the conclusion to obtain a good overview of the paper.

\section{Token-Counting Strategies}\label{sec:TC}

\subsection{Definitions and strategy for network nonlocality from Token-Counting}\label{subsec:DefinitionTC}

%\marco{both referees complain about readability, referee 2 says \\
%"The structure and the presentation of the main results should be improved for better readability. The authors may, at first, state in brief the key steps used for demonstrating network nonlocality before going into the details.".\\
%Possibility: Call this section "definitions and strategy for network nonlocal distributions, have a first subsubsection 'definition' and a second one 'general strategy'".}

\subsubsection{Definitions}

In a \emph{Token-Counting strategy}, each source $S_i$ distributes $\eta_i$ tokens to the parties it connects. A party $A_j$ counts the number of received tokens, and may also measure extra degrees of freedom.

\begin{definition}[Token-Counting strategy]\label{definition:TC}
A strategy in $\mathcal{N}$ is called Token-Counting (TC) if, up to a relabelling of the outputs and of the information sent by the sources:
\begin{enumerate}
\item[(i)] Each source $S_i$ distributes a fixed number of tokens $\eta_i$ to the connected parties (with some fixed probability distribution) along with possibly additional information.
\item[(ii)] Every party $A_j$ counts the total number of tokens $n_j$ she receives and produces possibly additional information {$\alpha_j$. She outputs $a_j=(n_j,\alpha_j)$.}
\end{enumerate}
The resulting distribution of a \emph{Token-Counting strategy} $P=\{P((n_1,\alpha_1),\dots, (n_J,\alpha_J))\}$ is called a \emph{Token-Counting distribution}. 
\end{definition}
%\marco{I changed notation $(n_j,a_j)$ to notation $a_j=(n_j,\alpha_j)$. TODO: check all changes have been done.}
\begin{figure}
\includegraphics[width=0.47\textwidth]{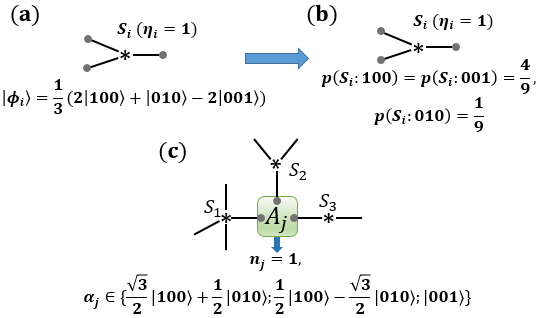}
\caption{
%\marco{TODO change label extra output to $\alpha_j$}\\
(a) $S_i$ distributes $\eta_i=1$ token to its adjacent three parties in superposition through the tripartite state $\ket{\phi_i}=\frac{1}{3}(2\ket{100}+\ket{010}-2\ket{001})$, a superposition of basis vectors $|\eta_i^1, \eta_i^2, \eta_i^3\rangle$ with $\eta_i^1+\eta_i^2+ \eta_i^3=\eta_i=1$. Here, e.g., $\ket{100}$
 indicates that one token is sent to the first party, and no token to the second and third parties. \\ 
(b) We restrict to the marginal over the token counts $P_{\mathrm{token}}(n_1, \dots,  n_J)$. Tokens distributed by $S_i$ are determined by measuring $\ket{\phi_i}$ in the computational basis. This reduces a quantum TC strategy to a classical one producing the same distribution $P_{\mathrm{token}}$, called the \emph{decohered classical strategy} (see Remark~\ref{remark:TC}).\\
(c) Party $A_j$ first measures the number of tokens $n_j$ she receives from all sources. Here, $n_j=1$ corresponds to the projection on the subspace spanned by vectors $\ket{001},\ket{010},\ket{100}$. Next, she measures the token provenance in a superposed way to obtain the extra output $\alpha_j$.
}
\label{fig:TC_Source_Party}
\end{figure}

We note that quantum strategies may also be TC and produce TC distributions. We may assume that each source distributes a fixed number of tokens in superposition (see Figure\,\ref{fig:TC_Source_Party}(a)). Then, each party performs a projective measurement to count the number of received tokens $n_j$. Next, she may measure other degrees of freedom to produce $\alpha_j$ (see Figure\,\ref{fig:TC_Source_Party}(c)).

Let us first remark that \emph{without} $\alpha_j$, a quantum TC distribution is always classically simulable:

\begin{remark}\label{remark:TC}
Let $P=\{P((n_1,\alpha_1),\dots, (n_J,\alpha_J))\}$ be a TC distribution associated with a quantum TC strategy in network $\mathcal{N}$. Then, the token-count marginal distribution $P_\mathrm{token}=\{P_\token(n_1,\dots, n_J)\}$ can be classically simulated by a TC strategy which we call the \emph{decohered classical strategy}. To this end, observe that the measurement operators associated with the token-counts $n_j$'s are all diagonal in the computational basis. So the sources may measure their output states in the computational basis before sending them to the parties (see Figure\,\ref{fig:TC_Source_Party}(b)). This would not change the outcome distribution of the token counts $(n_1, \dots,  n_J)$. This computational basis measurement demolishes the coherence of sources and produces the associated decohered TC classical strategy. 
\end{remark}

Unlike for a classical strategy, the measurement of the token-counts $n_j$ in a quantum strategy may leave the sources in 
{
a global superposition of several possible token distributions over the network, which could then be detected by measuring the token provenance in a superposed way with the extra information $\alpha_j$ . 
}

%Our goal is to show nonlocality in networks via TC distributions. In other words, we want to show that certain quantum strategies produce TC distributions that cannot be produced classically. By Remark~\ref{remark:TC}, it is clear that this goal can only be obtained when some extra information $\alpha_j$ is measured. 
%To this end, we first show that to \emph{classically} generate TC distributions in certain networks, we have no choice but using TC strategies, which we call the rigidity property of TC strategies. That is, for some networks TC strategies are the only strategies that can produce TC distributions. This rigidity property would limit the type of classical strategies for generating certain distributions, and later will be used to prove nonlocality.  

%Note that there are networks in which TC distributions may be produced without TC strategies. For such an example  see Figure\,\ref{fig:TCNotCompatible}. Hence, we have to restrict ourselves to a subclass of networks which we call \emph{No Double Common-Source networks}.

{
We will prove in Theorem~\ref{theorem:TC} that for some networks, TC strategies are the only classical strategies that can produce TC distributions. 
For this, we will see that we need to restrict ourselves to the following class of \emph{No Double Common-Source networks}.
}

\begin{definition}[No Double Common-Source networks]\label{definition:NDCSnetworks}
A network $\cN$ is called a \emph{No Double Common-Source (NDCS) network} if each pair of parties do not share more than one common source, i.e., there does not exist $S_i\neq S_{i'}$ and $A_j\neq A_{j'}$ such that $S_i\rightarrow A_{j}, A_{j'}$ and $S_{i'}\rightarrow A_{j}, A_{j'}$.
\end{definition}

\subsubsection{Network Nonlocality from Token-Counting}
{
In the next section, we will provide a method to obtain network nonlocality in NDCS networks via TC distributions. 
We will prove that certain quantum TC strategies produce TC distributions that cannot be simulated classically. 
By Remark~\ref{remark:TC}, it is clear that this goal can only be obtained when some extra information $\alpha_j$ is measured. 
We provide here an overview on this method which will be illustrated in the case of the network of Figure~\ref{fig:5-0}a in Section~\ref{subsec:TC5-0}. 
}

{
Our method first introduces a quantum TC strategy in an NDCS network $\cN$ achieving a distribution $P$. 
More precisely, it introduces some given number of tokens $\eta_i\geq 1$ to be distributed for each source $S_i$ of the network, as well as some concrete superposed ways to distribute them, as illustrated in Figure~\ref{fig:TC_Source_Party}a.
It also asks every party $A_j$ to first measure the total number of received tokens $n_j$, and to measure how these tokens reached her in some superposed way, as in Figure~\ref{fig:TC_Source_Party}c, obtaining a second information $\alpha_j$.
This TC strategy allows the parties to produce a TC distribution $P=\{P((n_1, \alpha_1), \dots,  (n_J, \alpha_J))\}$ corresponding to the joint probability distribution of the parties' observations. Next, assuming that the quantum TC strategy is chosen deliberately, we aim to show that $P$ cannot be produced classically, that $P$ is a nonlocal distribution in  network $\cN$.
}

%To this end, we first consider the token-count marginal $P_{\token}=\{P_\token(n_1, \dots,  n_J)\}$, that is obtained by ignoring the extra information provided by the parties. As mentioned above, the corresponding decohered classical strategy (see Figure~\ref{fig:TC_Source_Party}a2) that is a TC classical strategy, can produce $P_{\token}$. We argue that this TC classical strategy is essentially the unique classical strategy for producing $P_{\token}$ in the network $\cN$. This rigidity property,  substantially reduces the space of strategies that can simulate the whole distribution $P$ via classical strategies as given by~\eqref{eq:ClassicalModel}. Finally, we show that, under the assumption that the quantum TC strategy is chosen deliberately, such restricted strategies are incapable of simulating the target TC distribution $P$.
%Our proof works by contradiction. We first assume the existence of an hypothetical classical strategy to simulate $P$, in which a source $S_i$ distributes some classical values $s_i\in\cS_i$ to the parties it is connected to, and a party $A_j$ output probabilistic functions of the values taken by the sources she is connected to, see section~\ref{sec:ClassicalQuantumModelNetworkCorrelations}.

{
Suppose that $P$ is local and can be simulated by a classical strategy as described in~\eqref{eq:ClassicalModel}. Then, since  $P_{\token}$ is a coarse-graining of $P$ obtained by ignoring the extra informations $\alpha_j$'s, the marginal distribution $P_{\token}$ can also be simulated classically. Indeed, as mentioned above, the corresponding decohered classical strategy (see Remark~\ref{remark:TC} and Figure~\ref{fig:TC_Source_Party}b) that is a TC classical strategy, can produce $P_{\token}$. However, the important issue is that  this TC classical strategy is essentially the unique classical strategy that produces $P_{\token}$ in the network $\cN$. This property which we call the rigidity property is proven in Theorem~\ref{theorem:TC}, and is our main step in showing Network Nonlocality via TC.
}

{
Going back to the full distribution $P$, the rigidity property substantially reduces not only the space of classical strategies that simulate the marginal $P_{\token}$, but also those of $P$ since the latter is an extension of the former. Then, further investigating the distribution $P$ and using the extra produced information $\alpha_j$'s, we will show that the whole distribution $P$ cannot be simulated by any classical strategy. This last step is discussed in details for a specific example in Subsection~\ref{subsec:TC5-0}. 
}

\subsection{Rigidity of TC strategies}\label{subsec:RigidityTC}

\begin{figure}
\includegraphics[width=0.3\textwidth]{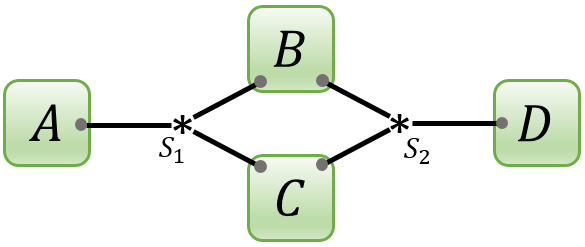}
\caption{In this network, TC distributions can be classically generated with strategies that are not TC. Suppose that $S_1$ uniformly distributes one token among $A, B, C$, while $S_2$ distributes no token. This gives a TC distribution, which however,  can alternatively be simulated with the following non-TC strategy. Suppose that $S_1$ takes value $S_1=A$ with probability $1/3$, and $S_1=BC$ with probability $2/3$. Moreover, $S_2$ takes one of the values $B$ or $C$ with uniform probability. Any of the parties outputs $n=1$ (receiving a token) if she sees her name in all the sources connected to her. This is clearly not a TC strategy, yet it simulates the initial TC distribution.}
\label{fig:TCNotCompatible}
\end{figure}

We can now state our first main result: 

\begin{theorem}[Token-Counting]\label{theorem:TC}
Let $\cN$ be a no-double-common-source network.
Then any \emph{classical} strategy that simulates a Token-Counting distribution in $\cN$ is necessarily a Token-Counting strategy. 
Moreover, in any such Token-Counting strategy sources distribute the tokens among their connected parties under a fixed (unique) probability distribution.

\end{theorem}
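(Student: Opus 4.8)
The plan is to exploit Finner's inequality, which bounds the probability that a network-distributed collection of independent source-variables simultaneously lands in prescribed events, with an exponent determined by a fractional edge cover of the incidence structure. The key structural observation is that in a TC distribution, ``token count'' is an additive quantity: if $\eta_i$ tokens leave source $S_i$, and we let $t_{i\to j}\in\{0,1,\dots,\eta_i\}$ denote the number of tokens $S_i$ sends to $A_j$, then $n_j=\sum_{i:\,i\to j}t_{i\to j}$ and, crucially, $\sum_j n_j=\sum_i \eta_i$ deterministically. So in \emph{any} classical strategy reproducing $P_{\token}$, the global token count $N:=\sum_i\eta_i$ is a constant of the distribution, readable off $P_{\token}$ itself.

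First I would set up the following. Fix an arbitrary classical strategy \eqref{eq:ClassicalModel} reproducing $P_{\token}$. For each source value $s_i\in\cS_i$ and each party $A_j$ with $i\to j$, define the ``contribution'' of $s_i$ to $A_j$ as a random variable; the goal is to show that, conditioned on $s_i$, the number of tokens $A_j$ outputs that are ``attributable'' to $S_i$ is in fact a deterministic function $\eta_i^{j}(s_i)$ of $s_i$ alone (this is what it means for the strategy to be TC: $S_i$ really is distributing a fixed number of tokens $\eta_i=\sum_{j}\eta_i^j(s_i)$, which must moreover be independent of $s_i$). The mechanism for extracting this is a ``rigid'' event: pick an output vector $(n_1,\dots,n_J)$ with positive probability that is \emph{extremal} for the global count — e.g. one maximizing $\sum_{j\in T}n_j$ over some party-subset $T$, or more simply one where some $n_j$ attains its maximum possible value. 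Conditioning on this event forces, via Finner, that all the independent source variables lie in events that are ``tight'' in the fractional-cover sense, and NDCS is exactly the hypothesis that makes the relevant fractional cover behave well (two parties share at most one common source, so the bipartite source–party incidence has no ``4-cycle'' obstruction and admits the weight assignment making Finner an equality-forcing tool rather than a lossy bound).

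The main steps, in order: (1) reduce to $P_{\token}$ using Remark~\ref{remark:TC} (already done in the excerpt's overview); (2) show $N=\sum_i\eta_i$ is determined by $P_{\token}$, and more generally that for each party $A_j$, $n_j$ ranges over a fixed interval whose endpoints encode $\sum_{i\to j}\eta_i^{\max}$; (3) apply Finner's inequality to carefully chosen extremal output events to deduce that, in the given classical strategy, the ``token received from $S_i$'' at party $A_j$ is a function of $s_i$ only — i.e. the strategy factorizes through token-like variables; (4) from additivity $\sum_j \eta_i^j(s_i)=\eta_i$ and the fact that $N$ is constant, propagate the constraint across the (connected) network using NDCS to conclude $\eta_i^j(s_i)$ does not depend on $s_i$, so each source emits a fixed number $\eta_i$ of tokens; (5) re-examine \eqref{eq:ClassicalModel} to see that once the token-content of each $s_i$ is fixed, the residual data is precisely the ``additional information,'' so the strategy is TC by Definition~\ref{definition:TC}; (6) for the ``moreover'' clause, observe that the \emph{joint} distribution of $\{t_{i\to j}\}_j$ given $S_i$ must reproduce the marginal that $P_{\token}$ dictates, and an independence/disentangling argument (again using NDCS so that no two parties' counts are correlated through a second shared source) pins this conditional distribution down uniquely — it must equal the ``decohered'' distribution of the quantum strategy.

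The hard part will be step (3)–(4): turning Finner's inequality from an inequality into a \emph{rigidity} statement. The inequality alone only says certain joint probabilities are small; to force \emph{determinism} of $\eta_i^j(s_i)$ one has to find the right family of extremal events whose probabilities, summed or compared, saturate Finner simultaneously, and then argue that saturation of Finner (in the NDCS incidence geometry) forces each source variable into a single ``token-sector.'' I expect this to require a lemma of the form ``if a local model reproduces an additive-count distribution and Finner is tight on the extremal sectors, then the model is a convex mixture of deterministic token-assignments,'' after which averaging and the constancy of $N$ kill the remaining freedom. The NDCS hypothesis is doing real work here and I would flag exactly where it is used — the figure with the non-TC simulation shows that without it the theorem is false, so the proof must break precisely at the step that invokes ``no two parties share two sources.''
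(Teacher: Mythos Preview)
Your proposal conflates the two rigidity proofs in the paper. Finner's inequality is the tool for Color-Matching (Theorem~\ref{theorem:CM}), where the hypotheses are ECS \emph{and} the existence of a Perfect Fractional Independent Set, and where the color-match event $a_j=c$ for all $j$ has exactly the product structure that saturates Finner. Token-Counting is proved by an entirely different, elementary mechanism, and NDCS is not what makes a fractional cover work: NDCS says no pair of parties shares two sources, which has nothing to do with the existence of party-weights $x_j$ with $\sum_{j:i\to j}x_j=1$. So the sentence ``NDCS is exactly the hypothesis that makes the relevant fractional cover behave well'' is false, and your step~(3) has no footing.

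The paper's actual argument is a discrete-derivative trick. Fix reference values $r_i\in\cS_i$ and set
\[
R_i^j(\{s_{i'}\}) \;=\; n_j(\{s_{i'}:i'\to j\}) - n_j(\{s_{i'}:i'\to j\}\text{ with }s_i\mapsto r_i).
\]
Because $\sum_j n_j$ is the constant $\sum_i\eta_i$, summing over all $j$ with $i\to j$ gives $\sum_{j:i\to j}R_i^j=0$. Now NDCS enters: any other source variable $s_{i'}$ appears in at most one term of this sum (since $S_i$ and $S_{i'}$ share at most one party), and a sum that is identically zero while each variable appears in only one summand forces every summand to be independent of that variable. Hence $R_i^j$ depends on $s_i$ alone, and one defines the token functions $T_i^j(s_i)$ from $R_i^j(s_i)$ by a shift. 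This is where NDCS is really used, and it is a pure separation-of-variables argument, not an inequality-saturation argument. Note also that $T_i^j(s_i)$ \emph{does} depend on $s_i$; only the total $\sum_{j:i\to j}T_i^j(s_i)=\eta_i$ is constant, contrary to your step~(4). Finally, uniqueness of the token distribution (the ``moreover'' clause) is obtained by an induction on the number of parties, removing one party conditioned on her minimal token count; your disentangling sketch in~(6) does not supply this.
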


In Appendix~\ref{app:sec:TC} we present a more formal statement of the theorem.
{More concretely, considering an arbitrary simulating classical strategy, we show that for any $S_i\to A_j$, there is a \emph{token function} $T_i^j:s_i\in\cS_i\mapsto T_i^j(s_i)\in\mathbb Z_{\geq 0}$ mapping any value $s_i$ taken by $S_i$ to a number of tokens sent by source $S_i$ to $A_j$. We show that these token functions are consistent in the sense that:
\begin{enumerate}
\item[{\rm (i)}] $\sum_{j:i\to j} T_i^j(s_i) = \eta_i$, that is the total number of tokens distributed by $S_i$ equals $\eta_i$. 
\item[{\rm (ii)}] $n_j=\sum_{i:i\to j} T_i^j(s_i)$, that is party $A_j$ outputs $n_j$ by summing the tokens  she recieves from the adjacent sources according to the token functions. 
\item[{\rm (iii)}] The tokens are distributed with the same probabilities as in the initial TC strategy.
\end{enumerate} }
%This theorem says that in any strategy that simulates $P((n_1, \alpha_1), \dots,  (n_J, \alpha_j))$, any symbol $s_i$ distributed by $S_i$ corresponds to sending $T_i^j(s_i)$ tokens to $A_j$ if $S_i\to A_j$. By (i) the total number of distributed tokens by $S_i$ equals $\eta_i$. Next by (ii) each party to generate her first part of the output simply counts the number of received tokens. This mean that it is a TC strategy. Finally (iii) says that any TC strategy that simulates $P((n_1, \alpha_1), \dots,  (n_J, \alpha_J))$ must distribute tokens with the same distribution as in the original strategy. 

Now, we briefly explain the three ingredients used in the proof, that is also detailed in Appendix~\ref{app:sec:TC}.

\begin{proof}
First, the total number of tokens is fixed, i.e., if $P_\mathrm{token}(n_1, \dots,  n_J)>0$, then $n_1+\cdots+n_J=\eta_1+\cdots+\eta_I$. Second, by changing the output of source $S_i$, only the values of $n_j$'s with $S_i\to A_j$ may change. Third, by the NDCS assumption, for a given $S_i$ and a party $A_j$ with $S_i\to A_j$, we may fix all the messages received by $A_j$, except the one from $S_i$, to a desired value without getting a conflict with the received messages of other parties $A_{j'}\neq A_j$ with $S_i\to A_{j'}$. Then, we may study the simultaneous variations of $n_j$'s for all such parties when we change the output of $S_i$. Putting these together the proof of the theorem follows.
\end{proof}

{
Note that in networks which are not NDCS, TC distributions may be produced without TC strategies. 
For such an example  see Figure\,\ref{fig:TCNotCompatible}. 
Hence, the restriction to the class of NDCS networks is necessary for the validity of Theorem~\ref{theorem:TC}.
}

%{Paragraph removed}\\

We now illustrate our method on the four-party 5-0 network consisting of five bipartite sources and zero tripartite sources. 

\subsection{Nonlocality in the 5-0 TC scenario from TC}\label{subsec:TC5-0}

\begin{figure}
\includegraphics[width=0.47\textwidth]{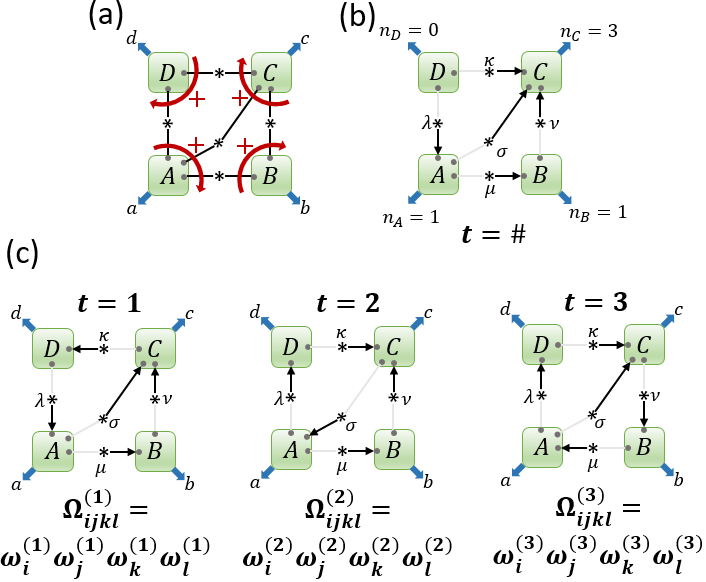}
\caption{(a) 5-0 four-party scenario and its orientation. (b) Label $t=\#$ refers to the only token distribution leading to $n_A=n_B=1, n_C=3, n_D=0$. (c) Labels $t\in\{1,2,3\}$ refer to the three possible distributions of tokens leading to $n_A=n_B=n_D=1, n_C=2$. The associated parameters $\Omega^{(t)}_{ijkl}$ are defined {in Eq.~\eqref{eq:OmegaTC} and the coefficients $\omega_i^{(1)}, ..., \omega_l^{(3)}$ are parameters of the parties' measurement bases. We assume the convention that $\omega_j^{(2)}=\omega_j^{(1)}$ and $\omega_l^{(2)}=\omega_l^{(3)}$}.}\label{fig:5-0}
\end{figure}

We demonstrate nonlocality in the four-party 5-0 network of Figure\,\ref{fig:5-0} via a TC quantum strategy. The same method will be used in Section~\ref{sec:AllRingScenariosWithBipartiteSources}. We write the outputs of the parties by $(n_A, \alpha), (n_B, \beta), (n_C, \gamma),$ and $(n_D, \delta)$, where $n_A, n_C \in \{0,1,2,3\}$ and $n_B, n_D\in\{0,1,2\}$ are the number of received tokens and $\alpha, \beta, \gamma, \delta$ are additional informations.
The TC quantum strategy is described as follows:
\begin{itemize}
\item All sources distribute the state $\ket{\psi^+}=\frac{\ket{01}+\ket{10}}{\sqrt{2}}$ meaning one token per source distributed uniformly and coherently.
\item The measurement basis vectors of $A$ are 
\begin{align*}
\quad&\ket{000},\\
\quad&\ket{\chi_{i}^{A}}=\omega_i^{(1)}\ket{100}+\omega_i^{(2)}\ket{010}+\omega_i^{(3)}\ket{001},~i= 1, 2, 3\\
\quad&\ket{110},\ket{101},\ket{011},\\
\quad& \ket{111}.
\end{align*}
In this representation of basis vectors we assume that $A$ sorts the received qubits as in Figure\,\ref{fig:5-0}(a). Observe that the first row corresponds to $n_A=0$ ($A$ receiving no token), and the other ones to $n_A=1,2,3$ respectively. We note that when $n_A\in\{0,3\}$ there is no other degree of freedom to measure. However, there are three possibilities for $n_A=1$ and $n_A=2$. In the latter case we assume that $A$ further measures in order to exactly obtain the provenance of the two tokens. In the former case, however, $A$'s measurement basis states are entangled; they correspond to projectors on $\ket{\chi_{i}^{A}}_{i=1, 2, 3}$  in which case we let $\alpha=i$. 
\item Following similar conventions as above, the measurement basis of $C$ is 
\begin{align*}
\quad&\ket{000},\\
\quad&\ket{001}, \ket{010}, \ket{100},\\
\quad&\ket{\chi_{k}^{C}}=\omega_k^{(1)}\ket{110}+\omega_k^{(2)}\ket{101}+\omega_k^{(3)}\ket{011}, ~ k=1, 2, 3\\
\quad&\ket{111}. 
\end{align*}
\item The measurement basis of $B$ is
\begin{align*}
\quad&\ket{00},\\
\quad&\ket{\chi_{j}^{B}}=\omega_j^{(1)}\ket{10}+\omega_j^{(3)}\ket{01},~j=0, 1\qquad\qquad\qquad\qquad\quad\\
\quad&\ket{11}. 
\end{align*}
\item The measurement basis of $D$ is similar to that of $B$ but with coefficients $\omega_l^{(1)}$ and $\omega_l^{(3)}$, $l=1, 2$.
\end{itemize}

%\marco{find clear notation for the probas}\\

{
Let $P=\{P\big((n_A, \alpha), (n_B, \beta), (n_C, \gamma),(n_D, \delta)\big)\}$ be the resulting distribution and $P_\token=\{P_\token(n_A, n_B, n_C, n_D)\}$ the corresponding token-count marginal. 
One can compute the exact probabilities applying Eq.~\eqref{eq:QuantumModel}. 
For instance, we have that $P_\token(n_A=n_B=n_D=1, n_C=2)=3/2^5$, and
\begin{align}
P\big(n_A=n_B=n_D=1, n_C=2, i, j, k, l\big)&=\nonumber\\\frac{1}{2^5}\Big|\Omega^{(1)}_{ijkl}+\Omega^{(2)}_{ijkl}+&\Omega^{(3)}_{ijkl} \Big|^2,\label{eq:OmegaTC}
\end{align}
where for $t\in\{1,2,3\}$ we define
\begin{equation}\label{eq:OmegaTC}
\Omega^{(t)}_{ijkl} = \omega^{(t)}_i\omega^{(t)}_j\omega^{(t)}_k\omega^{(t)}_l,
\end{equation}
with the convention that $\omega_j^{(2)}=\omega_j^{(1)}$ and $\omega_l^{(2)}=\omega_l^{(3)}$.
}

We aim to show that this distribution for certain choices {of the parameters $\omega_i^{(1)}, ...,\omega_l^{(3)}$} is nonlocal. To this end, assume {by contradiction} that there exists a classical strategy simulating the same distribution. Then, by Theorem~\ref{theorem:TC} and Remark~\ref{remark:TC} this strategy is a TC strategy in which each source sends one token at uniform to its connected parties. 

Let us restrict our attention to the \emph{ambiguous case}, namely when $n_A=n_B=n_D=1$ and   $n_C=2$. There are three ways of distributing the tokens in order to get $n_A=n_B=n_C=1$ and  $n_C=2$. These three cases are shown in Figure\,\ref{fig:5-0}c and are indexed by $t\in \{1, 2, 3\}$. We also introduce the label $t=\#$ in Figure\,\ref{fig:5-0}b corresponding to the unique token distribution associated to $n_A=n_B=1, n_C=3$, which is used in the proof of Claim~\ref{claim:5-0} bellow.

Let
\begin{equation*}
q(i, j, k, l, t ) = \Pr\big(i, j, k, l, t \,\big| n_A=n_B=n_D=1, n_C=2 \big),
\end{equation*}
be the probability that $\alpha=i, \beta=j, \gamma=k, \delta=l$ and $t\in \{1, 2, 3\}$ conditioned on $n_A=n_B=n_D=1$ and   $n_C=2$. By the above discussion $q(i, j, k, l, t )$ is a well-defined probability distribution. 

{
In the following, we first prove in Claim~\ref{claim:5-0} that some marginals of the distribution $q(i, j, k, l, t )$ must satisfy some equality constraints due to the network structure and the definition of $t$ as the label for the hidden way the tokens are classically distributed by the sources. 
Then, we prove in Proposition~\ref{propo:TC5-0} that for some well-chosen values of the parameters $\omega_i^{(1)}, ...,\omega_l^{(3)}$, these compatibility conditions are incompatible with the fact that $q(i, j, k, l, t )$ is a probability distribution. 
}

%Let $\Omega^{(t)}_{ijkl}$ be defined as in Figure\,\ref{fig:5-0} where for simplicity we removed sup-scripts $A, B, C$ and $D$, and e.g., denoted $c_i^A$ by $c_i$, and $r_k^C$ by $r_k$. Let
%\begin{align*}
%(\omega^{(1)}_i, \omega^1_j, \omega^1_k, \omega^1_l) = (c_i, c_j, c_k, c_l),\\
%(\omega^{(2)}_i, \omega^2_j, \omega^2_k, \omega^2_l) = (s_i, c_j, s_k, s_l),\\
%(\omega^{(3)}_i, \omega^3_j, \omega^3_k, \omega^3_l) = (r_i, s_j, r_k, s_l),
%\end{align*}
%In the following claim we use the notation 
%\begin{equation}\label{eq:Omega}
%\Omega^{(t)}_{ijkl} = \omega^{(t)}_i\omega^{(t)}_j\omega^{(t)}_k\omega^{(t)}_l, 
%\end{equation}
%with the convention $\omega_j^{(2)}=\omega_j^{(1)}$ and $\omega_l^{(2)}=\omega_l^{(3)}$.

\begin{claim}\label{claim:5-0}
The marginals of $q(i, j, k, l, t)$ satisfy
\begin{enumerate}
\item[{\rm{(i)}}] $q(i, j, k, l)= \frac{1}{3}\big|\sum_t\Omega^{(t)}_{ijkl}\big|^2$.
\item[{\rm{(ii)}}] $q(i, t)=\frac{1}{3}|\omega_i^{(t)}|^2$, $q(j, t)=\frac{1}{3}|\omega_j^{(t)}|^2$, $q(k, t)=\frac{1}{3}|\omega_k^{(t)}|^2$ and $q(l, t)=\frac{1}{3}|\omega_l^{(t)}|^2$.
\end{enumerate}
\end{claim}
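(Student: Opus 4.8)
The plan is to compute both marginals directly from the explicit quantum TC strategy, exploiting the fact — guaranteed by Theorem~\ref{theorem:TC} and Remark~\ref{remark:TC} — that in the hypothetical classical strategy each source sends exactly one token uniformly, so that $t\in\{1,2,3\}$ is a genuine hidden random variable labelling which of the three admissible token-distribution patterns occurred. Since $q$ is defined as the conditional distribution given the event $E=\{n_A=n_B=n_D=1,\ n_C=2\}$, and the quantum and classical strategies produce the same $P$, I can evaluate all these conditional probabilities inside the quantum model, where the amplitudes are known.

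For part~(i): conditioning on $E$, the post-measurement (unnormalized) joint state of the four sources is the superposition $\sum_{t=1}^3 \ket{t}$ of the three token patterns, each entering with equal weight $1/\sqrt{2^5}$ coming from the five $\ket{\psi^+}$ sources; after normalization by $P_\token(E)=3/2^5$ this is the uniform superposition $\tfrac1{\sqrt3}\sum_t\ket{\text{pattern }t}$. The parties then measure the provenance bases $\ket{\chi^A_i},\ket{\chi^B_j},\ket{\chi^C_k},\ket{\chi^D_l}$; pattern $t$ contributes to outcome $(i,j,k,l)$ with amplitude $\omega^{(t)}_i\omega^{(t)}_j\omega^{(t)}_k\omega^{(t)}_l=\Omega^{(t)}_{ijkl}$ (using the stated conventions $\omega^{(2)}_j=\omega^{(1)}_j$, $\omega^{(2)}_l=\omega^{(3)}_l$). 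Summing amplitudes over $t$ and taking the modulus squared gives $q(i,j,k,l)=\tfrac13\big|\sum_t\Omega^{(t)}_{ijkl}\big|^2$. This matches Eq.~\eqref{eq:OmegaTC} divided by $P_\token(E)=3/2^5$, which is the cleaner way to present it: $q(i,j,k,l) = P(E,i,j,k,l)/P_\token(E)$.

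For part~(ii): here I would use that $t$ commutes with the token-count measurements (it is determined by the classical token distribution) while each party's provenance output depends, \emph{conditioned on $t$}, only on that party's local reduced state. Fixing $t$, exactly one of the three patterns is realized, so party $A$'s incoming qubits are in a definite computational-basis token configuration, and measuring in the $\ket{\chi^A_i}$ basis yields outcome $i$ with probability $|\langle\chi^A_i|\text{pattern }t\rangle|^2=|\omega^{(t)}_i|^2$; likewise for $B,C,D$. Since $\Pr(t\mid E)=1/3$ for each $t$ (the three patterns are equiprobable), $q(i,t)=\Pr(t\mid E)\Pr(\alpha=i\mid t,E)=\tfrac13|\omega^{(t)}_i|^2$, and identically for $j,k,l$. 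Note this is consistent with part~(i): marginalizing $q(i,j,k,l,t)$ over $j,k,l$ and using that the $\ket{\chi}$ bases are orthonormal collapses the cross terms, leaving $\tfrac13\sum_{j,k,l}|\Omega^{(t)}_{ijkl}|^2=\tfrac13|\omega^{(t)}_i|^2$.

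The main obstacle is purely bookkeeping: one must verify the conventions $\omega^{(2)}_j=\omega^{(1)}_j$ and $\omega^{(2)}_l=\omega^{(3)}_l$ are exactly what make patterns $t=1$ and $t=2$ (resp.\ $t=2,3$) indistinguishable to $B$ (resp.\ $D$), i.e.\ that $B$ and $D$ each receive the \emph{same} token from the \emph{same} source in two of the three patterns, so that their provenance outputs genuinely only carry a binary label even though $t$ ranges over three values. Once the correspondence between the combinatorial patterns of Figure~\ref{fig:5-0}c and the basis labels is pinned down, both identities follow from orthonormality of the measurement bases and equiprobability of the three patterns; there is no analytic difficulty beyond that.
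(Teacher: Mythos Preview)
Your argument for part~(i) is correct and matches the paper's: compute the quantum probability $P(E,i,j,k,l)=\tfrac{1}{2^5}\big|\sum_t\Omega^{(t)}_{ijkl}\big|^2$ and divide by $P_\token(E)=3/2^5$.

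Your argument for part~(ii), however, has a genuine gap. The variable $t$ is a hidden variable of the \emph{classical} simulating strategy; it does not exist in the quantum strategy, where the three patterns sit in superposition. When you write ``fixing $t$, party $A$'s incoming qubits are in a definite computational-basis configuration, so measuring gives $|\omega_i^{(t)}|^2$,'' you are computing a quantum conditional probability for a conditioning event that is not defined quantumly. In the classical strategy, party $A$'s second output $\alpha$ is some function of the full messages $\{s_i:i\to A\}$, not merely of the token counts; so $\Pr_{\text{classical}}(\alpha=i\mid t)$ is not a priori $|\omega_i^{(t)}|^2$, and there is nothing in your argument that forces it to be. Your ``consistency check'' does not help either: marginalizing $q(i,j,k,l)$ over $j,k,l$ yields $q(i)=\sum_t q(i,t)$, not the individual $q(i,t)$.

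The paper's route is different and is the missing idea. It relates the \emph{hidden} event $\{t=1\}$ to an \emph{observable} event. Since $A$ is not connected to source $\kappa$ (the source between $C$ and $D$), and the token patterns $t=1$ and $t=\#$ differ only in which way $\kappa$ sends its token, independence of the sources in the classical model gives $\Pr(\alpha=i,\,t=1)=\Pr(\alpha=i,\,t=\#)$. But $t=\#$ is equivalent to the observable token-count event $\{n_A=n_B=1,\,n_C=3\}$, whose joint probability with $\alpha=i$ is fixed by $P$ and hence computable in the quantum strategy as $\tfrac{1}{2^5}|\omega_i^{(1)}|^2$. Dividing by $P_\token(E)$ yields the claim. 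The analogous trick (flipping the one source not connected to the party under consideration) handles the other cases.
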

\begin{proof}
(i) This is a consequence of the definition of $q$ and the structure of the quantum strategy, giving rise to $P(n_A=n_B=n_D=1, n_C=2, \alpha=i, \beta=j, \gamma=k, \delta=l)=\frac{1}{2^5}|\sum_t \Omega_{ijkl}^{(t)}|^2$ 
{
(see Eq.~\eqref{eq:OmegaTC}).
}
\\
(ii) We only compute $q(i,t=1)$. The other cases are derived similarly. First, remark that $\Pr(\alpha=i, n_A=n_B=1,n_C=3)=\frac{1}{2^5}|\omega_i^{(1)}|^2$. Moreover, as $A$ is not connected to source $\kappa$, we have 
\begin{align*}
\Pr(\alpha=i,t=1)&=\Pr(\alpha=i,t=\#)\\
&=\Pr(\alpha=i,n_A=n_B=1,n_C=3),
\end{align*}
where we used the fact that $t=\#$ if and only if $n_A=n_B=1, n_C=3$. 
Therefore, 
\begin{align*}
q(i,t=1) &= \Pr(i, t=1)/P(n_A=n_B=n_D=1,n_C=2)\\
&= \frac{2^5}{3} \Pr(i, t=1) \\
&= \frac{2^5}{3} \Pr(\alpha=i,n_A=n_B=1,n_C=3) \\
&= \frac{1}{3}|\omega_i^{(1)}|^2.
\end{align*}
\end{proof}

Now the following proposition shows that the TC distribution cannot be simulated classically.

\begin{proposition}\label{propo:TC5-0}
For some choices of coefficients $\omega_i^{(1)}, \dots,  \omega_l^{(3)}$, no distribution $q(i, j, k, l, t)$ satisfies Claim~\ref{claim:5-0}.
\end{proposition}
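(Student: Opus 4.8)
The plan is to derive a numerical contradiction from the constraints in Claim~\ref{claim:5-0} by a clever choice of the measurement parameters. First I would use the normalization of the probability distribution $q(i,j,k,l,t)$ together with part~(i) of the Claim to write
\begin{equation*}
1 = \sum_{i,j,k,l} q(i,j,k,l) = \frac{1}{3}\sum_{i,j,k,l}\Big|\sum_t \Omega^{(t)}_{ijkl}\Big|^2,
\end{equation*}
and expand the square as a sum of diagonal terms $|\Omega^{(t)}_{ijkl}|^2$ and cross terms $\Omega^{(t)}_{ijkl}\overline{\Omega^{(t')}_{ijkl}}$ for $t\neq t'$. The diagonal terms, after summing over all indices, factor into products of $\sum_i|\omega_i^{(t)}|^2$ etc., and using part~(ii) of the Claim (which forces each such sum to equal $1$ via $\sum_i q(i,t)=q(t)=1/3$, hence $\sum_i |\omega_i^{(t)}|^2=1$) each diagonal block contributes exactly $1$, so the three diagonal terms contribute $3$. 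Dividing by $3$, the identity above reduces to a statement that the total contribution of the off-diagonal (interference) terms must vanish:
\begin{equation*}
\mathrm{Re}\sum_{t\neq t'}\sum_{i,j,k,l}\Omega^{(t)}_{ijkl}\overline{\Omega^{(t')}_{ijkl}} = 0.
\end{equation*}

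Next I would observe that each term $\sum_{i,j,k,l}\Omega^{(t)}_{ijkl}\overline{\Omega^{(t')}_{ijkl}}$ factorizes as $\langle u^{(t)},u^{(t')}\rangle\cdot\langle v^{(t)},v^{(t')}\rangle\cdot\langle w^{(t)},w^{(t')}\rangle\cdot\langle x^{(t)},x^{(t')}\rangle$ where the four inner products are over the index sets of $i,j,k,l$ with entries the $\omega^{(t)}$ coefficients; call these $c^A_{tt'}, c^B_{tt'}, c^C_{tt'}, c^D_{tt'}$. Here the conventions $\omega_j^{(2)}=\omega_j^{(1)}$ and $\omega_l^{(2)}=\omega_l^{(3)}$ matter: they force $c^B_{12}=1$ and $c^D_{23}=1$ (each vector has unit norm), while $c^B_{13}, c^B_{23}$ and $c^D_{12}, c^D_{13}$ are genuine inner products between distinct unit vectors in $\mathbb{C}^2$. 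So the vanishing-interference condition becomes an explicit polynomial identity in the parameters $c^A_{tt'}$, etc. The key point is that the \emph{quantum} distribution $P$ automatically satisfies the analogous identity (it came from an honest quantum strategy and is normalized), but the classical constraint additionally forces the individual conditional marginals $q(i,j,k,l)$ and $q(i,t)$ to match the quantum predictions \emph{simultaneously}, and one shows this overdetermined system has no solution.

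Concretely, the strategy I would pursue is: pick the $A$-side and $C$-side bases so that $\{|\chi_i^A\rangle\}$ is an orthonormal basis of $\mathbb{C}^3$ and likewise $\{|\chi_k^C\rangle\}$, which makes $c^A_{tt'}=\delta_{tt'}$ and $c^C_{tt'}=\delta_{tt'}$. Then \emph{every} off-diagonal term in the interference sum vanishes identically, so the normalization constraint is satisfied for free and gives no contradiction — meaning this naive choice is too weak. Hence the real design must keep the $A,C$ bases \emph{non-orthogonal} in a controlled way, tune the $B,D$ two-dimensional vectors, and then exhibit a single scalar inequality among the $c^\bullet_{tt'}$ that is violated. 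I expect the main obstacle to be precisely this: choosing the six free angles (three on each of $A$ and $C$, two each on $B$ and $D$, minus the fixed-norm constraints) so that the resulting $3\times 3$ Gram-type system forces $\sum_{i,j,k,l}|\sum_t\Omega^{(t)}_{ijkl}|^2 \neq 3$ while the marginal conditions $\sum_i|\omega_i^{(t)}|^2=1$ still hold — i.e. showing the interference sum is strictly positive (or negative) and cannot be cancelled. I would handle this by reducing to a low-dimensional optimization: fix simple real values (e.g. $\omega^{(t)}_i$ taken from a real orthogonal-ish $3\times3$ matrix with a small perturbation) and verify by direct computation that the off-diagonal contribution is nonzero, then argue by continuity that an open set of parameters works, completing the contradiction and hence the proof that $P$ is nonlocal.
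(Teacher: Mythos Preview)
Your approach has a genuine gap. The scalar identity you derive---that the interference sum $\sum_{t\neq t'}\sum_{i,j,k,l}\Omega^{(t)}_{ijkl}\overline{\Omega^{(t')}_{ijkl}}$ must vanish---is \emph{automatically} satisfied for every admissible choice of parameters, so it can never yield a contradiction. The reason is that the vectors $\{\ket{\chi_i^A}\}_{i=1,2,3}$ are required to form an orthonormal basis of the $n_A=1$ subspace (they are part of a projective measurement), which forces the $3\times 3$ matrix $(\omega_i^{(t)})_{i,t}$ to be unitary. Unitarity of the rows implies unitarity of the columns, hence $c^A_{tt'}=\sum_i \omega_i^{(t)}\overline{\omega_i^{(t')}}=\delta_{tt'}$ always, and the same holds for $C$. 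Your proposed fix of ``keeping the $A,C$ bases non-orthogonal'' is therefore not allowed by the setup; orthonormality is not a choice but a constraint. Consequently the off-diagonal contribution you hope to make nonzero is identically zero for every valid set of parameters, and the normalization route is vacuous.

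The deeper issue is that your argument never invokes the nonnegativity $q(i,j,k,l,t)\geq 0$, which is precisely where the obstruction lives. Constraints (i) and (ii) of Claim~\ref{claim:5-0} are linear equalities in the unknowns $q(i,j,k,l,t)$; together with $q\geq 0$ they form a linear program, and the content of the proposition is that this LP is \emph{infeasible} for suitable parameters. Any argument that only manipulates sums of the marginals (as yours does) cannot see this, because the equality constraints alone are always consistent---it is the positivity cone that is violated. The paper's proof proceeds exactly along these lines: it observes that the existence of $q$ is an LP feasibility question, and then exhibits explicit real orthogonal matrices (e.g.\ $R^{(3)}_x(\pi/8)$ for $A,C$ and $R^{(2)}(\pi/8)$ for $B,D$) for which the LP is verified to be infeasible. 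To repair your attempt you would need to produce a Farkas-type dual certificate: a linear functional that is nonnegative on all $q\geq 0$ satisfying the marginal constraints yet evaluates to something strictly negative (or positive) on the target data.
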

\begin{proof}
Remark that the problem of finding a distribution $q(i, j, k, l, t)$ satisfying the marginal constraints given in Claim~\ref{claim:5-0}, is a Linear Program (LP). 
Solving this LP for various choices of coefficients $\omega_i^{(1)}, \dots,  \omega_l^{(3)}$, we find cases for which the LP has no solution. One can, e.g., consider identical bases for $A$ and $C$, and identical bases for $B$ and $D$, with respective bases given by the coefficients of the three-dimensional rotation matrix $R^{(3)}_x(\theta)$ of angle $\theta=\pi/8$ around the $x$-axis, and the two-dimensional rotation matrix $R^{(2)}(\theta)$ of angle $\theta$:
\begin{equation*}
R^{(3)}_x(\theta) = 
\begin{pmatrix}
1 & 0 & 0\\ 0 & \cos{\theta} & -\sin\theta \\ 0 & \sin\theta & \cos{\theta} 
\end{pmatrix}, 
R^{(2)}(\theta) = 
\begin{pmatrix}
\cos{\theta} & -\sin\theta \\ \sin\theta & \cos{\theta} 
\end{pmatrix}.
\end{equation*}

We also remark that this proposition is valid even when constraints over the coefficients are slightly different. Indeed, in the example of Subsection~\ref{subsec:1-2CM}, we may consider equal bases for $A, B, C$ and $D$, with the basis given by the coefficients of the rotation matrix $R^{(3)}_x(\theta)$.
\end{proof}

%******************************
\subsection{Extra rigidity of some TC strategies}\label{subsec:ExtraRigidityTC}
Before introducing CM strategies in next section, let us discuss additional rigidity results for TC strategies.
Suppose that besides the token-count $n_j$, a party $A_j$ sometimes outputs $\alpha_j$ that exactly determines the provenance of the receives $n_j$ tokens. We claim that in NDCS networks, for a classical strategy that simulates such a TC distribution, these provenances should be output faithfully. 

More precisely, let us first introduce the notion of the rigidity of a refined measurement in a TC strategy:
\begin{definition}[Rigidity of refined measurements]
Fix a quantum TC strategy on an NDCS network $\cN$. For a party $A_j$ let $\{\eta^j_i: i\to j\}$ be an assignment of the number of tokens sent by the sources $S_i$ connected to $A_j$. We say that the computational basis state $\ket{\eta^j_i:\, i\to j}$ is \emph{rigid for $A_j$} if:
\begin{itemize}
\item[(i)] $\ket{\eta^j_i:\, i\to j}$ belongs to the measurement basis of party $A_j$
\item[(ii)] in any classical strategy that simulates the same distribution (which by Theorem~\ref{theorem:TC} is necessarily TC) $A_j$ outputs $\{\eta^j_i: i\to j\}$ if and only if the source $S_i$ with $i\to j$ sends $\eta^j_i$ tokens to $A_j$.
\end{itemize}
\end{definition}
For instance in Figure\,\ref{fig:TC_Source_Party}c, $A_j$ performs two measurements: she first applies a projective measurement to determine $n_j$, and then makes another measurement to output $\alpha_j$, the provenance of tokens. 
The measurement projection associated with $n_j=1$ equals the projection on the span of $\{\ket{100}, \ket{010}, \ket{001}\}$.  
Remark that in this example, one of $A_j$'s measurement operators corresponds to the last basis vector $\ket{001}$, i.e., to the case where the received token comes from the last source. We say that it is rigid for $A_j$ if in any classical strategy that simulates the same distribution, $A_j$ outputs $\ket{001}$ if and only if she receives exactly one token ($n_j=1$) and from its last source. In other words, $A_j$ outputs $\ket{001}$, claiming that she receives only one token and from its last source only faithfully.

The following corollary, proven in Appendix~\ref{app:sec:TC}, affirms that computational basis vectors are always rigid in NDCS networks: 

%\begin{corollary}[Refined measurements in  TC]\label{corollary:TCExtraMeasures}
%Consider a quantum TC strategy in an NDCS network $\cN$. Consider a party $A_j$ and for every $S_i\to A_j$ fix a token number $\eta_i^j$. Assume that $\ket{\eta^j_i:\, i\to j}$ belongs to the measurement basis of $A_j$. Then $\ket{\eta^j_i:\, i\to j}$ is rigid for $A_j$.
%\end{corollary}

\begin{restatable}[Refined measurements in  TC]{corollary}{TCExtraMeasures}
\label{corollary:TCExtraMeasures}
Consider a quantum TC strategy in an NDCS network $\cN$. Consider a party $A_j$ and for every $S_i\to A_j$ fix a token number $\eta_i^j$. Assume that $\ket{\eta^j_i:\, i\to j}$ belongs to the measurement basis of $A_j$. Then, $\ket{\eta^j_i:\, i\to j}$ is rigid for $A_j$.
\end{restatable}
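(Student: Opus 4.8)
\textbf{Proof plan for Corollary~\ref{corollary:TCExtraMeasures}.}

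The plan is to leverage Theorem~\ref{theorem:TC} (and its more formal version with the token functions $T_i^j$) to reduce the statement to a purely combinatorial consistency argument about token distributions in an NDCS network. By Theorem~\ref{theorem:TC}, any classical strategy simulating the given quantum TC distribution is itself a TC strategy, so for each $S_i\to A_j$ there is a token function $T_i^j:\cS_i\to\mathbb Z_{\ge0}$ with $\sum_{j:i\to j}T_i^j(s_i)=\eta_i$ and $n_j=\sum_{i:i\to j}T_i^j(s_i)$, and moreover the induced distribution over token-distribution patterns coincides with that of the quantum strategy. Since $\ket{\eta_i^j:\,i\to j}$ is a basis vector of $A_j$, the event ``$A_j$ outputs this vector'' is a well-defined outcome with positive probability (it occurs with the quantum probability, which is positive because it equals $\prod_{i:i\to j}|\langle\eta_i^j|\psi_i\rangle_{\text{marginal}}|^2$-type quantities — more carefully, it is the probability assigned by the decohered strategy, which is positive since this basis vector arises from the computational-basis support of the $\ket{\psi_i}$). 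The first step is to make this precise: argue that in the quantum strategy $A_j$ outputting $\ket{\eta_i^j:\,i\to j}$ happens exactly when each $S_i$ (with $i\to j$) actually sends $\eta_i^j$ tokens to $A_j$, since that basis vector lies in the $n_j=\sum_i\eta_i^j$ eigenspace and after the token-count projection the refined measurement reads off the exact provenance.

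The core step is the ``only if'' direction in the classical case: I must show that whenever the classical TC strategy makes $A_j$ output $\ket{\eta_i^j:\,i\to j}$, it is necessarily because each adjacent $S_i$ sent exactly $\eta_i^j$ tokens to $A_j$, i.e. $T_i^j(s_i)=\eta_i^j$ for all $i\to j$. The ``if'' direction (if the sources send those tokens then $A_j$ must output that vector) should follow because the marginal token counts $n_j$ are fixed by the quantum distribution and $\sum_i\eta_i^j$ is the unique count consistent with that pattern, combined with matching the probabilities. For the ``only if'' direction the idea is: suppose in the classical strategy there is a positive-probability event where $A_j$ outputs $\ket{\eta_i^j:\,i\to j}$ but some source $S_{i_0}\to A_j$ sends $\tilde\eta\neq\eta_{i_0}^j$ tokens to $A_j$; necessarily $\tilde\eta$ is such that $\sum_{i\to j}$ (of the actual tokens) $=\sum_i\eta_i^j=n_j$ — so if one source under-sends, another over-sends. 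Then I would use the NDCS property exactly as in the proof sketch of Theorem~\ref{theorem:TC}: fix all messages received by $A_j$ except those from the two ``offending'' sources, and fix all messages to the \emph{other} parties adjacent to those sources, without conflict (NDCS guarantees no other party sees both offending sources, so local changes at $A_j$ don't propagate contradictions elsewhere). This lets me construct an alternative global assignment that keeps every $n_{j'}$ for $j'\neq j$ unchanged while changing which provenance-pattern $A_j$ sees — and then compare the probability that $A_j$ outputs $\ket{\eta_i^j:\,i\to j}$ in the two cases. The quantum distribution pins down $P(\text{this provenance output})$, and matching it while allowing a ``wrong'' token pattern to also trigger this output would force the ``correct'' pattern to be output with strictly smaller probability than the quantum value — a contradiction, provided the total probability of reaching the relevant token-count event is exactly accounted for.

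I expect the main obstacle to be the bookkeeping that converts ``$A_j$ sometimes outputs the basis vector for a wrong token pattern'' into a genuine numerical contradiction rather than just a qualitative mismatch. One clean route: condition everything on the event $\{n_{j'}:\, j'\}$ being equal to the particular count pattern in which $n_j=\sum_i\eta_i^j$, and restrict to the sources adjacent to $A_j$; by the rigidity of Theorem~\ref{theorem:TC} the joint law of their token-distribution patterns (conditioned on this count event) is the quantum one, which is supported only on patterns where $S_i$ sends $\eta_i^j$ or one of the other values occurring in the quantum superposition — but the \emph{specific} basis vector $\ket{\eta_i^j:\,i\to j}$ being a measurement outcome of $A_j$ means, quantum-mechanically, that the post-count state, if it contained any other provenance pattern with the same count, would still be distinguished by $A_j$'s refined measurement; hence matching $A_j$'s output statistics forces the classical provenance to be $(\eta_i^j)_{i\to j}$ whenever this output occurs. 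Making that last implication airtight — essentially showing that a superposition/mixture over distinct provenance patterns cannot be classically ``relabelled'' so that one pattern borrows probability from another without violating either the NDCS-enforced independence or the fixed marginals — is the crux, and it is exactly where the NDCS hypothesis does its work, paralleling how it is used in Theorem~\ref{theorem:TC}.
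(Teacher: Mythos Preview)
Your overall structure is right: invoke Theorem~\ref{theorem:TC} to obtain token functions $T_i^j$, then argue the two inclusions. The ``if'' direction is indeed handled by probability matching once the ``only if'' direction is established: $\Pr\big(A_j\text{ outputs }\{\eta_i^j\}\big)$ equals $\Pr\big(T_i^j(s_i)=\eta_i^j,\,\forall i\to j\big)$ by Theorem~\ref{theorem:TC}(iii) together with the quantum probability, so one inclusion forces equality up to null sets.

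The gap is in your ``only if'' argument. Your plan to identify two offending sources, swap their contributions while holding the other $n_{j'}$ fixed, and then derive a probability mismatch at $A_j$ does not close. The output of $A_j$ is a function of the full messages $\{s_i:\, i\to j\}$, not only of the token counts $T_i^j(s_i)$, so nothing you have written prevents the classical strategy from outputting $\{\eta_i^j\}$ on some arbitrary subset of the wrong-provenance event while still matching marginal statistics. Your fallback of conditioning on all $n_{j'}$ and appealing to ``$A_j$'s refined measurement distinguishes provenance patterns'' begs the question: classically there is no refined measurement, just a deterministic function of messages, and you have not exhibited any observable correlation that this misreporting would violate.

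The paper's argument pushes the contradiction \emph{away from} $A_j$ onto the other parties connected to a single offending source. Suppose $A_j$ outputs $\{\eta_i^j\}$ while $T_{i_0}^j(s_{i_0}) < \eta_{i_0}^j$ for some $i_0\to j$ (the opposite inequality is symmetric). Let $A_{j_1},\dots,A_{j_k}$ be the other parties connected to $S_{i_0}$. By NDCS every source other than $S_{i_0}$ feeding any $A_{j_\ell}$ is \emph{not} connected to $A_j$, so one may set those sources' messages to maximize the total $m$ of tokens they deliver to $A_{j_1},\dots,A_{j_k}$ without altering $A_j$'s output. This exhibits a positive-probability joint output in which $A_j$ declares $\{\eta_i^j\}$ while $\sum_\ell n_{j_\ell} = m + \eta_{i_0} - T_{i_0}^j(s_{i_0}) > m + \eta_{i_0} - \eta_{i_0}^j$. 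But in the quantum strategy the right-hand side is the maximum value $\sum_\ell n_{j_\ell}$ can ever take jointly with that output of $A_j$, so the simulated distribution assigns positive probability to an impossible event. The idea you are missing is precisely this extremization at the neighbouring parties, which converts a hypothetical misreport at $A_j$ into a hard combinatorial violation visible in the other parties' token counts.
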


{
In the case where some refined measurements performed in the computational basis are shown to be rigid, we change the definition of $P_\token$ to include all these extra computational basis measurements.
More precisely, $P_\token$ is the distribution obtained from $P$ once one only remembers the information about the token counts and the measurements which satisfy Corollary~\ref{corollary:TCExtraMeasures}, and forgets about any extra information.
}

\section{Color-Matching Strategies}\label{sec:CM}

\subsection{Definitions and strategy for network nonlocality from Color-Matching}

\subsubsection{Definitions}

\begin{figure}
\includegraphics[width=0.47\textwidth]{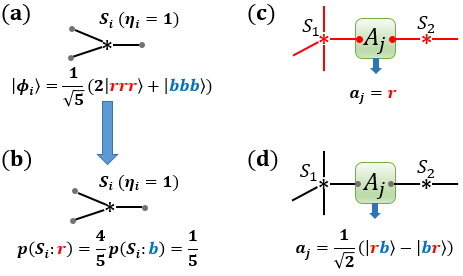}
\caption{
(a) Quantum CM for a source: the source $S_i$ distributes colors red (r) and blue (b) in the superposed way $\ket{\phi_i}=\frac{1}{\sqrt{5}}(2\ket{rrr}+\ket{bbb})$. Here, e.g., $\ket{rrr}$ indicates that the source takes color red.\\ 
(b) Corresponding decohered source, where the source colors are measured before being sent. Here $\ket{\phi_i}$ is decohered into a source distributing the red color with probability $4/5$, and the blue color with probability $1/5$.\\
(c) Color matching: $A_j$ first measures if all sources are of the same color, in which case she outputs the color value.\\ 
(d) When the colors of the sources do not match, the party $A_j$ may measure the color of sources in a superposed way, obtaining output $a_j$.}\label{fig:CMSourceParty}
\end{figure}  

In a \emph{Color-Matching strategy}, each source takes a color $c\in \{1, \dots,  C\}$ under some fixed probability distribution $p_\match(c)$. When all the colors of the sources received by a party match, she outputs the value of this matched color. Otherwise, she measures other degrees of freedom.

\begin{definition}[Color-Matching strategies]\label{definition:CM}
A strategy in a network $\cN$ is called Color-Matching (CM) if there exist a set of colors $\{1, \dots, C\}$ and a fixed probability distribution $p_\match(c)$ such that, up to a relabelling of the outputs and the information sent by the sources:
\begin{enumerate}
\item[(i)] Each source $S_i$ independently distributes a same color $c$ with probability $p_\match(c)$ to the connected parties along with possible additional information. 
\item[(ii)] Every party $A_j$ checks if all the received colors match, in which case she outputs this color. Otherwise, she possibly produces additional information.
\end{enumerate} 
The resulting distribution of a \emph{Color-Matching strategy} is called a \emph{Color-Matching distribution}.
\end{definition}
Note that contrary to TC strategies, here we assume that all the sources colors are identically distributed (yet independently), a necessary assumption for the rigidity theorem stated later.
 
Quantum strategies may also produce CM distributions. 
In this case, quantum sources distribute a superposition of different colors and $p_\match(c)$ is the distribution of the corresponding decohered  source (see Figure\,\ref{fig:CMSourceParty}a,b). 
Moreover, the measurement operators of each party include $C$ different projectors associated to matching colors $c\in \{1, \dots,  C\}$ (see Figure\,\ref{fig:CMSourceParty}c,d). The remaining measurement operators measure other degrees of freedom. 

Let us first remark that as for quantum TC strategies, a CM quantum strategy where the parties \emph{only} output color matches is classically simulable.
\begin{remark}\label{remark:CM}
Let $P=\{P(a_1,\dots, a_J)\}$ be a CM distribution associated with a quantum CM strategy in a network $\cN$. 
Consider the coarse grained distribution $P_\match=\{P_\mathrm{color}(c_1,\dots,  c_J)\}$ in which $c_j=a_j$ if $a_j$ is a matching color in $\{1,\dots,  C\}$ and $c_j=\chi$ otherwise, where $\chi$ stands for all other outputs corresponding to no color match.
Then, $P_\mathrm{color}$ can be classically simulated by a classical CM strategy which we call the \emph{decohered classical strategy}.
To this end, observe that the measurement operators associated with color matches are all diagonal in the computational basis of colors. Hence the source's colors can be measured before the state is sent to the parties, see Figure\,\ref{fig:CMSourceParty}b. 
This would not change the outcome distribution of color matches $c_1,\cdots, c_J$. This computational basis measurement demolishes the coherence of sources and produces the associated decohered classical CM strategy.
\end{remark}

As for TC, CM quantum strategies nevertheless differ from classical ones as after the quantum measurements, the sources may be in 
{
a global superposition of several possible colors distributed over the network, which could then be detected by measuring the sources colors in a superposed way. 
}
%Our goal is to show that certain quantum strategies produce CM distributions that cannot be produced classically.
%To this end, we first show that to \emph{classically} simulate CM distributions in certain networks, we have no choice but using CM strategies. This is the rigidity property of CM distributions. 

{
As for the TC case, we will prove in Theorem~\ref{theorem:CM} that for some networks, CM strategies are the only strategies that can produce CM distributions.
%there are networks in which CM distributions may be produced without CM strategies. See Figure\,\ref{fig:CMNotCompatible} for such an example. 
For this, we will need to restrict ourselves to the subclass of \emph{Exclusive Common-Source} networks admitting a \emph{Perfect Fractional Independent Set}.
}

\begin{definition}[Exclusive Common-Source network \& Perfect Fractional Independent Set]\label{definition:ECSnetworksPIS}
We say that the network $\cN$
\begin{enumerate}
\item[(i)] is an \emph{Exclusive Common-Source (ECS)} network if any source is the exclusive common-source of two parties connected to it. That is, for any source $S_i$ there exist $A_j\neq A_{j'}$ such that $S_i$ is the only source with $S_i\to A_j, A_{j'}$. 
\item[(ii)] admits a \emph{Perfect Fractional Independent Set (PFIS)} if there exists a weight $0<x_j< 1$ associated to each party $A_j$ such that for any source $S_i$, the sum of the weights $x_j$ of parties connected to $S_i$ equals 1:  
\begin{equation}
\sum_{j:i\rightarrow j} x_j = 1, \qquad \quad \forall S_i.
\end{equation} 
In other words, $\cN$ admits a \emph{Perfect Fractional Independent Set (PFIS)} if its bi-adjacency matrix $B$ admits a solution to the equation $B X=1$, with $0<X<1$.
\end{enumerate} 
\end{definition}

{
Remark that the ECS property is weaker than the NDCS assumption considered for TC strategies. 
In NDCS networks, no pair of parties can have more than one common source. This automatically yields the ECS property. 
Remark also that the PFIS assumption is readily verified for \emph{regular} networks;
a network all whose sources are $k$-partite (connected to the same number $k$ of parties), admits a PFIS  since we can simply take $x_j=1/k$ for any $A_j$.
}

\subsubsection{Network Nonlocality from Token-Counting}
{
In the next section, we will provide a method to obtain network nonlocality in ECS networks admitting a PFIS via CM distributions. 
We will prove that certain quantum CM strategies produce CM distributions that cannot be simulated classically. 
This method is similar to the TC method. }

{
Consider a given ECS network $\cN$ admitting a PFIS.
Our method first introduces a quantum CM strategy in $\cN$ achieving a distribution $P=\{P(a_1,\dots, a_J)\}$, with a list of colors and concrete superposed ways to distribute them for all sources (see Figure~\ref{fig:CMSourceParty}a) and measure them by the parties (see Figure~\ref{fig:CMSourceParty}d). The measurement operators of the parties, in particular, include projectors indicating color matches. (see Figure~\ref{fig:CMSourceParty}c).
We also consider the color-match reduced distribution $P_{\match}=\{P(c_1, \dots,  c_J)\}$, which is be obtained as a coarse-graining of $P$ by replacing all no-color-match outputs by the ambiguous output $\chi$ (see Remark~\ref{remark:CM} and Figure~\ref{fig:CMSourceParty}b).
}

{
Our method shows that for appropriate choices of the parameters of the strategy, the distribution $P$ cannot be simulated with any classical strategy.
Our proof works by contradiction.
We first assume the existence of an hypothetical classical strategy to simulate $P$.}
{
As $P_{\match}$ is a coarse-graining of $P$, this classical strategy can be used to simulate $P_{\match}$ as well.
We prove in Theorem~\ref{theorem:CM} that there is essentially a unique way to simulate $P_{\match}$, which is the decohered classical strategy described in Remark~\ref{remark:CM}.
Hence, we deduce that the hypothetical classical strategy must assign colors to sources which are used by the parties to determine their outputs.
}

{
Theorem~\ref{theorem:CM}, stating the rigidity property of CM strategies, extremely restricts the hypothetical classical strategy that simulates $P$. We remark that when no party has declared a color match, we can introduce a hidden variable $t$ corresponding to the precise way the various sources may have classically taken their colors. In the last step of our method, we show that for appropriate choices of the parameters defining the CM strategy, the existence of this additional hidden parameter $t$ is incompatible with other properties of $P$, so $P$ is nonlocal.}

{
We will illustrate this method in details  for the network of Figure~\ref{fig:1-2CM}.
%Note that in our illustration of this method in the network of Figure~\ref{fig:1-2CM},  we will use a more restricted class of color matching strategies allowed by the extra rigidity results presented in Section~\ref{sec:ExtraRigidityCM} in which, when not measuring a color match, the parties measure well-chosen additional projectors of the computational basis. 
}

\subsection{Rigidity of Color-Matching strategies}

We can now state our second main result.

\begin{theorem}[Color-Matching]\label{theorem:CM}
Let $\cN$ be an Exclusive Common-Source network admitting a Perfect Fractional Independent Set.
Then, any classical strategy that simulates a Color-Matching distribution in $\cN$ is necessarily a Color-Matching strategy. Moreover, in any such Color-Matching strategy sources take colors under a fixed unique probability distribution $p_\match(c)$.
\end{theorem}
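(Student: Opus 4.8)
The plan is to mirror the three-ingredient argument sketched for Theorem~\ref{theorem:TC}, but replacing "token conservation" with the PFIS weighting and Finner's inequality. Fix a classical strategy simulating a CM distribution $P_\match$ on an ECS network $\cN$ admitting a PFIS with weights $(x_j)$. I would first extract, for each source $S_i$ with value $s_i\in\cS_i$ and each party $A_j$ with $i\to j$, a \emph{color function} $c_i^j(s_i)\in\{1,\dots,C,\chi_i\}$ recording the color $S_i$ "effectively broadcasts toward $A_j$"; the goal is to show $c_i^j(s_i)$ is independent of $j$ (the source sends one well-defined color to everyone) and that party $A_j$ outputs a matching color $c$ exactly when all its incident sources broadcast $c$. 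The ECS hypothesis is what makes the color functions well-defined: since every source $S_i$ is the \emph{exclusive} common source of some pair $A_j,A_{j'}$, one can, à la the third ingredient in the TC proof, freely vary the message of $S_i$ while holding all other messages into $A_j$ (and $A_{j'}$) fixed, without creating inconsistencies elsewhere; this isolates the dependence of $A_j$'s output on $s_i$ alone.

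The analytic heart is establishing that the decohered CM strategy is the \emph{only} way to reproduce $P_\match$, i.e. that color matches are produced "rigidly". Here I would invoke Finner's inequality~\cite{RenouFinner2019,Finner1992} with the PFIS exponents $(x_j)$: for the event $E_c$ that every party declares the matching color $c$, Finner's inequality gives $\Pr[E_c]\le\prod_j \Pr[A_j \text{ outputs } c]^{x_j}$ (using that $\sum_{j:i\to j}x_j=1$ for every source, so the exponents are "Finner-admissible" for the hypergraph whose hyperedges are the sources). In the quantum CM strategy these probabilities are computed from $p_\match$, and one checks the inequality is saturated — this forces, by the equality case of Finner's inequality (independence/product structure of the conditional laws), that in the classical model the sources' colors are i.i.d.\ with the \emph{same} law $p_\match(c)$, and that a party declares color $c$ iff all incident sources are colored $c$. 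Running this for every $c$ pins down $p_\match$ uniquely and shows no party can declare a match without a genuine underlying monochromatic configuration. The PFIS hypothesis enters precisely to make the Finner exponents available; without it one cannot even state the inequality in the sharp form needed for the equality analysis.

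From there the argument is bookkeeping: combining the per-source well-definedness of $c_i^j$ (from ECS) with the saturation/equality conclusions (from Finner + PFIS) shows the strategy is, up to relabelling, a CM strategy, with sources drawing colors independently from the unique $p_\match$. I expect the main obstacle to be the equality case of Finner's inequality: one must argue that saturation of $\Pr[E_c]\le\prod_j\Pr[\cdot]^{x_j}$ over \emph{all} colors $c$ simultaneously, together with the fact that the $x_j$ lie strictly in $(0,1)$, is enough to conclude the full product/independence structure of the classical hidden variables (not merely a marginal statement color by color) — this is the step where the ECS "exclusive common source" structure has to be used to glue the color-wise conclusions into a single consistent assignment $s_i\mapsto c_i(s_i)$. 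The remaining subtlety is handling the ambiguous symbol $\chi$ coherently: I would absorb all non-matching classical configurations into $\chi$ and check that the rigidity statement only constrains the matching part, exactly as Remark~\ref{remark:CM} anticipates. A fully formal version, with the precise "color function" formalism paralleling the token-function formalism of Appendix~\ref{app:sec:TC}, would be deferred to the appendix.
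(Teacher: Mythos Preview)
Your big picture is right --- Finner's inequality with the PFIS weights, plus ECS for consistency --- and you correctly flag the ``gluing across colors'' step as the crux. But two points deserve correction.

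First, the equality case of Finner's inequality gives less than you claim. For a fixed color $c$, saturation yields indicator functions $\phi_i^{(c)}(s_i)$ with $g_j^{(c)}=\prod_{i:i\to j}\phi_i^{(c)}$; it does \emph{not} say the sources draw colors i.i.d.\ with law $p_\match$, nor that the $\phi_i^{(c)}$ for different $c$ partition $\cS_i$. The paper closes this in two further steps you do not mention. ECS is used to show $\sum_c\phi_i^{(c)}\le 1$: if some $s_i^*$ had $\phi_i^{(c_0)}(s_i^*)=\phi_i^{(c_1)}(s_i^*)=1$ for $c_0\neq c_1$, then taking the pair $A_{j_0},A_{j_1}$ for which $S_i$ is the exclusive common source and setting the \emph{other} sources feeding them to colors $c_0$ and $c_1$ respectively (no conflict, since no other source is shared) forces $A_{j_0}$ and $A_{j_1}$ to announce distinct color matches --- impossible in a CM distribution. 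Then a separate H\"older-inequality argument (applied to $q_i(c):=\E[\phi_i^{(c)}]$, using $\prod_i q_i(c)=p_\match(c)^I$ from the product structure together with $\sum_c q_i(c)\le 1$) forces $q_i(c)=p_\match(c)$ for every $i$, which both pins down $p_\match$ uniquely and upgrades $\sum_c\phi_i^{(c)}\le 1$ to equality. You never mention H\"older, and this conclusion is not a consequence of Finner alone.

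Second, your deployment of ECS in the first paragraph is misplaced. You describe an NDCS-style ``vary $s_i$, freeze the rest'' manoeuvre to define per-party color functions $c_i^j(s_i)$ and then argue they are independent of $j$. The paper never does this: the functions $\phi_i^{(c)}$ (automatically $j$-independent) fall out of Finner's equality condition directly, with no preliminary message-varying step; ECS enters only afterwards, and only via the contradiction argument above. Your route may be salvageable, but it is more circuitous and does not match how ECS is actually used.
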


{In Appendix~\ref{app:sec:CM} we present a more precise statement of the theorem.
More concretely, considering a simulating classical strategy, we show the existence of \emph{indicator color functions} $\phi_i^{(c)}:s_i\in\cS_i\mapsto \phi_i^{(c)}(s_i)\in\{0,1\}$, for every source $S_i$ and color $c$, which given value $s_i$ taken by $S_i$, indicates whether $s_i$ corresponds to color $c$. These source indicator color functions are consistent with each other, and with parties' outputs. In particular, letting $g_{j}^{(c)}$ be the characteristic function of $A_j$ outputting color match $c$, we have:
\begin{enumerate}
\item[{\rm (i)}] $g_j^{(c)}=\prod_{i:i\rightarrow j} \phi_i^{(c)}$, i.e., $A_j$ outputs color match $c$ if and only if all the sources connected to her take color $c$.
\item[{\rm (ii)}]$\forall s_i, \sum_c \phi_i^{(c)}(s_i)=1$, i.e., any value $s_i$ taken by source $S_i$ is associated to a unique color $c$.
\item[{\rm (iii)}] $\E\big[\phi_i^{(c)}\big] = p_\match(c)$, i.e., the sources take colors with the same distribution as in the original strategy.
\end{enumerate}
Note that in the short version of this work~\cite{PRL}, we adopte the different notation of a color function $c_i:s_i\in\cS_i\mapsto c_i(s_i)\in\{1,...,C\}$ mapping the value $s_i$ taken by source $S_i$ to the color $c_i(s_i)\in\{1,...,C\}$, which is the unique color with $\phi_i^{(c_i(s_i))}(s_i)=1$. We will also adopt this notation in Section~\ref{subsec:1-2CM}.
}

{
Now, we briefly explain the three ingredients used in the proof, which is detailed in Appendix~\ref{app:sec:CM}.}

\begin{proof} 
%Here we give the main proof ideas and leave the details for Appendix~B. 
The proof relies on the use of the equality condition of Finner's inequality. 
Let $g_{j}^{(c)}$ be the characteristic function of $A_j$ outputting color match $c$ as defined above.
Since $\cN$ admits a PFIS, we may use this PFIS to write down the Finner's inequality for these characteristic functions. We observe that equality holds in this inequality so that we can impose the equality conditions of Finner's inequality. We find that for each color $c$ there is an assignment of labels ``color $c$" and ``not color $\bar{c}$" to each source. Next, we use the ECS property to show that these assignments of labels for different $c$'s match in the sense that each source takes exactly one well-defined color. Finally, the fact that the distribution of colors taken by each source is $p_\match(c)$ follows from \emph{H\"older's inequality}.
\end{proof} 

{
As the example of Figure\,\ref{fig:CMNotCompatible} shows, the ECS property is necessary for our rigidity result. 
We also need the PFIS assumption as a technical tool in the proof of our rigidity result due to our use of the equality condition of the Finner's inequality. 
However, we did not find a counter example proving it is necessary for the rigidity property.\footnote{{We were recently told that for some ECS networks not admitting a PFIS, CM rigidity still holds (private communication with Sadra Boreiri)}} 
}

\begin{figure}
\includegraphics[width=0.3\textwidth]{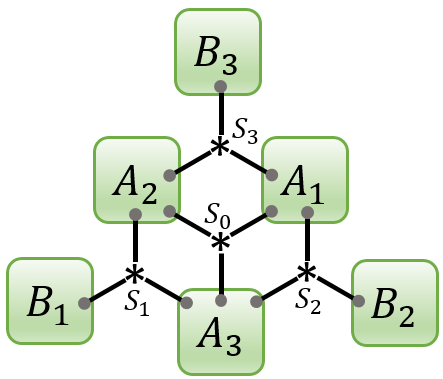}
\caption{In this network, some CM distributions can be classically simulated with strategies that are not CM. Suppose that each source distributes colors blue or red with probability $1/2$; this gives a CM distribution. The following alternative strategy simulates this distribution:\\ 
$S_1, S_2, S_3$ and $B_1, B_2, B_3$ behave similarly to the initial CM strategy. Source $S_0$  takes one of the numbers $0$ or $1$ uniformly at random. When $S_2, S_3$ have the same color $c$, party $A_1$ announces color match $c$ if $S_0=1$, otherwise she announces no match. $A_2, A_3$ behave respectively the same.\\ 
This new strategy is clearly not a CM strategy (no color can be assigned to $S_0$), yet it simulates the initial CM distribution.}
\label{fig:CMNotCompatible}
\end{figure}

%***********************
\subsection{Extra rigidity of some CM strategies}\label{sec:ExtraRigidityCM}

We now discuss an additional rigidity result for CM strategies that is parallel to Corollary~\ref{corollary:TCExtraMeasures} for TC strategies. 
We will make a concrete use of this additional rigidity result in the illustration of our method in the following section.
Let us first introduce the notion of the rigidity of a refined measurement in a CM strategy:
\begin{definition}[Rigidity of refined measurements]\label{definition:CMExtraMeasures}
We say that computational basis vector $\ket{c_i:i\to j}$ is rigid for a party $A_j$ if the followings hold:
\begin{itemize}
\item[(i)] $\ket{c_i:\, i\to j}$ belongs to the measurement basis of $A_j$
\item[(ii)] in any classical strategy that simulates the same distribution (which by Theorem~\ref{theorem:CM} is necessarily CM) $A_j$ outputs $\{c_i: i\to j\}$ if and only if the source $S_i$, for any $i\to j$, takes color $c_i$.
\end{itemize}
\end{definition}
Our goal is to show that when $\ket{c_i:\, i\to j}$ belongs to the measurement basis of $A_j$, then it is rigid for $A_j$.
Nevertheless, proving such an extra rigidity result for CM strategies needs additional assumptions comparing to that for TC distributions. In the following, we fix a color $c_i$ for any source $S_i$, and assume that $\ket{c_i:\, i\to j}$ belongs to the measurement basis of \emph{any} party $A_j$. Then we prove the rigidity of $\ket{c_i:\, i\to j}$ for $A_j$. We will give similar extra rigidity results in Appendix~B, where we also give the proofs.

\begin{corollary}[Refined measurements in CM]\label{corollary:CMExtraMeasures}
Consider a quantum CM strategy in an ECS network admitting a PFIS. Fix a color $c_i$ for any source $S_i$ in the network. Assume that  for any party $A_j$, the computational basis vector $\ket{c_i:i\to j}$ belongs to the measurement basis of $A_j$. Then,  $\ket{c_i:\, i\to j}$ is rigid for $A_j$ for any $j$.
\end{corollary}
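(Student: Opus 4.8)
\textbf{Proof proposal for Corollary~\ref{corollary:CMExtraMeasures}.}

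The plan is to mimic, at the level of the whole network, the argument sketched for Theorem~\ref{theorem:CM}, but now applied to a \emph{refined} set of characteristic functions that records not only color matches but also the outcome ``$\ket{c_i:i\to j}$'' at each party. Concretely, fix the colors $c_i$ as in the statement. For each party $A_j$ introduce the characteristic function $h_j$ of the event that $A_j$ outputs $\ket{c_i:i\to j}$ (this is a well-defined $0/1$ function of the messages $A_j$ receives, since by hypothesis this vector is in her measurement basis). In the quantum strategy one can compute $\E[\prod_j h_j]$ directly: the event ``every $A_j$ outputs $\ket{c_i:i\to j}$'' forces every source $S_i$ to have been projected onto color $c_i$ (because $S_i$ feeds two parties $A_j, A_{j'}$ by the ECS property, both of which have then recorded that $S_i$ carries color $c_i$), so this probability equals $\prod_i p^{(i)}_{c_i}$, where $p^{(i)}_{c_i}$ is the weight of color $c_i$ in the decohered source $S_i$; and by Theorem~\ref{theorem:CM} this weight is the fixed $p_\match(c_i)$. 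So $\E\big[\prod_j h_j\big] = \prod_i p_\match(c_i)$.

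Next I would write Finner's inequality for the nonnegative functions $h_j$ using the PFIS exponents $x_j$: since $\sum_{j:i\to j}x_j=1$ for every source, Finner's inequality~\cite{RenouFinner2019,Finner1992} gives
\begin{equation*}
\E\Big[\prod_j h_j\Big]\;\le\;\prod_i \E_{s_i}\Big[\Big(\prod_{j:i\to j} h_j^{x_j}\Big)\Big]\;\le\;\prod_i \prod_{j:i\to j}\big(\E[h_j]\big)^{x_j},
\end{equation*}
where the outer inequality is the Finner bound applied in the classical model, and I would further bound each $\E[h_j]$ from above by $p_\match(c_i)$ for \emph{any} single $i\to j$ — indeed $h_j$ implies in particular that $A_j$ saw color $c_i$ from $S_i$, an event whose classical probability is $\E[\phi_i^{(c_i)}]=p_\match(c_i)$ by Theorem~\ref{theorem:CM}(iii). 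Combining with $\sum_{j:i\to j}x_j=1$, the right-hand side collapses to $\prod_i p_\match(c_i)$, matching the quantum value exactly. Hence equality holds throughout Finner's inequality for the classical simulation.

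From the equality case of Finner's inequality (the same tool used in Theorem~\ref{theorem:CM}) one extracts that, up to measure-zero sets, $h_j$ factorizes as $h_j=\prod_{i:i\to j} f_i^j$ for suitable $0/1$ functions $f_i^j$ of $s_i$ alone, and from the saturation of the Hölder-type step each $f_i^j$ equals the indicator $\phi_i^{(c_i)}$ of ``$S_i$ takes color $c_i$'' (this is where the hypothesis that $\ket{c_i:i\to j}$ lies in the basis of \emph{every} party is used: it lets us run the argument symmetrically at all parties so the factors are forced to be exactly the color indicators rather than some coarsening). This says precisely that $A_j$ outputs $\ket{c_i:i\to j}$ iff every $S_i$ with $i\to j$ took color $c_i$, for all $j$ simultaneously — but by the ECS property each such source is pinned down by two parties, so the ``for all $j$ simultaneously'' can be peeled off one party at a time, yielding rigidity of $\ket{c_i:i\to j}$ for each individual $A_j$, which is the claim.

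The main obstacle I anticipate is the equality-analysis step: Finner's equality condition is delicate (it requires the functions to be, essentially, pullbacks along the source maps of functions that multiply to a constant), and one must be careful that the refined events $h_j$ genuinely saturate \emph{both} the Finner inequality and the subsequent Hölder inequality — the latter is what forces the factors to be the sharp color indicators rather than something weaker. Getting the bookkeeping right so that ``$\E[h_j]\le p_\match(c_i)$ for every neighbor $i$'' combines with the PFIS normalization to exactly reproduce the quantum value (with no slack) is the crux; everything after that is a rerun of the ECS-gluing argument already established for Theorem~\ref{theorem:CM}.
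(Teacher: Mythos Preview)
Your overall plan---apply Finner's inequality to the indicators $h_j$ of the refined outputs, obtain equality, and factorize---is exactly the paper's route (Appendix~B, part~(iii) of the extended corollary). But two steps in your execution do not go through as written.

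\medskip
\textbf{The circular bound.} You justify $\E[h_j]\le p_\match(c_i)$ by saying ``$h_j$ implies in particular that $A_j$ saw color $c_i$ from $S_i$''. That implication \emph{is} rigidity of $\ket{c_i:i\to j}$, which is precisely what you are trying to prove; you cannot invoke it here. The correct and non-circular observation is much simpler: $\E[h_j]$ is a marginal of the target distribution $P$, so it equals the quantum value $\prod_{i:i\to j}p_\match(c_i)$ exactly. Plugging this into $\prod_j(\E[h_j])^{x_j}$ and using the PFIS relation $\sum_{j:i\to j}x_j=1$ immediately gives $\prod_i p_\match(c_i)$, which matches $\E[\prod_j h_j]$. (Your intermediate expression $\prod_i\E_{s_i}[\prod_{j:i\to j}h_j^{x_j}]$ is ill-defined since $h_j$ depends on several sources; it should be dropped.)

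\medskip
\textbf{Identifying the factors.} After Finner's equality gives $h_j=\prod_{i:i\to j}\psi_i$ for some $0/1$ functions $\psi_i$, you say ``saturation of the H\"older-type step'' forces $\psi_i=\phi_i^{(c_i)}$. There is no H\"older step here to saturate. The paper instead argues directly via ECS: suppose $\psi_i(s_i)=1$ while $\phi_i^{(c'_i)}(s_i)=1$ for some $c'_i\neq c_i$; take the two parties $A_{j_1},A_{j_2}$ for which $S_i$ is the exclusive common source; set the remaining messages so that $A_{j_1}$ outputs $\ket{c_{i'}:i'\to j_1}$ (possible since each $\psi_{i'}$ hits $1$) and $A_{j_2}$ outputs color match $c'_i$ (possible since each $\phi_{i'}^{(c'_i)}$ hits $1$, and $S_i$ is their only common source so there is no conflict). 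These two outputs are incompatible in the quantum distribution $P$, giving a contradiction. This yields $\psi_i\le\phi_i^{(c_i)}$; equality of expectations then forces $\psi_i=\phi_i^{(c_i)}$. At that point $h_j=\prod_{i:i\to j}\phi_i^{(c_i)}$ already \emph{is} rigidity for each $A_j$ separately---no ``peeling'' is needed.
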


Let us illustrate the use of this corollary in the next section.
 Consider the CM strategy in the four-party 1-2 network of Figure~\ref{fig:1-2CM} with three colors. Consider, e.g., the color distribution labelled by $t=4$. Suppose that basis states $\ket{21}$, $\ket{10}$, $\ket{10}$ and $\ket{02}$ belong to the measurement bases of $A, B, C$ and $D$ respectively. Then by the above corollary, all these basis states are rigid for the associated parties. This means that in any classical strategy that simulates this CM distribution, e.g., party $A$ outputs $\ket{21}$ if and only if she receives colors $2$ and $1$ from its connected sources.

{
In the case where some refined measurements performed in the computational basis are rigid, we change the definition of $P_\match$ by including all these extra computational basis measurements.
More precisely, $P_\match$ is the distribution obtained from $P$ once one only remembers the information about the color matches and the refined measurements that are rigid, and any extra information is coarse-grained into an ambiguous output $\chi$.
} 

\subsection{1-2 CM scenario with three colors}\label{subsec:1-2CM}

We now demonstrate nonlocality in the four-party 1-2 network of Figure~\ref{fig:1-2CM} with one bipartite source and two tripartite sources. Observe that this network satisfies the ECS property and admits a PFIS (by letting $x_A=x_D=1/2$ and $x_B=x_C=1/4$). We consider a CM quantum strategy on this network in which each source distributes three colors labeled $0$=yellow, $1$=purple or $2$=red. We will use Theorem~\ref{theorem:CM} as well as Corollary~\ref{corollary:CMExtraMeasures} to prove our nonlocality result.

Our quantum CM strategy is described as follows (with the ordering of sources for each party given by orientations given in Figure\,\ref{fig:1-2CM}):
\begin{itemize}
\item The bipartite source distributes $\frac{1}{\sqrt{3}}(\ket{00}+\ket{11}+\ket{22})$, and the tripartite sources distribute $\frac{1}{\sqrt{3}}(\ket{000}+\ket{111}+\ket{222})$. That is, they distribute coherent uniform superposition of the three colors.
\item Measurement basis vectors of $A$ are:\\
$\ket{00}, \ket{11}, \ket{22}$, 
$\ket{21}, \ket{10}, \ket{02}$ and \\
$\ket{\chi_{i}^{A}}=\omega^{(1)}_i\ket{01}+\omega^{(2)}_i\ket{12}+\omega^{(3)}_i\ket{20}$ for $i=1,2,3$.

\begin{figure}
\includegraphics[width=0.47\textwidth]{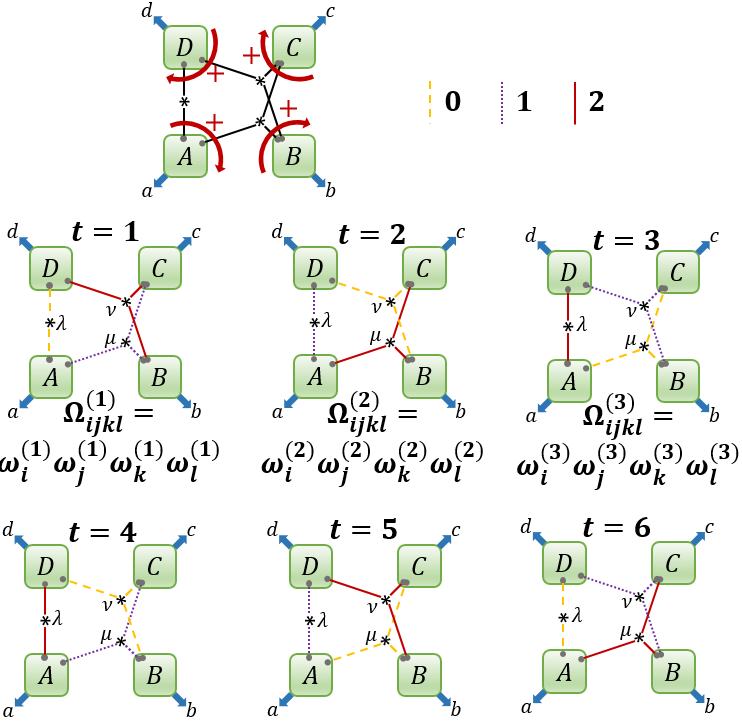}
\caption{
Four-party 1-2 network. There are six ways of choosing the color of sources when no party outputs a color match.  In instances $t=1, 2, 3$ all parties' outputs are ambiguous.
{
The associated parameters $\Omega^{(t)}_{ijkl}$ are defined accordingly in Eq.~\eqref{eq:OmegaCM} and the coefficients $\omega_i^{(1)}, \dots, \omega_l^{(3)}$ are the parameters of parties' measurement bases. As will be explained in the text, color distributions specified by $t=4, 5, 6$ correspond to refined measurements that are rigid according to Corollary~\ref{corollary:CMExtraMeasures}.
}
}\label{fig:1-2CM}
\end{figure}

Note that the first three vectors correspond to color matches. The second three ones are computational basis vectors which completely reveal the color of connected sources. The last three vectors do not reveal the color of sources and produce coherence. 

\item Measurement basis vectors of $B$ are:\\
$\ket{00}, \ket{11}, \ket{22}, \ket{10}, \ket{02}, \ket{21}$ and\\
$\ket{\chi_{j}^{B}}=\omega^{(1)}_j\ket{12}+\omega^{(2)}_j\ket{20}+\omega^{(3)}_j\ket{01}$ for $j=1,2,3$.
\item Measurement basis vectors of $C$ are:\\
$\ket{00}, \ket{11}, \ket{22}, \ket{10}, \ket{02}, \ket{21}$ and\\
$\ket{\chi_{k}^{C}}=\omega^{(1)}_k\ket{12}+\omega^{(2)}_k\ket{20}+\omega^{(3)}_k\ket{01}$ for $k=1,2,3$.
\item Measurement basis vectors of $D$ are:\\
$\ket{00}, \ket{11}, \ket{22}, \ket{02}, \ket{21}, \ket{10}$ and\\
$\ket{\chi_{l}^{D}}=\omega^{(1)}_l\ket{20}+\omega^{(2)}_l\ket{01}+\omega^{(3)}_l\ket{12}$ for $l=1,2,3$.
\end{itemize}

{
Let $P=\{P(a,b,c,d)\}$ be the resulting distribution, where $a\in\{00,11,22,21,10,02,\chi_1^A,\chi_2^A,\chi_3^A\}$, \dots,  $d\in\{00,11,22,21,10,02,\chi_1^D,\chi_2^D,\chi_3^D\}$ . It is a CM distribution since the first three measurement basis states of each party correspond to color matches. 
We also introduce $P_\match=\{P_\match(a,b,c,d)\}$ the coarse-grained CM distribution, in which all the outputs $\chi_{i}^{A}, \dots, \chi_{l}^{D}$ are replaced by a single ambiguous output $\chi$. Thus, in $P_\match=\{P_\match(a,b,c,d)\}$ the outputs satisfy $a, b, c, d\in\{00,11,22,21,10,02,\chi\}$.
We note that, e.g., $P_\match(\chi, \chi, \chi, \chi)=1/3^2$, and
\begin{align}
P\big(\chi_{i}^{A}, \chi_{j}^{B}, \chi_{k}^{C}, \chi_{l}^{D}\big)=\nonumber\frac{1}{3^3}\Big|\Omega^{(1)}_{ijkl}+\Omega^{(2)}_{ijkl}+\Omega^{(3)}_{ijkl} \Big|^2,\label{eq:OmegaCM}
\end{align}
where for $t\in\{1,2,3\}$ we use
\begin{equation}\label{eq:OmegaCM}
\Omega^{(t)}_{ijkl} = \omega^{(t)}_i\omega^{(t)}_j\omega^{(t)}_k\omega^{(t)}_l.
\end{equation}
}

{
As in the TC case of Section~\ref{subsec:TC5-0},} we assume by contradiction that this distribution is simulable by a classical strategy. 
Hence, by Theorem~\ref{theorem:CM} this strategy is a CM classical strategy. Moreover, the second triple of measurement basis states of all the parties are rigid by Corollary~\ref{corollary:CMExtraMeasures} since they correspond to color distributions $t=4, 5, 6$ in Figure\,\ref{fig:1-2CM}.

We are interested in the case where all the parties' outputs are ambiguous, i.e., their outputs are  $a=\chi_{i}^{A}, b=\chi_{j}^{B}, c=\chi_{k}^{C}, d=\chi_{l}^{D}$ for some $1\leq i,j, k, l\leq 3$. %We denote this by $a,b,c,d\in\chi$. 
As mentioned above, these ambiguous cases correspond to color distributions $t=1, 2, 3$. Then, we may define 
$$q(i,j,k,l,t)=\Pr\big(i,j,k,l,t\big|\text{ambiguous outputs}\big),$$ 
be the probability distribution that $a=\chi_{i}^{A}, b=\chi_{j}^{B}, c=\chi_{k}^{C}, d=\chi_{l}^{D}$ and $t\in \{1, 2, 3\}$ conditioned on all outputs being ambiguous. By the above discussion $q(i, j, k, l, t)$ is a well-defined probability distribution.

%Let us as in the example of Subsection~\ref{subsec:TC5-0} define $\Omega^{(t)}_{ijkl} = \omega^{(t)}_i\omega^{(t)}_j\omega^{(t)}_k\omega^{(t)}_l$. 
{
In the following, we first prove that as for the TC case of Section~\ref{subsec:TC5-0}, some marginals of the distribution $q(i, j, k, l, t)$ satisfy the equality constraints of Claim~\ref{claim:5-0}. Then, we use Proposition~\ref{propo:TC5-0} to obtain a contradiction. 
}

{ 
Let us first prove that the distribution $q(i,j,k,l,t)$ satisfies Claim~\ref{claim:5-0}.
}
The marginal $q(i, j, k, l)$ is given by part (i) of this claim because as argued above, ambiguous outputs correspond to color distributions $t=1, 2, 3$. 
We prove one instance of part (ii) of the claim as other cases are similar. Using the fact that the output of $A$ is independent of the color of source $\nu$ we have
\begin{align*}
q(i&, t=1)  \\
&= \Pr(a=\chi_{i}^{A}, t=1 |\text{ambiguous outputs})\\
& = 3^2\Pr(a=\chi_{i}^{A}, t=1)\\
& = 3^2\Pr(a=\chi_{i}^{A}, c_\lambda=0, c_\mu=1, c_\nu=2)\\
& = 3 \Pr(a=\chi_{i}^{A}, c_\lambda=0, c_\mu=1)\\
& = 3^2\Pr(a=\chi_{i}^{A}, c_\lambda=0, c_\mu=1, c_\nu=0)\\
& = 3^2\Pr(a=\chi_{i}^{A}, d=0, c_\lambda=0, c_\mu=1, c_\nu=0)\\
& = 3^2\Pr(a=\chi_{i}^{A}, d=0)\\
& = \frac{1}{3}|\omega_i^{(1)}|^2.
\end{align*}

Then having the validity of Claim~\ref{claim:5-0}, we can again use Proposition~\ref{propo:TC5-0} to conclude that for certain choices of measurement parameters the resulting CM distribution is nonlocal.

\medskip
Remark that we could have chosen not to ask the parties to make measurements in the color computational basis states corresponding to color distributions $t=4, 5, 6$. For instance, $A$ could measure in a more general basis  
$\{\ket{00}, \ket{11}, \ket{22}\}\cup \{\ket{\chi_{i}^{A}}=\omega^{(1)}_i\ket{01}+\omega^{(2)}_i\ket{12}+\omega^{(3)}_i\ket{20}+\omega^{(4)}_i\ket{21}+\omega^{(5)}_i\ket{10}+\omega^{(6)}_i\ket{02}\}_{i=1,2,3,4,5,6}$, and similarly for $B$, $C$, $D$. In this case, we can again define some distribution $q(i,j, k, l, t)$, with $1\leq t\leq 6$, and compute its certain marginals. However, we find that constraints provided by these marginals do not rule out the existence of $q(i, j, k, l, t)$. Therefore, adding refined measurements and Corollary~\ref{corollary:CMExtraMeasures} crucially help in deriving nonlocality in the four-party 1-2 network via CM strategies. We will use this idea in the later examples as well.

%***********************************************************************************************************
\section{All ring scenarios with bipartite sources}\label{sec:AllRingScenariosWithBipartiteSources}

We now consider the family of networks $\cR_n$ composed of $n\geq 3$ bipartite sources $S_1, \dots,  S_n$ and $n$ parties $A_1, \dots,  A_n$ disposed in a ring. We assume that $S_i$ is connected to $A_i$ and $A_{i+1}$ where all indices here are modulo $n$ (see Figure~\ref{fig:RingScenario}). 
Remark that all $\cR_n$ are NDCS and ECS networks admitting a PFIS. Thus, both TC and CM methods can be applied. 

We consider the TC strategies where each source $S_i$ distributes $\eta_i=1$ token through the state $$\ket{\phi}=\frac{1}{\sqrt 2}(\ket{01}+\ket{10}).$$ 
Each party $A_j$ receives two qubits (one from $S_i$ and one from $S_{i-1}$) and performs projective measurement in basis $\{\ket{00}, \ket{v_{j,1}}, \ket{v_{j,2}}, \ket{11}\}$ with
\begin{equation}\label{eq:ring-v-j-r}
\ket{v_{j,r}}= \omega_{j,r}^{(1)}\ket{01}+ \omega_{j,r}^{(2)}\ket{10}, \quad r=1,2
\end{equation}
where $\omega_{j,r}^{(1)}, \omega_{j,r}^{(2)}$ are parameters to be determined. Here, we assume that the first qubit in the above equation comes from $S_{i-1}$ and the second one comes from $S_i$.
We note that $\ket{00}$ and $\ket{11}$ respectively corresponds to $n_j=0$ and $n_j=2$ received tokens, while $\ket{v_{j,1}}$ and $\ket{v_{j,2}}$ correspond to receiving $n_j=1$ token. 

Remark that for even $n$, this strategy is equivalent to a CM one. Assume that each party $A_j$ for \emph{even} $j$, flips both the received qubits in the computational basis. In this case the measurement bases would have the same structure as before, yet the distributed entangled states turn to $\frac{1}{\sqrt 2}(\ket{00} +\ket{11})$. Thus the resulting strategy is a CM strategy with two colors.

Suppose that the resulting TC distribution can be simulated by a classical strategy. By Theorem~\ref{theorem:TC} this strategy is TC: each source $S_i$ has one token and with probability $1/2$ decides to whether sends it to $A_{i}$ or $A_{i+1}$. Again we want to assume that all $A_{j}$'s outputs are ambiguous, namely, $A_j$'s output is $\ket{v_{j, r_j}}$ for some $r_j\in \{1,2\}$. Note that this happens only if any party receives exactly one token, which means that the distribution of tokens takes one of the following two forms:
\begin{itemize}
\item $t=1$: for all $i$, $S_i$ sends its token to $A_{i}$.
\item $t=2$: for all $i$, $S_i$ sends its token to $A_{i+1}$. 
\end{itemize}
For any $r_1, \dots,  r_n, t\in \{1,2\}$ define
\begin{align}
q(r_1, \dots,  r_n, t) = \Pr\big( \alpha_j=\ket{v_{j, r_j}},\,\forall j, t \,\big| \,  n_j =1\,\forall j \big ).
\label{eq:ring-def-q}
\end{align}
By the above discussion, $q(r_1, \dots,  r_n, t)$ is a well-defined probability distribution.

\begin{claim}\label{claim:RingScenario}
The marginals of $q$ satisfy:
\begin{enumerate}
\item[{\rm{(i)}}] $q(r_1, \dots,  r_n)=\frac{1}{2} \Big|\prod_{j} \omega^{(1)}_{j, r_j} + \prod_{j} \omega^{(2)}_{j, r_j}  \Big|^2$
\item[{\rm{(ii)}}] $q(r_j, t) =\frac{1}{2} |\omega_{j, r_j}^{(t)}|^2, \, \forall j$. 
\end{enumerate}
\end{claim}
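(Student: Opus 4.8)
The plan is to compute the two marginals directly from the structure of the classical TC strategy, mirroring the argument already used for Claim~\ref{claim:5-0}. For part (i), I would observe that the event ``$n_j=1$ for all $j$'' occurs precisely when the token distribution is $t=1$ or $t=2$, each happening with probability $1/2^n$ (every source independently picks one of its two parties). Conditioning on this event, $t$ is uniform on $\{1,2\}$. Now, conditioned further on $t$, the sources carry no residual randomness relevant to the parties, so the extra outcomes $r_1,\dots,r_n$ are produced by the quantum measurements acting on the post-selected global state. Since $P$ is assumed to equal the quantum distribution, $q(r_1,\dots,r_n)$ must match the corresponding quantum conditional probability. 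Computing the latter from Eq.~\eqref{eq:QuantumModel}: the unnormalized amplitude for all parties to receive one token and output $\ket{v_{j,r_j}}$ is the sum over the two global one-token configurations $t=1,2$, and in configuration $t$ party $A_j$ contributes the overlap $\omega^{(t)}_{j,r_j}$ of $\ket{v_{j,r_j}}$ with the relevant computational basis vector. Collecting the normalization $1/2^n$ from the source states and dividing by $P_\token(n_j=1\,\forall j)=2/2^n$ gives $q(r_1,\dots,r_n)=\tfrac12\big|\prod_j\omega^{(1)}_{j,r_j}+\prod_j\omega^{(2)}_{j,r_j}\big|^2$.

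For part (ii), I would use the same trick as in Claim~\ref{claim:5-0}(ii): relate the joint event $(r_j,t)$ to an \emph{unambiguous} neighboring event that pins down $t$. Fix $j$ and $t=1$ (the $t=2$ case and other $j$ are identical by the ring symmetry). In a classical TC strategy the output $r_j$ of $A_j$ depends only on the tokens she receives from $S_{j-1}$ and $S_j$; it is independent of what the remaining sources $S_i$ ($i\neq j-1,j$) do. Pick any source not adjacent to $A_j$ — say $S_{j+1}$ — and note that in configuration $t=1$ it sends its token to $A_{j+1}$, whereas in $t=2$ it sends it to $A_{j+2}$. Hence ``$t=1$ among the post-selected configurations'' can be detected locally at, e.g., $A_{j+2}$ by the event that $A_{j+2}$ receives her token from $S_{j+1}$ rather than $S_{j+2}$; more simply, by the argument of Claim~\ref{claim:5-0} I would write $\Pr(r_j,t=1)=\Pr(r_j, n_j=1, \text{some unambiguous marker of } t=1)$ and evaluate this from the quantum strategy, where it equals $\tfrac{1}{2^n}|\omega^{(1)}_{j,r_j}|^2$. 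Dividing by $P_\token(n_j=1\,\forall j)=2/2^n$ yields $q(r_j,t=1)=\tfrac12|\omega^{(t)}_{j,r_j}|^2$.

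The main obstacle is formulating (ii) cleanly: unlike the $5$-$0$ example, here there is no single ``$t=\#$'' configuration that differs from the ambiguous ones at one party only; instead $t=1$ and $t=2$ are globally different. The clean way around this is to exploit that the classical strategy is already known (by Theorem~\ref{theorem:TC}) to be the decohered TC strategy, so $t$ is literally the hidden source configuration, and any marginal of $q$ that does not involve $A_j$'s neighbors is computed by the decohered strategy exactly as by the quantum one — because decoherence in the computational basis does not affect probabilities of computational-basis events, and $r_j$ conditioned on the sources adjacent to $A_j$ is a computational-basis-diagonal quantity from the perspective of every \emph{other} party. I would therefore phrase the proof as: restrict attention to a sub-event determined by $A_j$'s two adjacent sources together with enough additional computational-basis data to fix $t$, note this event has the same probability under the quantum and classical strategies, and read off $\tfrac12|\omega^{(t)}_{j,r_j}|^2$. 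Everything else is the same bookkeeping as in Claim~\ref{claim:5-0}.
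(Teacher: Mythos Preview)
Your treatment of (i) is exactly the paper's: divide the quantum joint probability $\Pr(\alpha_j=\ket{v_{j,r_j}}\ \forall j)$ by $P_\token(n_j=1\ \forall j)=2^{1-n}$.

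For (ii) your overall strategy also matches the paper's---use independence of $A_j$'s output from non-adjacent sources to strip $t$ down to a local condition, then trade that condition for an \emph{observable} party output---but your concrete marker does not work. The event ``$A_{j+2}$ receives her token from $S_{j+1}$'' is (a) the $t=2$ configuration, not $t=1$ (in $t=1$ every $S_i$ sends to $A_i$, so $A_{j+2}$ receives from $S_{j+2}$), and more importantly (b) a hidden source-level event rather than an output of any party, so its probability cannot be read off from $P$. The paper's marker is the \emph{neighbour's token count}: $a_{j+1}=\ket{00}$, i.e.\ $n_{j+1}=0$. One checks that in the classical TC strategy the events
\[
\{\alpha_j=\ket{v_{j,r_j}},\ a_{j+1}=\ket{00}\}
\quad\text{and}\quad
\{\alpha_j=\ket{v_{j,r_j}},\ S_{j-1}\!\leadsto\!A_{j-1},\ S_j\!\leadsto\!A_j,\ S_{j+1}\!\leadsto\!A_{j+2}\}
\]
coincide, and by independence of the remaining $n-3$ sources from $A_j$'s output the latter has probability $2^{n-3}\Pr(\alpha_j=\ket{v_{j,r_j}},\,t{=}1)$. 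Since $a_{j+1}=\ket{00}$ is a genuine output, its joint probability with $\alpha_j=\ket{v_{j,r_j}}$ is computable from the quantum strategy as $\tfrac18|\omega^{(1)}_{j,r_j}|^2$, yielding $q(r_j,t{=}1)=4\cdot\tfrac18|\omega^{(1)}_{j,r_j}|^2=\tfrac12|\omega^{(1)}_{j,r_j}|^2$. So the ``unambiguous marker'' you are looking for does exist, but it sits at the adjacent party $A_{j+1}$ via her token count, not at $A_{j+2}$ via token provenance; your last paragraph is heading there but never names it.
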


\begin{figure}
\includegraphics[width=0.4\textwidth]{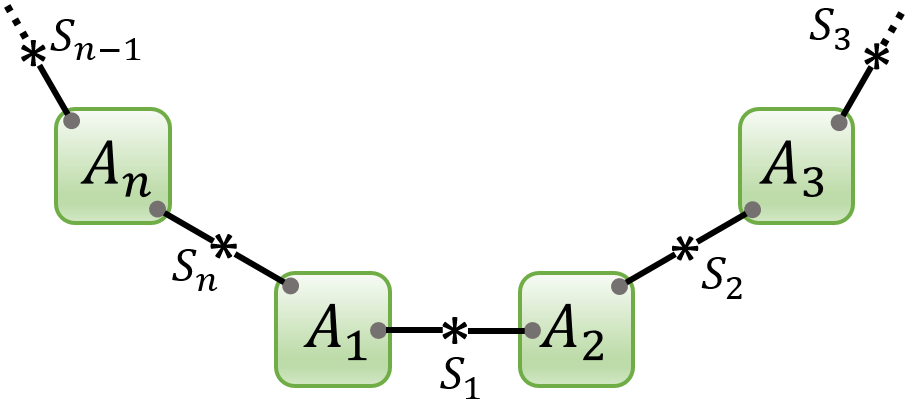}
\caption{The ring scenario $\cR_n$}\label{fig:RingScenario}
\end{figure}

As shown in Appendix~\ref{app:sec:ring}, the proof of this claim is based on similar ideas to that of Claim~\ref{claim:5-0}.

\begin{proposition}\label{propo:RingScenario}
For some choices of coefficients $\omega_{j, r_j}^{(t)}$, no distribution $q(r_1, \dots,  r_n, t)$ satisfies Claim~\ref{claim:RingScenario}
\end{proposition}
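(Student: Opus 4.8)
The strategy is to reduce Proposition~\ref{propo:RingScenario} to Proposition~\ref{propo:TC5-0}, which has already been established by solving a finite linear program. The key observation is that Claim~\ref{claim:RingScenario} has exactly the same structure as Claim~\ref{claim:5-0}: part (i) fixes the full marginal $q(r_1,\dots,r_n)$ to be $\frac{1}{2}|\prod_j\omega^{(1)}_{j,r_j}+\prod_j\omega^{(2)}_{j,r_j}|^2$, a coherent sum over the two token patterns $t\in\{1,2\}$, while part (ii) fixes each pair marginal $q(r_j,t)$ to be $\frac{1}{2}|\omega^{(t)}_{j,r_j}|^2$. So the question of whether a distribution $q(r_1,\dots,r_n,t)$ can satisfy these constraints is again a feasibility question for a linear program whose only data are the coefficients $\omega^{(t)}_{j,r_j}$.

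\medskip
\textbf{Key steps.} First, I would note that it suffices to exhibit \emph{one} family of networks $\cR_n$ and one choice of coefficients for which the LP is infeasible, since the statement is ``for some choices of coefficients.'' Second, I would pick a convenient reduction to the already-solved $5$-$0$ case. In the $5$-$0$ network the relevant LP had variables indexed by $(i,j,k,l,t)$ with $t\in\{1,2,3\}$ and $i,k\in\{1,2,3\}$, $j,l\in\{1,2\}$ (with the conventions $\omega_j^{(2)}=\omega_j^{(1)}$, $\omega_l^{(2)}=\omega_l^{(3)}$); the infeasible instance used identical three-dimensional rotation bases $R^{(3)}_x(\pi/8)$ at two parties and two-dimensional rotation bases $R^{(2)}(\pi/8)$ at the other two. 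Here, for $\cR_n$ with $t\in\{1,2\}$ and all $r_j\in\{1,2\}$, the constraints from Claim~\ref{claim:RingScenario} involving a fixed small set of parties are structurally identical (after relabeling) to a sub-system of the $5$-$0$ constraints — or, more directly, one can just solve the LP for $\cR_n$ itself for small $n$ and a specific coefficient choice (e.g.\ all parties using a common $2\times 2$ rotation basis $R^{(2)}(\theta)$ for a suitable $\theta$) and observe infeasibility. The cleanest route is: set $n$ to a specific small value (say $n=3$ or $n=4$), assign each party the $2$-dimensional rotation basis of some angle $\theta$ so that $\omega^{(1)}_{j,1}=\cos\theta,\ \omega^{(2)}_{j,1}=-\sin\theta$, etc., and check — as Proposition~\ref{propo:TC5-0} does in its case — that for a suitable $\theta$ the marginal system (i)+(ii) admits no nonnegative solution summing to $1$. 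Third, I would point out the monotonicity/embedding argument that lets the infeasibility for one small $n$ propagate, if a uniform statement for all $n$ is desired: an instance of $\cR_{n}$ can be obtained from $\cR_{n'}$ ($n'>n$) by forcing a block of sources to deterministically send their tokens in a fixed direction, so a feasible $q$ for the larger ring would restrict to a feasible $q$ for the smaller one; contrapositively, infeasibility for small $n$ need not give all $n$, so instead I would simply note the LP can be solved for each $n$ separately and that numerically it is infeasible for generic rotation angles.

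\medskip
\textbf{Main obstacle.} The conceptual content is entirely in Claim~\ref{claim:RingScenario} (proved in Appendix~\ref{app:sec:ring}) and in the already-established Proposition~\ref{propo:TC5-0}; the remaining work is essentially bookkeeping to match the two LPs. The one genuine subtlety is that Proposition~\ref{propo:TC5-0} is stated as an existence result (``for some choices of coefficients''), so to invoke it here I must make sure the coefficient freedom available in $\cR_n$ is at least as rich as what the $5$-$0$ argument used, or else re-run the LP for $\cR_n$ directly with an explicit coefficient choice. The expected hard part is therefore just verifying that the $5$-$0$ infeasible instance (two copies of a $3$D rotation, two copies of a $2$D rotation) can be realized — or suitably mimicked — inside the $\cR_n$ constraint system, which amounts to checking that the pair-marginal constraints (ii) and the single global-marginal constraint (i) for $\cR_n$ contain, as a sub-LP on an appropriate subset of coordinates, an infeasible $5$-$0$–type LP. Once that identification is made, the proof is immediate: any $q$ satisfying Claim~\ref{claim:RingScenario} would yield a solution to the infeasible LP, a contradiction.
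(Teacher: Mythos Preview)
Your proposal has a genuine gap: it does not prove the statement for general $n\geq 3$, which is what the paper needs (Proposition~\ref{propo:RingScenario} is invoked for every ring $\cR_n$, and also in Sections~\ref{sec:BipartiteSourceCompleteNetwork} and~\ref{sec:GraphColoring} for arbitrarily large~$n$). Your two routes both fail. The ``reduction to Proposition~\ref{propo:TC5-0}'' is only an analogy, not an embedding: the 5-0 LP has $t\in\{1,2,3\}$ with the special degeneracies $\omega_j^{(2)}=\omega_j^{(1)}$, $\omega_l^{(2)}=\omega_l^{(3)}$, whereas the ring LP has $t\in\{1,2\}$ and free $2\times 2$ unitary coefficients at every party; there is no sub-LP of the ring system that reproduces the 5-0 system, and you do not exhibit one. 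Your fallback --- ``solve the LP for each $n$ separately'' --- is a numerical procedure, not a proof; you yourself note the embedding argument goes the wrong way, so infeasibility at small $n$ does not propagate.

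The paper's proof is entirely different and analytic. It takes real coefficients, writes $q(r_1,\dots,r_n,t{=}1)$ as a reference expression plus a correction $x_{r_1,\dots,r_n}$, and uses Claim~\ref{claim:RingScenario}(i) to fix $q(\cdot,t{=}2)$ and Claim~\ref{claim:RingScenario}(ii) to force $\sum_{r_i:i\neq j}x_{r_1,\dots,r_n}=0$. Non-negativity of $q$ then bounds $x$ above and below. Choosing $\lambda_j=\epsilon$ small and symmetrizing over $S=\sum_j r_j$ reduces everything to scalar variables $x_S$ satisfying $(-1)^S\geq x_S$ for $S>n/2$ and $x_S\geq -(-1)^S$ for $S<n/2$, together with two linear relations in the $x_S$. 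A case split ($n$ odd, $n=4k$, $n=4k{+}2$, the last requiring a slightly different parameter choice at one party) plus elementary binomial identities such as~\eqref{eq:pascal-sum} yields a contradiction for every $n\geq 3$. None of this is captured by your bookkeeping reduction; the actual content is the asymptotic/combinatorial argument, which you would need to supply.
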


The proof of this proposition is left for Appendix~\ref{app:sec:ring}. This completes the proof of nonlocality in rings. 

\medskip
Before moving to the next set of examples let us remark that with similar ideas as in the proof of Claim~\ref{claim:RingScenario}, we can show that, e.g., 
$$q(r_1, \dots,  r_{n-1}, t) = \frac{1}{2}\Big|\prod_{j=1}^{n-2} \omega_{j, r_j}^{(t)} \Big|^2.$$
Such equations give stronger constraints on coefficients $\omega_{j, r_j}^{(t)}$ comparing to the ones of Claim~\ref{claim:RingScenario}. These constraints are not needed to prove Proposition~\ref{propo:RingScenario}. However, we may use them to obtain qualitative improvements, e.g., to reject more $\omega_{j, r_j}^{(t)}$'s.

Also remark that the ring example may be embedded in any NDCS network with a loop. By assuming that some of the sources do not have any token ($\eta_i=0$), we can essentially ignore them. Then, as before we can associate one token to any source on the ring and repeat the same calculations to obtain nonlocality in such a network. 

%*******************************************************************************************************
\section{Bipartite-sources Complete Networks}\label{sec:BipartiteSourceCompleteNetwork}

Let $\cK_n$ be the network with $n$ parties $A_1, \dots,  A_n$ and $\binom{n}{2}$ sources $S_{jj'}$ for any pair $1\leq j< j'\leq n$. We assume that $S_{jj'}$ is bipartite and is connected  to $A_j$ and $A_{j'}$.
In the following we prove nonlocality in network $\cK_n$ for any $n> 3$, using a CM strategy with two colors.
To this end we use Theorem~\ref{theorem:CM} as well as an extension of Corollary~\ref{corollary:CMExtraMeasures} which lead us to a distribution satisfying the same constrains as in Claim~\ref{claim:RingScenario}. Finally, we use Proposition~\ref{propo:RingScenario} to prove nonlocality.

First, we describe our quantum CM strategy. We assume that the number of colors is $C=2$, and each source distributes the maximally entangled state $\frac{1}{\sqrt 2}\big(\ket{00}+ \ket {11}\big)$. Thus, party $A_j$ receives $(n-1)$ qubits. We assume that she measures the received qubits in the orthonormal basis $\cB\cup \{\ket{v_{j,1}}, \ket{v_{j,2}}\}$ where
\begin{align}
%\label{eq:Kn-B}
\cB= \big\{\ket{x}:\, x\in \{0,1\}^n \big\}\setminus \{  \ket{01\cdots 10}, \ket{10\cdots 01}    \},
\end{align}
and 
\begin{align}\label{eq:Kn-v}
\ket{v_{j,r}} = \omega_{j,r}^{(1)}\ket{01\cdots 10} + \omega_{j,r}^{(2)}\ket{10\cdots 01}, ~~ r=1,2,
\end{align}
for parameters $\omega_{j,r}^{(1)}, \omega_{j,r}^{(2)}$ to be determined. 
Hence, all parties always measure in the color computational basis, except for a two-dimensional subspace spanned by $\ket{v_{j,1}},\ket{v_{j,2}}$. Here, we assume that $A_j$ orders the received qubits according to the orientation given in Figure~\ref{fig:Kn}.

\begin{figure}
\includegraphics[width=0.43\textwidth]{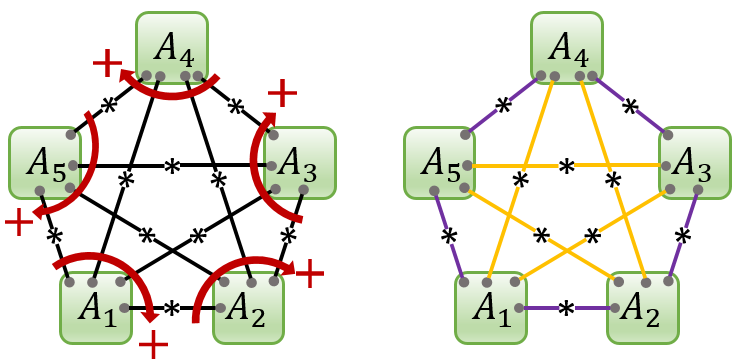}
\caption{Bipartite-sources Complete Network. Each node corresponds to a party and each edge represents a source. Each party sorts the received qubits in the clockwise order. If any $A_j$ outputs $\ket{v_{j, r_j}}$ for some $r_j$, then the sources must take colors as in the picture, i.e., all the \emph{internal} sources must take the same color, different from the $n$ \emph{boundary} sources. }
\label{fig:Kn}
\end{figure}

We note that as $\ket{0\cdots 0}$ and $\ket{1\cdots 1}$ belong to the measurement bases, the quantum strategy is CM. Moreover, the network satisfies ECS and admits a PFIS. Thus if a classical strategy simulates the same outcome distribution, by Theorem~\ref{theorem:CM} it is necessarily CM. 
On the other hand, by an extension of Corollary~\ref{corollary:CMExtraMeasures} given in Appendix~B (see part (i) of Corollary~\ref{corollary:CMExtraMeasures}) we find that all the computational basis measurement states of the parties are rigid. Thus restricting to the case where all parties' outputs are ambiguous, there remains  only two patterns of distributions of colors:

\medskip
\begin{minipage}[h]{3in} 

\emph{For $t=1,2$ all sources $S_{j,j+1}$ take color $t$, and the other sources $S_{j,j'}, |j'-j|>1$ take color $1-t$ (see Figure~\ref{fig:Kn}). }
\end{minipage}

\medskip
Then as before
we may define $q(r_1, \dots,  r_n, t)$ to be given by
\begin{align*}\Pr\big
(A_j = \ket{v_{j, r_j}}\,\forall j,\, t \big|   \text{ ambiguous outputs} \big).
\end{align*}
By the above discussion, $q(r_1, \dots,  r_n, t)$ is a well-defined probability distribution. Moreover, as shown in Appandix~D, this distribution satisfies Claim~\ref{claim:RingScenario}. Hence, by Proposition~\ref{propo:RingScenario}, there are constants $\omega_{j, r}^{(t)}$ for which there is no distribution $q(r_1, \dots,  r_n, t)$ satisfying Claim~\ref{claim:RingScenario}. Therefore, the given CM distribution is nonlocal.

%********************************************************************************
\section{Color-Matching scenario via graph coloring}\label{sec:GraphColoring}

In this section we explain how the combinatorial problem of proper graph coloring can be used to construct examples of network nonlocality. To illustrate our ideas we use the complete graph, but the ideas work essentially for any graph.

We start by the description of the network $\cN$. Suppose we have $n$ sources $S_1, \dots, S_n$ and for any $1\leq i<j\leq n$ we have party $A_{ij}$ that is connected to the sources $S_i, S_j$. This corresponds to a complete graph with $n$ vertices (the sources) and $\binom{n}{2}$ edge between each pair of source (the parties). Observe that this network satisfies ECS (as $A_{ij}$ is connected to only $S_i$ and $S_j$) and admits a PFIS (since the associated graph is regular).

Consider a CM strategy on this network with $C=n$ colors. We would like to assume that no party outputs a color match. To this end, letting $c_i$ be the color taken by sources $S_i$, we need that $c_i\neq c_j$ for any $i\neq j$ as otherwise the party $A_{ij}$ outputs a color match. Then, we obtain a proper coloring of the complete graph. This means that color distributions associated with the interesting case where all outputs are ambiguous correspond to proper colorings. We note that there are $n!$ proper colorings of the complete graph, one for each permutation of the colors. 

As before, to reduce the above number we add refined measurements in the computational basis and use an extension of Corollary~\ref{corollary:CMExtraMeasures}. With this idea we reduce the number of color distributions resulting in ambiguous outputs to two ($t=1,2$). Next, we define some distribution $q$ for which we verify the validity of Claim~\ref{claim:RingScenario}. Finally, using Proposition~\ref{propo:RingScenario} we conclude that for certain choices of measurement parameters, the resulting CM distribution is nonlocal. We leave the details of this argument for Appendix~\ref{app:sec:GraphColoring}.

Here, we would like to emphasize that the idea behind this example is quite general and works for a large class of graphs. Starting with an arbitrary graph, we may think of its vertices as sources, and its edges are parties. Then, take a CM strategy with $C$ colors, where $C$ is the coloring number of the graph. This network satisfies ECS, and assuming that it admits a PFIS (which holds if the graph is regular), we can apply Theorem~\ref{theorem:CM}. Corollary~\ref{corollary:CMExtraMeasures} may also be used to simplify the study the resulting distribution and proving nonlocality.

%********************************************
\section{Conclusion and final remarks}\label{sec:conclusion}

In this paper we proposed two general methods for deriving nonlocality in wide classes of networks. Our methods are based on the crucial observation that Token-Counting and Color-Matching distributions are rigid. That is, in order to classically simulate such distributions we are forced (in certain networks) to use TC and CM strategies. These rigidity properties substantially restrict the set of potential classical strategies that can simulate such distributions. Then, further study of these strategies leads us to examples of  nonlocality in networks. 

\subsection{Superposition of tokens and colors}

{
We would like to emphasize that our examples of network nonlocality are fundamentally different from the existing embedding of Bell's nonlocality into network scenarios. 
As argued by Fritz~\cite{fritz2012}, one may embed nonlocal distributions of standard Bell's scenarios in networks. 
Below, we briefly explain this construction.
}

{
Suppose that the parties, Alice and Bob, share a two-qubit Bell state, with the goal of violating the CHSH Bell inequality.
Testing the CHSH inequality, unlike the network scenarios considered in this paper, requires local inputs for both Alice and Bob. Nevertheless, this CHSH test can be embedded in the triangle network with no inputs and two bits of outputs per party. To this end, the effective inputs of the CHSH test are provided by the two additional sources: the source shared between Alice and Charlie and the one shared between Bob and Charlie; any of these sources provides a uniform random bit shared between the corresponding parties. These two random bits are used by Alice and Bob as their inputs for the CHSH test.
We assume that all parties output the received ``input bits.''  
The correspondence between these output bits ensures that Alice and Bob output exactly the bits that they share with Charlie.
Finally, Alice and Bob both additionally output the the measurement outcomes of the CHSH test performed on the shared Bell state.
Fritz~\cite{fritz2012} showed that if the resulting distribution can be reproduced by a classical strategy in the triangle network, then the distribution can violate the CHSH inequality, which is impossible. Thus, the quantum strategy provides an example of Network Nonlocality.
Of course, this embedding can be generalized for a large class of networks to obtain other examples of Network Nonlocality. 
Nevertheless, in such examples several sources and parties of the network only have a classical behavior.
}

{
We believe that our examples of Network Nonlocality are fundamentally different from the above construction.
Let us discuss this difference via the TC example of the ring network given in Section~\ref{sec:AllRingScenariosWithBipartiteSources}. %, in which our quantum TC strategy creates the distribution $P$, associated to the token count coarse-grained distribution $P_\token$.
In this example, after using the rigidity of TC strategies, we considered the case where all the parties' outputs are in the ambiguous case. That is, we assumed that each party receives one token with the ambiguity being in its provenance. We observed that in this case, in any simulating classical strategy all sources must distribute their tokens either in the clockwise or in the counter-clockwise directions, respectively denoted $t=\circlearrowright$ and $t=\circlearrowleft$ here. Remark that the same holds in the initial explicit quantum strategy we considered, where the tokens are now in a \emph{superposition} of those two directions. Indeed, when all the parties project on the subspace of receiving exactly one token, the global entangled state shared between them is proportional to 
\begin{equation}\label{eq:GlobalEntangled}
\ket{\circlearrowright} + \ket{\circlearrowleft}.
\end{equation}
In the classical case, however, $t$ must be a hidden variable that takes one of the values $t=\circlearrowright$ or $t=\circlearrowleft\}$. This is why we introduced the joint distribution $q$ (see Eq.~\eqref{eq:ring-def-q}), aiming to simulate this coherent superposition in a classical incoherent way. Not surprisingly, we demonstrated that the joint distribution $q$, including the hidden variable $t$, cannot exist for appropriate choices of the measurement parameters, and proved Network Nonlocality. 
The same discussion adapts to all our examples in which a more general entangled state $\sum_t\ket{t}$ is created.
}

{
To summarize, the main feature of all our examples of Network Nonlocality is the creation of a global entangled state involving all sources and parties of the networks. This feature is not present in examples of Network Nonlocality via standard Bell's scenarios, in which no  such global entangled state is created.
Note, however, that we do not prove the necessity of the creation of the global entangled state of Eq.~\eqref{eq:GlobalEntangled}. 
Indeed, our proof  does not exclude the possibility of generating the same nonlocal distribution with another \emph{quantum} strategy in which this global entangled state is not present. 
We leave this as an open question for future works. An approach is answer this question could be to find a self-testing proof that shows that the quantum states and measurements used in our protocols are essentially the unique states and measurements that yield the target probability distribution~$P$.
}

We also remark that for some networks such a global coherent state cannot be created.
For instance, let us consider a TC strategy in a network $\cN$ in which the removal of a source $S_i$ creates two disjoint components $\cN_1, \cN_2$. In this case, the total number of tokens sent by $S_i$ to the parties in $\cN_1$ can easily be deduced by looking at the total number of tokens measured by the parties in $\cN_2$.
This property makes the creation of
{
a \emph{global coherent state} similar to the one of~\eqref{eq:GlobalEntangled} in
 $\cN$ via TC impossible.}
%We leave the analysis and classification of the different situations in which such coherence through the whole network allows to obtain network nonlocality for future work. 

\subsection{Experimental realisations?}

%Another open question is the noise tolerance of our methods. Our proofs in the TC case are based on graph theoretical and combinatorial tools which we do not know how can be modified in the presence of noise. In the CM case, however, we use analytic tools (Finner's inequality) which can be adapted to the noise tolerant regime. To this end, the ideas in~\cite{EFKY16} on the stability of the \emph{Loomis-Whitney inequality} (that is a special   case, yet essentially equivalent version, of Finner's inequality) can be used to prove a bound on the noise tolerance of our examples of Network Nonlocality via CM strategies. Indeed, in the proof of Theorem~\ref{app:theorem:CM_Appendix} we use the equality condition of Finner's inequality for certain functions associated to a color  to establish the existence of proper color functions. Now to prove noise tolerance, the equality condition is replaced with an almost equality condition. Then, Corollary~\ref{corollary:CMExtraMeasures} of~\cite{EFKY16} can be used to construct the color functions, and this would be the main step in the proof of noise tolerance.
%Nevertheless, this approach would result in a weak noise tolerance in general, mainly because we are considering the almost equality condition for each color separately.  We believe that to achieve a desirable bound, the ideas of~\cite{EFKY16} should be adopted for the simultaneous consideration of all colors. We leave this as an open question for future work. We also leave it as an open question to find new proof techniques for the TC case that are amenable to noise.

{
Let us now discuss potential noisy experimental implementations. 
Our proofs in the TC case are based on graph theoretical and combinatorial tools which we do not know how can be modified in the presence of noise. 
In the CM case, however, we use analytic tools (Finner's inequality) which can be adapted to the noise tolerant regime, in which the states distributed by the sources may not be pure. 
To this end, the ideas in~\cite{EFKY16} on the stability of the \emph{Loomis-Whitney inequality} (that is a special case, yet essentially equivalent version, of Finner's inequality) can be used to prove a bound on the noise tolerance of our examples of Network Nonlocality via CM strategies. 
Indeed, in the proof of Theorem~\ref{theorem:CM} we use the equality condition of Finner's inequality for certain functions associated to a color to establish the existence of proper color functions. 
Now to prove noise tolerance, the equality condition is replaced with an \emph{almost} equality condition. 
Applying Corollary~\ref{corollary:CMExtraMeasures} of~\cite{EFKY16} directly proves the existence of \emph{approximate} color functions. 
This is the main step in the proof of noise tolerance, showing a noise tolerant version of the property $\mathrm{(i)}$ bellow Theorem~\ref{theorem:CM}.
From this, noisy versions of properties $\mathrm{(ii)}$ and $\mathrm{(iii)}$ can also be deduced, which results in a noise tolerant version of Theorem~\ref{theorem:CM}.
This theorem can then be used to show (in a quantifiable way) that even a noisy quantum CM distribution cannot be simulated classically and is nonlocal. Indeed, once we established a noise tolerant version of Theorem~\ref{theorem:CM}, we can take the same approach as before, and use Claims~\ref{claim:5-0},~\ref{claim:RingScenario} and~\ref{claim:CMKn} as well as Propositions~\ref{propo:TC5-0} and~\ref{propo:RingScenario} (which are noise tolerant) to show that for sufficiently low noise level, the noisy CM distribution is nonlocal.
}

{
Although the above approach does give a noise tolerant rigidity for CM distributions, this direct adaptation of our proof results in an extremely weak (experimentally not realistic) noise tolerance.
It would be desirable to find new proof techniques for the rigidity of CM distributions that are well-adapted in the noisy regime.
}

{
Alternatively, one may consider optimization approaches to estimate this noise tolerance. 
For instance, the recent machine learning algorithms developed in~\cite{Krivvachy2020} already predict an experimentally reasonable noise tolerance for the network nonlocal distribution of~\cite{Renou2019a}, and could directly be adapted to all our examples (for networks of small sizes). 
Note that the noise tolerance deduced from these adapted optimization algorithms would not be a rigorous noise tolerance value, yet it can be used as a benchmark for experiments.
%as these algorithm only do a non-certified optimization over the potential classical strategies (see Eq.~\eqref{eq:ClassicalModel}).  However, for a concrete proof of concept in an experiment which would anyway involve other experimental assumptions not rigorously provable, it might be convincing enough to rely on our rigorous proof for the ideal case (and extremely weak noisy case), and on this estimated noise tolerance.  This would require a high level of confidence in the optimization algorithm used: such discussions are beyond the scope of our work.
}

\subsection{Conclusion}

Finally, we contemplate our two examples of TC and CM strategies as the first examples of a potential general method to derive Network Nonlocality based on combinatorial primitives. We discussed that in our TC example in the ring network, the creation of the superposition of two orientations, \emph{associated with giving a direction to each edge} in the ring graph, is the origin of network nonlocality. Moreover, we observed that nonlocality in the example of Section~\ref{sec:GraphColoring} is emerged from the coherent superposition of proper colorings of the complete graph. Orientations of the edges of a graph, and proper colorings of a graph may be the first examples of a general method based on combinatorial primitives in networks, whose coherent superposition leads to Network Nonlocality. 

\medskip
\textit{Acknowledgements.}
We thank Antonio Ac\'in and Nicolas Gisin for discussions.
M.-O.R. is supported by the Swiss National Fund Early Mobility Grants P2GEP2\_19144 and the grant PCI2021-122022-2B financed by MCIN/AEI/10.13039/501100011033 and by the European Union NextGenerationEU/PRTR, and acknowledges the Government of Spain (FIS2020-TRANQI and Severo Ochoa CEX2019-000910-S [MCIN/ AEI/10.13039/501100011033]), Fundació Cellex, Fundació Mir-Puig, Generalitat de Catalunya (CERCA, AGAUR SGR 1381) and the ERC AdG CERQUTE.

\bibliographystyle{apsrev4-2}
\bibliography{references}

\begin{appendix}

\section{Rigidity of TC distributions}\label{app:sec:TC}

Recall that a network consists of sources $S_1, \dots,  S_{I}$ and parties $A_1, \dots,  A_J$ in which $S_i\to A_j$ (or $i\to j$ when there is no confusion) means that the source $S_i$ is connected to the party $A_j$. 
In a classical strategy for such a network each source $S_i$ takes a value $s_i\in \mathcal{S}_i$ with some fixed distribution over $\mathcal S_i$ and sends $s_i$ to its connected parties. Then party $A_j$ computes a function $a_j=a_j(\{s_j: ~ S_j\to A_i\})$ of all received messages as her output. 
{
In a TC distribution the output of $A_j$ is a pair $a_j=(n_j,\alpha_j)$: the token count number denoted by $n_j$ and the other part denoted by $\alpha_j$, both being functions of $\{s_j: ~ S_j\to A_i\}$.
} 
%In a CM distribution we introduce the post processed output $c_j\in\{1,\cdots,C,\chi\}$, where $\chi$ collect all output of $A_j$ not corresponding to a color: $c_j=a_j$ when $a_j\in\{1,\cdots,C\}$ and $c_j=\chi$ otherwise.

Here, we first prove Theorem~\ref{theorem:TC} of the main text which we rephrase for convenience. 

\begin{theorem}\label{app:theorem:TC}
Let $\cN$ be a No Double Common-Source network with parties $A_1, \dots,  A_{J}$ and sources $S_1, \dots,  S_{I}$. Fix a strategy in which the source $S_i$ distributes $\eta_i$ tokens and let $P=\{P((n_1, \alpha_1), \dots,  (n_J, \alpha_J))\}$ be the corresponding TC distribution over $\cN$. For any possible token distribution $\{t_{i}^j: S_i\to A_j\}$ with $\sum_{j: S_i\to A_j} t_i^j=\eta_i$, let $q_i(\{t_{i}^j: S_i\to A_j\})$ be the probability that in this strategy $S_i$ distributes $\{t_i^j\}_{j:i \to j}$ tokens to the set of parties $\{A_j\}_{j:i \to j}$ connected to it. 

Now consider another strategy that simulates $P=\{P((n_1, \alpha_1), \dots,  (n_J, \alpha_J))\}$ on $\cN$. Then, this strategy is a TC strategy with the same distribution of tokens as before. More precisely, for any strategy that simulates $P$ there are functions $T_i^j: \mathcal S_i\to \mathbb Z_{\geq 0}$ for any $S_i\to A_j$ such that 
\begin{enumerate}
\item[{\rm (i)}] $\sum_{j: S_i\to A_j} T_i^j(s_i)=\eta_i$ for all $S_i$ and $s_i\in \mathcal S_i$.
\item[{\rm (ii)}] $n_j(\{s_i:~S_i\rightarrow A_j\}) = \sum_{i:S_i\rightarrow A_j} T_i^j(s_i)$. 
\item[{\rm (iii)}] For any source $S_i$ and any $\{t_{i}^j:\, S_i\to A_j \}$ with $\sum_{j: S_i\to A_j} t_i^j=\eta_j$ we have 
$$\Pr_{s_i}[T_i^j(s_i)=t_i^j,\, \forall j:\, S_i\to A_j] = q_i(\{t_{i}^j: S_i\to A_j\}).$$
\end{enumerate}
\end{theorem}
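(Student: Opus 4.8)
The plan is to reconstruct the token functions $T_i^j$ directly from the simulating classical strategy, exploiting the three combinatorial facts highlighted after the theorem statement: conservation of the total token count, locality of each source's influence, and the NDCS separation property. First I would fix the simulating classical strategy, with source value sets $\cS_i$ and output functions $a_j(\{s_i:i\to j\}) = (n_j, \alpha_j)$. From Remark~\ref{remark:TC} the token-count marginal $P_\token$ is supported on configurations with $\sum_j n_j = \sum_i \eta_i$; since every source value $s_i$ with positive probability can be freely combined with any other positive-probability source values (the sources are independent), we get that $\sum_j n_j(\{s_i:i\to j\}) = \sum_i\eta_i$ \emph{identically} on the product of the supports. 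This is the first ingredient.

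Next I would isolate the effect of a single source. Fix $S_i$ and two of its values $s_i, s_i'$. Changing $s_i$ to $s_i'$ while holding all other sources fixed can only change the outputs of parties $A_j$ with $i\to j$; summing the identity $\sum_j n_j = \text{const}$ before and after the swap gives $\sum_{j:i\to j}\big(n_j(s_i\text{-version}) - n_j(s_i'\text{-version})\big) = 0$. The key point is to upgrade this ``sum is zero'' statement to ``each term is determined by $s_i$ alone.'' Here the NDCS hypothesis enters: for a fixed $S_i$ and a fixed party $A_j$ with $i\to j$, I can choose the messages of all \emph{other} sources feeding $A_j$ to take any jointly-attainable values without forcing any constraint on the messages received by the other parties $A_{j'}$, $j'\ne j$, with $i\to j'$ — because no second source is common to $A_j$ and $A_{j'}$. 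By varying $s_i$ alone and watching $n_j$ while the rest of $A_j$'s input is pinned, and doing this for each such $j$, one shows that the increment $n_j(s_i') - n_j(s_i)$ depends only on $(s_i, s_i')$, not on the pinned background; define $T_i^j(s_i) := n_j$ evaluated with $s_i$ and with all other inputs of $A_j$ frozen to a fixed reference. A compatibility/telescoping check then gives $n_j(\{s_i:i\to j\}) = \sum_{i:i\to j} T_i^j(s_i) + c_j$ for a constant $c_j$; absorbing constants and using that $n_j\ge 0$ takes its minimum value $0$ somewhere forces $c_j=0$ after the standard relabelling, yielding (ii). Feeding (ii) back into $\sum_j n_j = \sum_i\eta_i$ and again using source independence gives that $\sum_{j:i\to j} T_i^j(s_i)$ is a constant in $s_i$; after the relabelling of the information sent by the sources this constant is $\eta_i$, which is (i).

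Finally, for (iii): since by (i)–(ii) the simulating strategy \emph{is} a TC strategy with token assignment $s_i\mapsto (T_i^j(s_i))_{j:i\to j}$, the joint distribution of $(n_1,\dots,n_J)$ it produces equals $P_\token$, the same as in the original strategy. Both are pushforwards, under the same deterministic map $\{t_i^j\}\mapsto\{n_j = \sum_i t_i^j\}$, of the product measure on the per-source token patterns; the original has per-source law $q_i$ and the new one has per-source law $\Pr_{s_i}[T_i^j(s_i)=t_i^j\,\forall j]$. It remains to argue this map is injective enough to pull back equality of the pushforwards to equality of the per-source laws. This is where NDCS is used once more: the incidence structure of an NDCS network lets one recover, from the full list $(n_j)_j$ over suitable sub-configurations, the individual token patterns of each source (one can localize by freezing all-but-one source as above), so the per-source laws must coincide, giving (iii).

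\textbf{Main obstacle.} I expect the technical heart to be the second paragraph: promoting the global conservation identity to a clean, background-independent definition of $T_i^j$ and verifying the telescoping consistency so that $n_j$ is \emph{exactly} the sum $\sum_{i:i\to j} T_i^j(s_i)$ with no leftover cross-terms. The NDCS condition is precisely what rules out such cross-terms — without it (cf. Figure~\ref{fig:TCNotCompatible}) a pair of parties sharing two sources can trade tokens in a correlated way that no single-source function captures — so the delicate part is to deploy NDCS at exactly the right places: once to decouple the variations seen by different parties adjacent to $S_i$, and once more to recover per-source token laws for (iii). The conservation and locality ingredients (first and last sentences of the theorem's proof sketch) are comparatively routine.
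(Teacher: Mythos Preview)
Your outline for (i)--(ii) is essentially the paper's argument: fix a reference value for each source, define a finite-difference $R_i^j$ measuring how $n_j$ changes when $s_i$ is swapped, use total-token conservation to get $\sum_{j:i\to j} R_i^j=0$, and invoke NDCS (each other source variable appears in at most one summand) to conclude each $R_i^j$ depends on $s_i$ alone; then telescope. Two normalization points you glossed over do require care: the minimum of $n_j$ need not be $0$ in general (the paper offsets by $\ell_i^j=\min\{\eta_i^j:q_i(\eta_i^j)>0\}$ to keep $T_i^j\ge 0$ while making the additive constant vanish), and the constant $\sum_{j:i\to j}T_i^j(s_i)$ is not fixed to $\eta_i$ by a ``relabelling'' --- the paper pins it down via a separate NDCS minimization showing $\eta_i\le \sum_{j:i\to j}T_i^j(s_i)$ for each $i$, then sums. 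These are fixable within your framework.

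The real gap is your sketch for (iii). The map $\{t_i^j\}_{i,j}\mapsto (n_j)_j$ is \emph{not} injective, so equality of the pushforward $P_\token$ does not by itself force equality of the per-source laws. Your ``freezing all-but-one source'' idea works for the \emph{deterministic} reasoning in (i)--(ii), where you actually control the $s_{i'}$'s, but it does not carry over to a distributional statement: you only observe the outputs, not the source values, so you cannot freeze anything. The paper's proof of (iii) (its Lemma~\ref{app:lem:token-dist-match}) proceeds instead by induction on the number of parties: pick a party $A_{j_0}$ \emph{not} adjacent to the source $S_i$ under consideration, condition on $n_{j_0}=n_{j_0}^{\min}$ (which forces each source feeding $A_{j_0}$ to its minimal contribution while leaving $S_i$ unconstrained), and observe that the conditioned distribution is again a TC distribution on the smaller NDCS network $\cN\setminus\{A_{j_0}\}$. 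This is the missing mechanism that lets you isolate one source's law from the observable token counts.
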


This theorem says that in any strategy that simulates $P=\{P((n_1, \alpha_1), \dots,  (n_J, \alpha_J))\}$, any symbol $s_i$ distributed by $S_i$ corresponds to sending $T_i^j(s_i)$ tokens to $A_j$ if $S_i\to A_j$. By (i) the total number of distributed tokens by $S_i$ equals $\eta_i$. Next, by (ii) each party to generate her first part of the output simply counts the number of received tokens. This mean that it is a TC strategy. Finally (iii) says that any TC strategy that simulates $P$ must distribute tokens with the same probability distribution as in the original strategy.

\begin{proof}

Fix some $r_i\in \cS_i$ for any source $S_i$. For any $S_i\to A_j$ define
\begin{align*}
R_i^j(\{s_{i'}:\,S_{i'}\rightarrow A_j\}) = n_j(\{s_{i'}&:\,S_{i'}\rightarrow A_j\}) \\
&- n_j(\{\hat s_{i'}^i:\, S_{i'}\rightarrow A_j\}),
\end{align*}
where 
\begin{align*}
\hat s_{i'}^i = \begin{cases}
s_{i'} \quad i'\neq i,\\
r_i \quad i'=i.
\end{cases}
\end{align*}
We note that $R_i^j$ computes the difference of the token count number of party $A_j$ when her message from $S_i$ is changed from $s_i$ to $r_i$ while other messages remain the same. In some sense, $R_i^j$ is the derivative of $n_j$ with respect to the $i$-th message.  

Observe that by changing $s_i$ to $r_i$ while leaving the other messages the same, only the outputs of parties connected to $S_i$ may change. Moreover, as a TC distribution, if $P((n_1, \alpha_1), \dots,  (n_J, \alpha_J))>0$, the total number of tokens $\sum_j n_j = \sum_i \eta_i$ is fixed independent of messages. Therefore, we have
\begin{align}\label{app:eq:t-conservation}
\sum_{j: S_i\to A_j} R_i^j(\{s_{i'}:\,S_{i'}\rightarrow A_j\}) =0, \qquad \forall S_i.
\end{align}

Let $S_{i'}\neq S_i$. Recall that by assumption $S_{i'}$ cannot share more than one connected party with $S_i$. This means that $s_{i'}$ appears  \emph{at most once} in the right hand side of~\eqref{app:eq:t-conservation}. Therefore, since the left hand side is a constant, all the terms are independent of $s_{i'}$. This means that 
$$R_i^j(\{s_{i'}:\,S_{i'}\rightarrow A_j\}) = R_i^j(s_{i}),$$
is a function of $s_i$ only and is independent of other arguments. 

Next, using the definition of $R_i^j$ we have
$$n_j(\{s_{i'}:S_{i'}\rightarrow A_j\}) =
 R_i^j(s_{i})+n_j(\{\hat s_{i'}^i: S_{i'}\rightarrow A_j\}).$$
Writing down the same equation for $n_j(\{\hat s_{i'}^i: S_{i'}\rightarrow A_j\})$ with respect to another source, and replacing $s_{i'}$'s with $r_{i'}$'s one by one, we find that 
\begin{align}\label{app:eq:R-i-j}
n_j(\{s_{i}:S_{i}\rightarrow A_j\}) = \sum_{i:S_{i}\to A_j} R_i^j(s_i) + n_j(\{r_{i}:S_{i}\rightarrow A_j\}).
\end{align}

For any source $S_i$ and party $A_j$ with $S_i\to A_j$ let 
$$\ell_i^j = \min\{\eta_i^j:\, q_i(\eta_i^j)>0\},$$
Then, we have 
$$ n_j^{\min} = \sum_{i: S_i\to A_j} \ell_i^j.$$
where $n_j^{\min} = \min\{n_j:\, P(n_j)>0\}$. Then, taking the minimum of both sides in~\eqref{app:eq:R-i-j} we find that 
$$\sum_{i:S_{i}\to A_j}  \min_{s_i} R_i^j(s_i) + n_j(\{r_{i}:S_{i}\rightarrow A_j\})=n_j^{\min}.$$
Therefore, letting 
$$T_i^j(s_i) = R_i^j(s_i) - \min_{s'_i}  R_i^j(s'_i) +\ell_i^j,$$
we fine that 
$$n_j(\{s_{i}:S_{i}\rightarrow A_j\}) = \sum_{i:S_{i}\to A_j} T_i^j(s_i).
$$
We also note that by definition, $R_i^j$ and then $T_i^j$ take integer values and we have $T_i^j(s_i) = R_i^j(s_i) - \min_{s'_i}  R_i^j(s'_i) +\ell_i^j\geq \ell_i^j\geq 0$. These give 
(ii).

We now prove (i). Fix a source $S_i$. We compute
\begin{align*}
\sum_{j: S_i\to A_j}& n_j(\{s_{i'}:S_{i'}\rightarrow A_j\})  \\
&= \sum_{j: S_i\to A_j} \sum_{i':S_{i'}\to A_j} T_{i'}^j(s_{i'})\\
&=\sum_{j: S_i\to A_j} T_i^j(s_i) + \sum_{j: S_i\to A_j} \sum_{i'\neq i:S_{i'}\to A_j} T_{i'}^j(s_{i'}).
\end{align*}
Take the minimum of both sides over all $s_{i'}$'s with $i'\neq i$. Since $P(n_j=0)>0$, we know that $\min_{s_{i'}} T_{i'}^j(s_{i'})=0$. Moreover, by the NDCS assumption, any $s_{i'}$ for $i'\neq i$ appears only once in the right hand side. Therefore, we have
\begin{align*}
\min_{s_{i'}: i'\neq i}\sum_{j: S_i\to A_j}& n_j(\{s_{i'}:S_{i'}\rightarrow A_j\})  &=\sum_{j: S_i\to A_j} T_i^j(s_i). 
\end{align*}
Observe that as a token counting distribution in which $S_i$ distributes $\eta_i$ tokens, the left hand side is at least $\eta_i$. Therefore, 
\begin{align}\label{app:eq:eta-i-ineq}
\eta_i\leq \sum_{j: S_i\to A_j} T_i^j(s_i). 
\end{align}
Summing the above inequality over all sources $S_i$ and rearranging the sum, we find that
\begin{align*}
\sum_i \eta_i\leq \sum_i \sum_{j: S_i\to A_j} T_i^j(s_i)=  \sum_{j} n_j(\{s_{i}:S_{i}\rightarrow A_j\}). 
\end{align*}
We note that the right hand side is the total number of tokens. Thus, we have equality here and in~\eqref{app:eq:eta-i-ineq} for any $i$. This gives (i).

Part (iii) is proven in Lemma~\ref{app:lem:token-dist-match} below.

\end{proof}

\begin{lemma}\label{app:lem:token-dist-match}
Let $\cN$ be a No Double Common-Source network with parties $A_1, \dots,  A_{J}$ and sources $S_1, \dots,  S_{I}$. Let $P=\{P((n_1, \alpha_1), \dots,  (n_J, \alpha_J))\}$ be a TC distribution over $\cN$ in which the source $S_i$ distributes $\eta_i$ tokens. Consider two TC strategies for simulating $P$ on $\cN$ that satisfy parts (i) and (ii) of Theorem~\ref{app:theorem:TC}. More precisely, we assume that there are sets $\mathcal S_i^{(u)}$, for $u=1, 2$, and functions $T_i^{(u),j}: \mathcal S_i^{(u)}\to \mathbb Z_{\geq 0}$ for any $S_i\to A_j$ such that (i) and (ii) hold and for any $(n_1, \dots,  n_J)$ we have
$$\Pr\Big[\sum_{i: S_i\to A_j} T_i^{(u), j}(s_i^{(u)}) = n_j,~ \forall j\Big] = P_\token(n_1, \dots,  n_J). $$
Then, for any $S_i$ and $\{t_i^j:\, S_i\to A_j\}$ we have
\begin{align*}
\Pr\big[T_i^{(1),j}(s_i^{(1)})=&t_i^j,\, \forall j:\, S_i\to A_j\big]\\ 
&= \Pr\big[T_i^{(2),j}(s_i^{(2)})=t_i^j,\, \forall j:\, S_i\to A_j\big].
\end{align*}

\end{lemma}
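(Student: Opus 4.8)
The plan is to show that the joint distribution of the tuple $\big(T_i^{(u),j}(s_i^{(u)})\big)_{j:\,S_i\to A_j}$ is completely determined by the token-count marginal $P_\token$, independently of which simulating TC strategy $u\in\{1,2\}$ we use. The key structural fact I would exploit is the NDCS hypothesis, which guarantees that the sources connected to a fixed source $S_i$ (meaning the sources sharing a party with $S_i$) can be manipulated independently: no value $s_{i'}$ with $i'\neq i$ constrains two of the parties adjacent to $S_i$ simultaneously. This is the same leverage used in the proof of Theorem~\ref{app:theorem:TC}, and it lets me ``isolate'' the behaviour of $S_i$ from inside the global distribution $P_\token$.

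Concretely, fix a source $S_i$ and let $A_{j_1},\dots,A_{j_m}$ be the parties adjacent to it. First I would argue that, conditioned on the event that the total token contribution \emph{into each $A_{j_r}$ from all sources other than $S_i$} takes its minimal possible value, the residual token counts $n_{j_1},\dots,n_{j_m}$ are exactly $\ell_i^{j_r}+$(the tokens $S_i$ actually sends to $A_{j_r}$). Here I use parts (i) and (ii) of Theorem~\ref{app:theorem:TC}, together with the observations from its proof that $\min_{s_{i'}}T_{i'}^{j}(s_{i'})=0$ for $i'\neq i$ and that these minima over the distinct $s_{i'}$'s can be attained \emph{simultaneously} precisely because of the NDCS property. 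So the event $\{T_i^{(u),j_r}(s_i^{(u)})=t_i^{j_r}\ \forall r\}$, after the appropriate shift, is recovered as a section of the event $\{n_{j_1}=\cdot,\dots,n_{j_m}=\cdot,\ \text{other parties arbitrary}\}$ inside the sample space of $P_\token$. More carefully, I would extract it as a limit/quotient: express $\Pr[T_i^{(u),j_r}=t_i^{j_r}\ \forall r]$ as a suitable signed combination (an inclusion–exclusion over the ``at least $k$ non-$S_i$ tokens'' thresholds) of probabilities of the form $P_\token(\text{prescribed token counts on } A_{j_1},\dots,A_{j_m})$, marginalized over the remaining parties. Since this combination involves only $P_\token$ and the fixed numbers $\eta_i,\ell_i^{j_r}$ — which are themselves determined by $P_\token$ as $\ell_i^{j_r}$ comes out of $n^{\min}_{j_r}=\sum_{i'}\ell_{i'}^{j_r}$ and the NDCS decomposition — the resulting value is the same for $u=1$ and $u=2$.

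I expect the main obstacle to be making the ``isolate $S_i$ from everything else'' step fully rigorous: I need to know that the minimal-contribution event on the non-$S_i$ side has positive probability and can be conditioned on cleanly, and that conditioning on it leaves $S_i$'s own randomness untouched (independence of sources). The cleanest way is probably not conditioning at all but rather an inductive/telescoping argument mirroring the derivation of~\eqref{app:eq:R-i-j}: peel off the sources $S_{i'}\neq S_i$ one at a time, at each stage using that $s_{i'}$ influences at most one $n_{j_r}$, until only $S_i$'s contribution remains visible in the relevant marginal of $P_\token$. Once the joint law of $\big(T_i^{(u),j}(s_i^{(u)})\big)_j$ is pinned to an expression in $P_\token$ alone, the two strategies must agree, giving part (iii) of Theorem~\ref{app:theorem:TC} and completing the proof.
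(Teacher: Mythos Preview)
Your proposal looks at the parties \emph{adjacent} to $S_i$ and tries to read off the law of $\big(T_i^{(u),j_r}\big)_r$ from the marginal $P_\token(n_{j_1},\dots,n_{j_m})$. The difficulty you flag is real and, as written, is a genuine gap. By NDCS the marginal factors as a convolution
\[
(n_{j_1},\dots,n_{j_m}) \;\stackrel{d}{=}\; \big(T_i^{(u),j_1}+Y^{(u)}_1,\dots,T_i^{(u),j_m}+Y^{(u)}_m\big),
\]
with $Y^{(u)}_r=\sum_{i'\neq i,\, i'\to j_r} T_{i'}^{(u),j_r}$ independent of one another and of $S_i$. To deconvolve you would need the shifts $y_r^{\min}$ (equivalently the $\ell_{i'}^{j_r}$) and the mass $\Pr[Y^{(u)}_r=y_r^{\min}]$, but these are themselves strategy-dependent quantities. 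Your claim that $\ell_i^{j_r}$ ``comes out of $n^{\min}_{j_r}=\sum_{i'}\ell_{i'}^{j_r}$ and the NDCS decomposition'' is not justified: that is one linear equation in several unknowns, and pinning the individual minima down is precisely a special case of the uniqueness you are trying to prove. The ``peel off sources one at a time'' suggestion does not help either, because a source's behaviour is not an event in the $\sigma$-algebra generated by $P_\token$, so there is nothing observable to condition on.

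The paper's argument avoids this by looking in the opposite direction: it runs an induction on the number of \emph{parties}, removing a party $A_{j_0}$ \emph{not} connected to $S_i$. Conditioning on the observable event $\{n_{j_0}=n_{j_0}^{\min}\}$ restricts only sources adjacent to $A_{j_0}$ and therefore leaves $S_i$'s randomness completely untouched; deleting $A_{j_0}$ then yields a smaller NDCS network on which both strategies still simulate the conditional TC distribution, and the induction hypothesis finishes. The point you are missing is that the right object to condition on is an event at a party \emph{away} from $S_i$, which is visible in $P_\token$ and decouples cleanly, rather than an event about the sources neighbouring $S_i$, which is not.
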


\begin{proof}
We prove the lemma by induction on $J$, the number of parties.  Observe that if $I=1$, i.e., there is a single source, the marginal distribution of outputs over the tokens $P_\token(n_1, \dots,  n_J)$ equals $\Pr(T_1^{(u),1}=n_1, \dots,  T_1^{(u),J}=n_J)$ in which case there is nothing to prove. Thus, we assume that there are at least two sources. 

Let $S_i$ be an arbitrary source and let $A_{j_0}$ be a party \emph{not} connected to it. (Note that if all parties are connected to $S_i$, by the NDCS assumption $S_i$ would be the unique source.) 
Let 
$$n_{j_0}^{\min} = \min\{n_{j_0}:\, P(n_{j_0})>0\},$$
be the minimum number of tokens that can be sent to $A_{j_0}$.
For any $S_{i'}$ with $S_{i'}\to A_{j_0}$ let 
$$\ell_{i'}^{(u), {j_0}} = \min_{s_{i'}^{(u)}} T_{i'}^{(u),{j_0}}(s_{i'}^{(u)}).$$
Then, $n_{j_0}^{\min} = \sum_{i': S_{i'}\to A_{j_0}} \ell_{i'}^{(u),{j_0}}$ and we have 
\begin{align}\label{app:eq:prop-n-j-max}
P_\token(n_{j_0}=n_{j_0}^{\min}) & = \prod_{i': S_{i'}^{(u)}\to A_{j_0}}  \Pr[\,T_{i'}^{(u),{j_0}}(s_{i'}) = \ell_{i'}^{(u),{j_0}}].
\end{align}

Let $\hat \cN$ be the network obtained by removing $A_{j_0}$ from $\cN$. We note that $\hat \cN$ is also a NDCS network. Let  $\hat P[ n_{j'}:\, j'\neq j]$ be the distribution on the outputs of $\hat \cN$ given by
$$\hat P[ n_{j'}:\, j'\neq j] = P_\token[ n_{j'}:\, j'\neq j_0 |\, n_{j_0}= n_{j_0}^{\min}].$$
We claim that $\hat P$ is again a TC distribution. Indeed, we claim that any of the two TC strategies for simulating $P_\token$ in the statement of the lemma, can be reduced to a TC strategy for simulating $\hat P$. 
To prove this, assume that a source $S_{i'}$ with $S_{i'}\to A_{j_0}$ only takes values $s_{i'}^{(u)}$ with $T_{i'}^{(u),j_0}(s_{i'}^{(u)})=\ell_{i'}^{(u),j_0}$. We assume that $S_{i'}$ takes such a value $s_{i'}^{(u)}$ with the conditional probability $\Pr[s_{i'}^{(u)}| T_{i'}^{(u),j_0}(s_{i'}^{(u)})=\ell_{i'}^{(u),j_0}]$. Sources not connected to $A_{j_0}$ and other parties behave as before. Then, using~\eqref{app:eq:prop-n-j-max} it is not hard to verify that the output distribution with this strategy equals $\hat P$. 

Therefore, we obtain two strategies for simulating the token counting distribution $\hat P$ on $\hat \cN$.  Now, since the number of parties in $\hat \cN$ is less than $I$, by the induction hypothesis the probability of distributing the tokens in the two strategies coincide. We note that $S_i$ was not connected to $A_{j_0}$ and its behavior does not change in the new strategies. Therefore, we have 
\begin{align*}
\Pr\big[T_i^{(1),j}(s_i^{(1)})=&t_i^j,\, \forall j:\, S_i\to A_j\big]\\ 
&= \Pr\big[T_i^{(2),j}(s_i^{(2)})=t_i^j,\, \forall j:\, S_i\to A_j\big],
\end{align*}
as desired.

\end{proof}

We now give the proof of Corollary~\ref{corollary:TCExtraMeasures} of the main text, which we rephrase for convenience:

%\begin{corollary}[Refined measurements in  TC]
%\emph{Consider a quantum TC strategy in an NDCS network $\cN$. Consider a party $A_j$ and for every $S_i\to A_j$ fix a token number $\eta_i^j$. Assume that $\ket{\eta^j_i:\, i\to j}$ belongs to the measurement basis of $A_j$. Then $\ket{\eta^j_i:\, i\to j}$ is rigid for $A_j$.
%}\end{corollary}

\TCExtraMeasures*

\begin{proof}
We use the notation of Theorem~\ref{app:theorem:TC}.  We need to show that $\alpha_j = \{\eta^j_i:\, i\to j\}$ if and only if $T_i^j(s_i) = \eta_i^j$ for any $i\to j$. Suppose that  $A_j$ outputs $\alpha_j = \{\eta^j_i:\, i\to j\}$ and fix some source $S_{i}$ with $i\to j$. Suppose that $\eta_i^j> T_i^j(s_i)$ (the other case is similar).
Let $A_{j_1}, \dots, A_{j_k}$ be other parties connected to $S_{i}$. Suppose that the messages $s_{i'}$ of other sources $S_{i'}$ that are not connected to $A_j$, are chosen such that the sum of tokens received by $A_{j_1}, \dots, A_{j_k}$ from those sources is maximized. We note that by the NDCS assumption, these choices of $s_{i'}$'s do not affect the output of $A_j$. Let $m$ be this maximum number. Then, $m$ tokens are sent to $A_{j_1}, \dots, A_{j_k}$ by sources $S_{i'}\neq S_i$ and $\eta_i - T_i^j(s_i)$ tokens are sent by $S_i$. Therefore, we have
$$\sum_{\ell=1}^k n_{j_\ell} = m + \eta_i -T_i^j(s_i).$$
Now, $A_j$ claims that she has received $\eta_i^j$ tokens from $S_i$. This means that the sum of tokens received by $A_{j_1}, \dots, A_{j_k}$ and those received by $A_j$ from $S_i$ equals 
$$m + \eta_i -T_i^j(s_i) +\eta_i^j > m+\eta_i.$$
This is a contradiction since $m$ is the maximum number of possible tokens that can be ever sent to $A_{j_1}, \dots, A_{j_k}$ from sources $S_{i'}\neq S_i$, and $\eta_i$ is the number of tokens of $S_i$. This shows that $A_j$ outputs $\alpha_j = \{\eta^j_i:\, i\to j\}$ only if $T_i^j(s_i) = \eta_i^j$. For the other direction, that $\alpha_j = \{\eta^j_i:\, i\to j\}$ whenever $T_i^j(s_i) = \eta_i^j$, consider the probability of $\alpha_j = \{\eta^j_i:\, i\to j\}$.

\end{proof}

%****************************************************
\section{Rigidity of CM distributions}\label{app:sec:CM}

In this section, we prove Theorem~\ref{theorem:CM} of the main text. 
Our proof relies on the Finner inequality and its equality condition \cite{Finner1992}. For self-containment, we reproduce a simplified version here with the network terminology, where we only specify the equality condition for indicator functions. 
\begin{theorem*}[Finner's inequality]
Let $\cN$ be a network admitting a PFIS by assigning $0<x_j<1$ to party $A_j$.
For any party $A_j$ let $g_j(\{s_i: i\rightarrow j\})$ be a real function of the messages she receives. Then, we have:
\begin{equation}\label{app:eq:finner-ineq-thm}
\E\Big[  \prod_j |g_j| \Big]\leq \prod_j \left(\E\big[  |g_j|^{\frac 1{x_j}}\big] \right)^{x_j}.
\end{equation}
In case of indicator functions $g_j(\{s_i: i\rightarrow j\})\in\{0,1\}$, equality holds in~\eqref{app:eq:finner-ineq-thm} if and only if there exist \emph{indicator} functions $\phi_i(s_i)\in\{0,1\}$ such that 
\begin{equation}\label{app:eq:finner-eq-thm}
g_j(\{s_i: i\rightarrow j\})=\prod_{i:i\rightarrow j} \phi_i (s_i), \quad \forall j.
\end{equation}
\end{theorem*}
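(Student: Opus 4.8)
The plan is to prove the inequality by induction on the number of sources $I$, peeling them off one at a time with a one-variable application of the generalized Hölder inequality, and then to read off the equality case from the tightness of each step. Fix a source, say $S_I$, and let $J_I=\{j:\,I\to j\}$ be the parties it feeds. Using independence of the $s_i$'s and the fact that $g_j$ for $j\notin J_I$ does not involve $s_I$,
\begin{equation*}
\E\Big[\prod_j|g_j|\Big]=\E_{\{s_i:\,i\neq I\}}\Big[\prod_{j\notin J_I}|g_j|\cdot\E_{s_I}\Big[\prod_{j\in J_I}|g_j|\Big]\Big].
\end{equation*}
Since $\sum_{j\in J_I}x_j=1$ by the PFIS property, the exponents $1/x_j$ ($j\in J_I$) are conjugate, so the generalized Hölder inequality in the single variable $s_I$ gives
\begin{equation*}
\E_{s_I}\Big[\prod_{j\in J_I}|g_j|\Big]\le\prod_{j\in J_I}\big(\E_{s_I}[|g_j|^{1/x_j}]\big)^{x_j}=:\prod_{j\in J_I}h_j,
\end{equation*}
where each $h_j$ no longer depends on $s_I$.

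Now I apply the induction hypothesis to the network $\hat\cN$ obtained by deleting $S_I$, with party functions $g_j$ for $j\notin J_I$ and $h_j$ for $j\in J_I$: the point is that $\hat\cN$ still admits the \emph{same} PFIS, since deleting a source does not change the edges of the remaining sources. This bounds the expression by $\prod_{j\notin J_I}(\E[|g_j|^{1/x_j}])^{x_j}\cdot\prod_{j\in J_I}(\E[h_j^{1/x_j}])^{x_j}$, and since $\E[h_j^{1/x_j}]=\E_{\{s_i:\,i\neq I\}}\big[\E_{s_I}[|g_j|^{1/x_j}]\big]=\E[|g_j|^{1/x_j}]$, the bound telescopes to $\prod_j(\E[|g_j|^{1/x_j}])^{x_j}$. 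The base case (no source, or a single party) is trivial.

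For the equality condition, the direction "$\Leftarrow$" is a direct computation: if $g_j=\prod_{i:\,i\to j}\phi_i(s_i)$ with $\phi_i\in\{0,1\}$, then $\prod_j g_j=\prod_i\phi_i(s_i)^{d_i}=\prod_i\phi_i(s_i)$ (writing $d_i$ for the degree of $S_i$ and using $\phi_i^{d_i}=\phi_i$ for $\{0,1\}$-valued functions), so $\E[\prod_j|g_j|]=\prod_i\E[\phi_i]$ by independence; on the other hand $\E[|g_j|^{1/x_j}]=\E[g_j]=\prod_{i:\,i\to j}\E[\phi_i]$, hence $\prod_j(\E[|g_j|^{1/x_j}])^{x_j}=\prod_i\E[\phi_i]^{\sum_{j:\,i\to j}x_j}=\prod_i\E[\phi_i]$, and the two sides coincide.

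For "$\Rightarrow$" I would track tightness through the induction: equality globally forces equality in every Hölder step and in the inductive application. Equality in the $s_I$-Hölder step forces, for almost every fixing of the other variables, the $\{0,1\}$-valued functions $s_I\mapsto g_j(s_I,\cdot)$, $j\in J_I$, to be pairwise proportional — hence equal, away from trivial vanishing cases, to a common indicator of $s_I$; this is the candidate $\phi_I$. Equality in the reduced problem gives, by induction, $g_j=\prod_{i:\,i\to j}\phi_i(s_i)$ for $j\notin J_I$ and $h_j=\prod_{i:\,i\to j,\,i\neq I}\phi_i(s_i)$ for $j\in J_I$, which combined with $\phi_I$ reconstructs $g_j$ for $j\in J_I$. \textbf{The main obstacle} is exactly this reconstruction: one must verify that the common indicator extracted from the Hölder equality truly depends on $s_I$ alone (and not, via the conditioning, on the other variables feeding $J_I$) and that the single-variable factors obtained while peeling different sources are globally consistent; handling the measure-zero or identically-vanishing exceptional fixings is the fiddly technical point. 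Here one uses essentially that the $g_j$ are indicators and that the weights around every source sum to exactly $1$, so that no residual averaging spoils the clean product structure.
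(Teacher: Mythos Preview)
The paper does not actually prove Finner's inequality: it is stated in Appendix~\ref{app:sec:CM} with a citation to~\cite{Finner1992} (``For self-containment, we reproduce a simplified version here with the network terminology''), and then used as a black box in the proof of Theorem~\ref{app:theorem:CM_Appendix}. So there is no in-paper proof to compare against.

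That said, your inductive peeling argument is the standard proof and is correct for the inequality. The key observation---that the PFIS condition $\sum_{j:i\to j}x_j=1$ makes the exponents $1/x_j$ conjugate for the single-variable H\"older step at each source---is exactly right, and the telescoping via $\E[h_j^{1/x_j}]=\E[|g_j|^{1/x_j}]$ is clean. Your remark that the reduced network inherits the same PFIS (removing a source only removes one constraint, the others are untouched) is also correct.

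For the equality case, your ``$\Leftarrow$'' computation is fine. Your ``$\Rightarrow$'' sketch is honest about the obstacle, and that obstacle is real: when you extract the common indicator from the H\"older equality step at $S_I$, it is \emph{a priori} a function of $s_I$ \emph{and} of the conditioning variables, and one must argue separately (using the inductive hypothesis on the $h_j$'s and the fact that the factorization must be consistent across fixings) that it collapses to a function of $s_I$ alone. This is where Finner's original argument does some work, and your sketch correctly flags it rather than glossing over it. If you want to close the gap, one clean route is to first treat the degenerate case where some $\E[g_j]=0$ (then both sides vanish and any $\phi_i$ works), and otherwise define $\phi_I(s_I)$ as the indicator of $\{s_I:\,\E_{\{s_i:i\neq I\}}[\prod_{j\in J_I}g_j]>0\}$ and check it agrees with the H\"older-extracted factor almost everywhere.
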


In the following, we will use Finner's inequality for the indicator function $g_j^{(c)}$ corresponding to the color match $c$ being observed by party $A_j$. We will show that equality holds for these indicator function and the associated functions $\phi_i^{(c)}$ will indicate when source $S_i$ takes  color $c$.

Let us now rephrase Theorem~\ref{theorem:CM} of the main text for convenience:

\begin{theorem}[Color-Matching]\label{app:theorem:CM_Appendix}
Consider a network $\cN$ with parties $A_1, \dots,  A_{J}$ and sources $S_1, \dots,  S_{I}$. Assume that $\cN$ is an Exclusive Common-Source network that admits a Perfect Fractional Independent Set. 
Let $P=\{P(a_1, \dots,  a_J)\}$ be a Color-Matching distribution over $\cN$ in which any source $S_i$ takes color $c\in \{1,\dots, C\}$ with probability $p_{\match}(c)>0$. 

Now consider another strategy that simulates $P$ on $\cN$. Then, this strategy is a Color-Matching strategy with the same color distribution as before. More precisely, let $g_j^{(c)}\in\{0,1\}$ be the indicator function that $A_j$ outputs color match $c$. Then, for any color $c\in\{1,\dots, C\}$ there is an indicator function $\phi_i^{(c)}\in\{0,1\}$ such that
\begin{enumerate}
\item[{\rm (i)}] $g_j^{(c)}=\prod_{i:i\rightarrow j} \phi_i^{(c)}$
\item[{\rm (ii)}]$\forall s_i, \sum_c \phi_i^{(c)}(s_i)=1$
\item[{\rm (iii)}] $\E\big[\phi_i^{(c)}\big] = p_\match(c)$
\end{enumerate}
\end{theorem}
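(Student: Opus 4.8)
The plan is to apply Finner's inequality to the family of color‑match indicator functions and extract structural information from its equality case, then reconcile the colors coming from different color‑match conditions using the ECS property, and finally pin down the probabilities with Hölder's inequality. Concretely, for each party $A_j$ and color $c$, let $g_j^{(c)}(\{s_i:i\to j\})\in\{0,1\}$ be the indicator that $A_j$ declares color match $c$. The key observation is a counting identity: because the simulating strategy reproduces $P$, the probability that \emph{all} parties simultaneously declare the same color match $c$ equals $p_\match(c)^I$ (each of the $I$ sources independently takes color $c$ in the original CM strategy, and all parties then see a match). On the other hand, $\E[\prod_j g_j^{(c)}]$ is exactly this probability. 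Since $\cN$ admits a PFIS with weights $0<x_j<1$ satisfying $\sum_{j:i\to j}x_j=1$, Finner's inequality gives
\begin{equation*}
\E\Big[\prod_j g_j^{(c)}\Big]\le \prod_j\Big(\E\big[(g_j^{(c)})^{1/x_j}\big]\Big)^{x_j}=\prod_j\big(\Pr[A_j\text{ declares match }c]\big)^{x_j}.
\end{equation*}
The first step is therefore to show the right‑hand side is also $\le p_\match(c)^I$ — or more precisely to show the two sides are forced to be equal — so that the equality condition of Finner's inequality applies.

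For that, I would compute $\Pr[A_j\text{ declares match }c]$ in the original CM strategy: it equals $p_\match(c)^{d_j}$ where $d_j$ is the number of sources connected to $A_j$, and then $\prod_j p_\match(c)^{x_j d_j}=p_\match(c)^{\sum_j x_j d_j}=p_\match(c)^{\sum_i\sum_{j:i\to j}x_j}=p_\match(c)^I$ using the PFIS identity. Hence both sides of Finner equal $p_\match(c)^I$, so equality holds, and the equality condition yields indicator functions $\phi_i^{(c)}(s_i)\in\{0,1\}$ with $g_j^{(c)}=\prod_{i:i\to j}\phi_i^{(c)}$, which is (i). (One should note $p_\match(c)>0$ so these quantities are nonzero and the argument is not vacuous; if $p_\match(c)=0$ one sets $\phi_i^{(c)}\equiv 0$.)

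Next comes (ii), that for each $s_i$ exactly one color $c$ has $\phi_i^{(c)}(s_i)=1$. First, $\sum_c\phi_i^{(c)}(s_i)\le 1$: if two colors $c\ne c'$ had $\phi_i^{(c)}(s_i)=\phi_i^{(c')}(s_i)=1$, then by the ECS property there are parties $A_j\ne A_{j'}$ for which $S_i$ is their \emph{only} common source; pick any value‑profile of the other sources connected to $A_j$ on which they all agree on color $c$ (possible since each $\phi_{i'}^{(c)}$ is a nonzero indicator and the sources are independent) — then $A_j$ would declare match $c$; but using $\phi_i^{(c')}(s_i)=1$ a similar configuration makes $A_j$ declare match $c'$ on the same $s_i$, and because $S_i$ is a common source one can build a single global assignment of all sources making $A_j$ declare $c$ and... — here is where the ECS exclusivity is used to avoid a contradiction being spurious, and I expect this to be the main obstacle: carefully engineering a global source‑assignment that is simultaneously consistent with the declared matches of all parties so that the double‑color assignment to $S_i$ produces an actual contradiction in $P$. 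Conversely $\sum_c\phi_i^{(c)}(s_i)\ge 1$: if $s_i$ carried \emph{no} color, then $S_i$ could never participate in any color match, but in the original distribution there is positive probability that $A_j$ (with $i\to j$) declares each match $c$, and tracing which source values can produce that forces some value of $S_i$ to be colored; combined with a symmetry/relabelling argument over $\cS_i$ (or directly, by summing $\E[\phi_i^{(c)}]$ over $c$ and comparing with $\sum_c p_\match(c)=1$, once (iii) is in hand) one gets equality.

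Finally, (iii): from (i) applied to any single party $A_j$ with $i\to j$ and from the already‑known marginal $\Pr[A_j\text{ declares match }c]=p_\match(c)^{d_j}$, one has $\prod_{i:i\to j}\E[\phi_i^{(c)}]=p_\match(c)^{d_j}$ by independence of the sources. Writing $y_i:=\E[\phi_i^{(c)}]$, these multiplicative constraints across all parties, together with $\sum_c y_i^{(c)}\le 1$ (from (ii)) and $\sum_c p_\match(c)=1$, are exactly the setup for Hölder's inequality: Hölder forces $y_i^{(c)}=p_\match(c)$ for every $i$ and $c$, which is (iii), and feeding this back gives $\sum_c\phi_i^{(c)}(s_i)=1$ pointwise, completing (ii). I would close by remarking that (i)–(iii) together say precisely that the simulating strategy is a CM strategy with color distribution $p_\match$, so no further argument is needed. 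The delicate points I expect to spend the most care on are (a) justifying that the right‑hand side of Finner cannot \emph{exceed} $p_\math(c)^I$ so that equality is genuinely attained, and (b) the ECS‑based argument ruling out a source value carrying two colors; everything else is bookkeeping with independence and Hölder.
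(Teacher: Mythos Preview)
Your approach is essentially the paper's: Finner's equality case for (i), ECS for the ``at most one color'' half of (ii), and H\"older to pin down the probabilities and close (ii)--(iii). Two clarifications will streamline it. First, your worry (a) is a non-issue: both sides of Finner are computed from $P$ itself (the left side is $\Pr(c,\dots,c)$ and the right side is $\prod_j \Pr[A_j=c]^{x_j}$, both marginals of the distribution you are simulating), so equality is automatic once you have done the PFIS bookkeeping --- there is nothing further to justify. Second, your ECS paragraph drifts toward making the \emph{same} party $A_j$ declare two colors; the clean argument is to pick the two parties $A_{j_0},A_{j_1}$ for which $S_i$ is the \emph{unique} common source, set the remaining sources of $A_{j_0}$ to color $c_0$ and those of $A_{j_1}$ to color $c_1$ (these source-sets are disjoint by exclusivity, so there is no conflict), and observe that then $A_{j_0}$ outputs match $c_0$ while $A_{j_1}$ outputs match $c_1$, contradicting the CM distribution since they share $S_i$. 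For (iii) the paper uses the single global constraint $\prod_i \E[\phi_i^{(c)}]=p_\match(c)^I$ (from $\Pr(c,\dots,c)$) rather than your per-party constraints, which makes the H\"older step a one-liner; your per-party constraints also recover the global one via the PFIS weights, so either route works.
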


In this theorem, $\phi_i^{(c)}(s_i)=1$ when source $S_i$ takes color $c$. (ii) says that any possible message $s_i$ of source $S_i$ is associated to a unique color $c$. (i) indicates that $A_j$ outputs color match $c$ if and only if all the sources connected to her take color $c$.  Finally, (iii) implies that the sources take colors with the same probability distribution as in the original strategy.

\begin{proof}
To prove (i) we use Finner's inequality. Let $\{x_j: 1\leq j\leq J\}$ be a PFIS of $\cN$. Then, the probability that all parties output color match $c$ is equal to the probability that all sources take color $c$, i.e.,
\begin{align*}
\Pr(c,\cdots,c) &= \prod_{i=1}^I p_\match(c) 
= \prod_{i=1}^I \prod_{j: i\to j} p_\match(c)^{x_j} \\
& = \prod_{j=1}^J \prod_{i: i\to j} p_\match(c)^{x_j} 
= \prod_j \Pr(A_j=c)^{x_j}.
\end{align*} 
On the other hand, we have $\E[\prod_j g_j^{(c)}]=\Pr(c,\cdots,c)$ and 
 $$\E\Big[ |g_j^{(c)}|^{\frac{1}{x_j}}\Big] = \E[g_j^{(c)}] = \Pr(A_j=c).$$ 
 Therefore, the Finner inequality~\eqref{app:eq:finner-ineq-thm}, turns into an equality and by the equality condition~\eqref{app:eq:finner-eq-thm} functions $\phi_i^{(c)}$ satisfying (i) exist. 
 
To prove (ii) we first show that $\sum_c \phi_i^{(c)}\leq 1$, which since $\phi_i^{(c)}$ takes values in $\{0,1\}$, means that for any $s_i$ there is \emph{at most} one color $c$ for which $\phi_i^{(c)}(s_i)=1$. To this end, assume that there are $c_0\neq c_1$ and $s_i^*$ such that $\phi_i^{(c_0)}(s_i^*)=\phi_i^{(c_1)}(s_i^*)=1$. Let $A_{j_0}, A_{j_1}$ be the two parties whose unique common source is $S_i$. For any source $S_{i_0}\neq S_i$ with $i_0\to j_0$ let $s_{i_0}^*$ be such that $\phi_{i_0}^{(c_0)}(s_{i_0^*}) = 1$. We note that such $s_{i_0}^*$ exists since $\Pr(A_{j_0}=c_0)>0$. Similarly, choose $s_{i_1}^*$ for any source $S_{i_1}\neq S_i$ with $i_1\to j_1$ such that $\phi_{i_1}^{(c_1)}(s_{i_1}^*) = 1$. Then, with these choices of $s_i^*$'s, using (i) we find that $\Pr(A_{j_0}=c_0, A_{j_1}=c_1)>0$. However, in a CM distribution if two parties share a source, they can never output a color match with different colors. Therefore, $\sum_c \phi_i^{(c)}\leq 1$.  

For any source $S_i$ let 
$$q_i(c) := \E[\phi_i^{(c)}] = \Pr[\phi_i^{(c)} =1].$$
Then, by $\sum_c \phi_i^{(c)}\leq 1$ we have
\begin{align}\label{app:eq:q-i-c-prob}
\sum_c q_i(c)=\E\Big[\sum_c \phi_i^{(c)}\Big] \leq 1.
\end{align}
If we show that equality holds in the above equation, (ii) is proven. To this end, note that by (i) we have
$$p_\match(c)^I=\Pr(c, \dots,  c) = \prod_i \Pr[\phi_i^{(c)}=1] = \prod_i q_i(c).$$
Then, equality in~\eqref{app:eq:q-i-c-prob} as well as (iii) are derived from Lemma~\ref{app:lem:color-dist-match} below.
\end{proof}

\begin{lemma}\label{app:lem:color-dist-match}
Let $p_\match(c)>0$ be a probability distribution over $\{1, \dots,  C\}$. Also, let $q_i(c)\geq 0$, for $i\in \{1, \dots,  I\}$, be such that for any $i$,  $\sum_c q_i(c)\leq 1$. Moreover, assume that for any $c$ we have
\begin{align}\label{app:eq:lem-qic}
\prod_{i=1}^I q_i(c) = p_\match(c)^I.
\end{align}
Then, $q_i(c)=p_\match(c)$ for any $i$ and $c$. In particular, we have $\sum_c q_i(c)=1$ for all $i$.
\end{lemma}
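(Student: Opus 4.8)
The plan is to combine the product constraint \eqref{app:eq:lem-qic} with the normalization constraints $\sum_c q_i(c)\le 1$ via H\"older's inequality (or equivalently the AM--GM inequality), which is exactly the tool the authors announced at the end of the proof of Theorem~\ref{app:theorem:CM_Appendix}. The key observation is that the geometric mean $\big(\prod_i q_i(c)\big)^{1/I} = p_\match(c)$, so summing over $c$ gives $\sum_c \big(\prod_i q_i(c)\big)^{1/I} = \sum_c p_\match(c) = 1$. On the other hand, H\"older's inequality applied with $I$ sequences $\big(q_i(c)\big)_c$ and exponents all equal to $I$ yields
\begin{equation*}
\sum_c \Big(\prod_{i=1}^I q_i(c)\Big)^{1/I} \le \prod_{i=1}^I\Big(\sum_c q_i(c)\Big)^{1/I} \le \prod_{i=1}^I 1^{1/I} = 1,
\end{equation*}
where the second inequality uses the hypothesis $\sum_c q_i(c)\le 1$ for every $i$. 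Since the left-hand side equals $1$ by the computation above, both inequalities must be equalities.

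From the equality in the second step I conclude $\sum_c q_i(c) = 1$ for every $i$, which is the last assertion of the lemma. From the equality in the H\"older step I conclude that the sequences $\big(q_i(c)\big)_c$ are all proportional: there are constants $\lambda_i \ge 0$ with $q_i(c) = \lambda_i\, q_1(c)$ for all $i,c$ (one must handle the degenerate case where some $q_i(c)=0$; but \eqref{app:eq:lem-qic} forces $\prod_i q_i(c) = p_\match(c)^I > 0$, so in fact every $q_i(c)>0$ and the proportionality is clean). Combining proportionality with $\sum_c q_i(c)=1 = \sum_c q_1(c)$ forces $\lambda_i = 1$, hence $q_i(c) = q_1(c) =: r(c)$ is independent of $i$. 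Then \eqref{app:eq:lem-qic} reads $r(c)^I = p_\match(c)^I$, so $r(c) = p_\match(c)$, completing the proof.

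I expect the only mildly delicate point to be invoking the correct equality case of H\"older's inequality and organizing the two equality conditions (proportionality of the sequences, and tightness of each $\sum_c q_i(c)\le 1$); everything else is a direct substitution. An alternative, perhaps cleaner, route avoiding the proportionality discussion is to apply the weighted AM--GM inequality pointwise in $c$: $\prod_i q_i(c)^{1/I} \le \frac{1}{I}\sum_i q_i(c)$, sum over $c$ to get $1 = \sum_c p_\match(c) \le \frac1I \sum_i \sum_c q_i(c) \le \frac1I \cdot I = 1$, forcing $\sum_c q_i(c)=1$ for all $i$ and equality in every pointwise AM--GM, i.e. $q_1(c) = \cdots = q_I(c)$ for each $c$; then finish as above. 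I would present this AM--GM version since it makes the extraction of both conclusions most transparent.
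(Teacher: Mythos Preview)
Your proposal is correct and follows essentially the same approach as the paper: the paper introduces auxiliary functions $f_i(c)=(q_i(c)/p_\match(c))^{1/I}$ and applies H\"older's inequality with respect to the measure $p_\match$, which after unwinding is exactly your first argument, and the conclusion is drawn in the same way from the two equality cases. Your AM--GM variant is a slightly cleaner packaging of the same inequality and yields the pointwise equality $q_1(c)=\cdots=q_I(c)$ more directly, but it is not a genuinely different route.
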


\begin{proof}
Define $f_i:\{1, \dots,  C\}\to \mathbb R$ by 
$$f_i(c) = \Big(\frac{q_i(c)}{p_\match(c)}\Big)^{1/I}.$$ 
We compute
\begin{align*}
1&= \sum_{c} p_\match(c)=  \sum_{c} \Big[\prod_{i=1}^I q_i(c) \Big]^{1/I}
 = \sum_{c}\, p_\match(c) \prod_{i=1}^I f_i(c) \\
& = \E\Big[  \prod_{i=1}^I f_i\Big],
\end{align*}
where the expectation is with respect to the distribution $p_\match(c)$. Then, by H\"older's inequality we have
\begin{align*}
 1 &\leq \prod_{i=1}^I \|f_i\|_{I} = \prod_{i=1}^I \E\big[f_i^I\big]^{1/I}
 = \prod_{i=1}^I \E\big[q_i/p\big]^{1/I}\\
& =\prod_{i=1}^I \Big(\sum_c q_i(c)\Big)^{1/I}\leq 1.
\end{align*} 
Therefore, H\"older's inequality and inequalities $\sum_c q_i(c)\leq 1$ are equalities. Therefore, all the functions $f_i$, and then $q_i/p$'s and $q_i$'s  are collinear.  Then, using the normalization $\sum_c q_i(c)=1$, which we just proved, we find that $q_i$'s are equal. Using this in~\eqref{app:eq:lem-qic} we obtain $q_i(c)=p_\match(c)$ as desired. 
\end{proof}

We now give the proof of an extension of Corollary~\ref{corollary:CMExtraMeasures} of the main text:

\begin{corollary}[Refined measurements in CM]\label{app:corollary:CMExtraMeasures}
Let $\cN$ be an ECS network admitting a PFIS and let $P$ be the outcome distribution of a \emph{quantum} CM strategy. Suppose that $\ket{c_i: j\to i}$, for some $1\leq c_i\leq C$ belongs to the measurement basis of a party $A_j$. Then, the followings hold:
\begin{itemize}
\item[{\rm{(i)}}] Assume that for any source $S_i$ connected to $A_j$ there is a party $A_{j^{(i)}}$ with $S_i\to A_{j^{(i)}}$ such that $S_i$ is the unique common source of $A_j$ and $A_{j^{(i)}}$.
Then, $\ket{c_i:\, i\to j}$ is rigid for $A_j$.

\item[{\rm{(ii)}}] Let $A_{j_1}, \dots, A_{j_k}$ be a list of parties (different from $A_j$) such that for any source $S_i$ with $i\to j$ there is $\ell$ with $i\to j_\ell$. Let $\mathcal S$ be the union of the set of sources connected to $A_{j_1}, \dots, A_{j_k}$ which by the previous assumption includes $S_i$'s with $i\to j$. Let $\{c_{i'}:\, i'\in \mathcal S\}$ be an extension of $\{c_i:\, i\to j\}$ that assigns colors to all sources of $\mathcal S$. Suppose that for any $\ell$, the computational basis state $\ket{c_{i'}: i'\to j_\ell}$ is rigid for $A_{j_\ell}$. Then, $\ket{c_i:\, i\to j}$ is rigid for $A_j$.

\item[{\rm{(iii)}}] Suppose that there is an extension $\{c_{i'}: 1\leq i'\leq I \}$ of $\{c_i:\, i\to j\}$ that assigns a color to any source in $\cN$, such that for any party $A_{j'}$ the computational basis state $\ket{c_{i'}:\, i'\to j'}$ belongs to the measurement basis of $A_{j'}$. Then, $\ket{c_i:\, i\to j}$ is rigid for $A_j$.
\end{itemize}
\end{corollary}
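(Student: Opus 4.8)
\emph{Reduction common to the three parts.} Fix a party $A_j$ and write $o_j$ for the color computational basis vector $\ket{c_i:i\to j}$ (and, in part (iii), $o_{j'}$ for $\ket{c_{i'}:i'\to j'}$). Since $\ket{o_j}$ is a computational basis vector, the associated measurement operator is diagonal, so exactly as in Remark~\ref{remark:CM} decohering all sources does not change $P(A_j=o_j)$; in the decohered strategy $A_j$ outputs $o_j$ precisely when each $S_i$ with $i\to j$ sends color $c_i$, whence $P(A_j=o_j)=\prod_{i:i\to j}p_\match(c_i)$, a number reproduced by every simulating classical strategy. On the other hand, by Theorem~\ref{app:theorem:CM_Appendix} a simulating classical strategy carries a well-defined color $c(s_i)$ for each source (the unique $c$ with $\phi_i^{(c)}(s_i)=1$), and by independence of the sources together with part (iii) of that theorem, $\Pr[c(s_i)=c_i\ \forall\, i\to j]=\prod_{i:i\to j}p_\match(c_i)$ as well. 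Hence the events $\{A_j=o_j\}$ and $\{c(s_i)=c_i\ \forall\, i\to j\}$ carry the same probability, and rigidity of $\ket{o_j}$ for $A_j$ reduces to showing that, up to a null set, one of them is contained in the other.

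\emph{Part (i).} I would show $\Pr[A_j=o_j,\ c(s_{i_0})\neq c_{i_0}]=0$ for each $S_{i_0}\to A_j$. If not, there is a color $c'\neq c_{i_0}$ with $\{A_j=o_j,\ c(s_{i_0})=c'\}$ of positive probability, an event constraining only $\{s_i:i\to j\}$ and $A_j$'s private randomness. Let $A_{j^{(i_0)}}$ be the party supplied by the hypothesis, for which $S_{i_0}$ is the unique common source with $A_j$; then every source feeding $A_{j^{(i_0)}}$ other than $S_{i_0}$ is not connected to $A_j$, hence independent of that event, and takes color $c'$ with positive probability since $p_\match(c')>0$ (Theorem~\ref{app:theorem:CM_Appendix}(iii)). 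Conditioning on all of them taking color $c'$, part (i) of Theorem~\ref{app:theorem:CM_Appendix} forces $A_{j^{(i_0)}}$ to announce color match $c'$. So $A_j=o_j$ and $A_{j^{(i_0)}}=$``match~$c'$'' occur together with positive probability; but in the quantum strategy the $S_{i_0}$-share sent to $A_j$ and the one sent to $A_{j^{(i_0)}}$ are perfectly color-correlated, so $o_j$ (pinning the former to $\ket{c_{i_0}}$) and ``match~$c'$'' (pinning the latter, hence the former, to $\ket{c'}$ with $c'\neq c_{i_0}$) cannot co-occur in $P$ — a contradiction. With the probability matching above, this gives $\{A_j=o_j\}\subseteq\{c(s_i)=c_i\ \forall i\to j\}$ a.e., i.e.\ rigidity.

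\emph{Part (ii).} Here I would show instead $\Pr[A_j\neq o_j,\ c(s_i)=c_i\ \forall\, i\to j]=0$. If not, pick a positive-probability realization of this event and independently re-randomize every source in $\mathcal S$ \emph{not} connected to $A_j$ so that it takes its assigned color $c_{i'}$ — each with positive probability by Theorem~\ref{app:theorem:CM_Appendix}(iii). On the resulting positive-probability event all sources of each $A_{j_\ell}$ sit at colors $\{c_{i'}:i'\to j_\ell\}$, so by the assumed rigidity of $\ket{c_{i'}:i'\to j_\ell}$ every $A_{j_\ell}$ outputs $\{c_{i'}:i'\to j_\ell\}$. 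In $P$, these outcomes pin, for each $i\to j$, the $S_i$-share reaching the $A_{j_\ell}$ with $i\to j_\ell$ — hence by color-correlation the $S_i$-share reaching $A_j$ — to $\ket{c_i}$; since $\ket{o_j}$ lies in $A_j$'s basis, $A_j$ then outputs $o_j$ with certainty, contradicting $A_j\neq o_j$. This gives $\{c(s_i)=c_i\ \forall i\to j\}\subseteq\{A_j=o_j\}$ a.e., i.e.\ rigidity.

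\emph{Part (iii) and the main obstacle.} When the local structure of (i)--(ii) is unavailable I would rerun the Finner argument of Theorem~\ref{app:theorem:CM_Appendix}: let $h_{j'}\in\{0,1\}$ be the indicator that $A_{j'}$ outputs $o_{j'}$ in a simulating classical strategy. As above $\E[h_{j'}]=\prod_{i':i'\to j'}p_\match(c_{i'})$, and using the PFIS one gets $\E[\prod_{j'}h_{j'}]=\Pr[\text{all }A_{j'}=o_{j'}]=\prod_{i'}p_\match(c_{i'})=\prod_{j'}\E[h_{j'}]^{x_{j'}}$, so Finner's inequality is saturated and produces indicators $\psi_{i'}$ with $h_{j'}=\prod_{i':i'\to j'}\psi_{i'}$ for all $j'$. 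It remains to identify $\psi_{i'}$ with $\phi_{i'}^{(c_{i'})}$. First $\E[\psi_{i'}]>0$ for every $i'$ (else $h_{j'}\equiv 0$ on some party, contradicting $\E[h_{j'}]>0$). Next $\psi_{i'}\le\phi_{i'}^{(c_{i'})}$ a.e.: if some $s_{i'}^{*}$ had $\psi_{i'}(s_{i'}^{*})=1$ but $c(s_{i'}^{*})=c''\neq c_{i'}$, then taking the pair $A_p,A_q$ whose exclusive common source is $S_{i'}$ (ECS) and re-randomizing the other sources of $A_p$ to make $A_p$ output $o_p$ and those of $A_q$ to make $A_q$ announce ``match~$c''$'' — possible by the positivity just noted — we again force an impossible-in-$P$ coincidence, as in part (i). Finally $\E[\psi_{i'}]=p_\match(c_{i'})$, since $\prod_{i':i'\to j'}\E[\psi_{i'}]=\prod_{i':i'\to j'}p_\match(c_{i'})$ forbids any strict inequality; hence $\psi_{i'}=\phi_{i'}^{(c_{i'})}$ a.e., and in particular $h_j=\prod_{i:i\to j}\phi_i^{(c_i)}$, which is rigidity of $\ket{o_j}$ for $A_j$. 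The recurring hazard throughout is the independence bookkeeping — guaranteeing that the sources one re-randomizes are genuinely decoupled from the conditioning event, which is exactly where the unique-common-source / ECS hypotheses are used; in part (iii) there is the extra subtlety of pinning down each $\E[\psi_{i'}]$ individually rather than only the per-party products, which is what the one-sided bound $\psi_{i'}\le\phi_{i'}^{(c_{i'})}$ is for.
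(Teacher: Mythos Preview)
Your proof is correct and follows essentially the same route as the paper: the contradiction-by-re-randomization arguments in (i) and (ii) match the paper's, and in (iii) you reproduce the paper's Finner-equality step to extract the indicators $\psi_{i'}$ and then use the ECS pair to prove $\psi_{i'}\le\phi_{i'}^{(c_{i'})}$. Your explicit ``common reduction'' (matching $\Pr[A_j=o_j]$ with $\Pr[c(s_i)=c_i\ \forall\,i\to j]$) and the final expectation-matching step $\E[\psi_{i'}]=p_\match(c_{i'})$ make fully explicit the ``other direction'' that the paper dismisses with ``consider the probability,'' so your write-up is in fact a bit tighter than the paper's here.
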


\begin{proof}
To prove this corollary we use Theorem~\ref{app:theorem:CM_Appendix} and the notation developed there. 

\medskip
\noindent (i) We need to show that $A_j$ outputs $\{c_i: j\to i\}$ if and only if $\phi_i^{(c_i)}(s_i)=1$ for any $i\to j$. Suppose that for such an $i$ we have $\phi_i^{(c_i)}(s_i)=0$ and $\phi_i^{(c'_i)}(s_i)=1$ for some $c'_i\neq c_i$. As in the statement of the corollary, party $A_{j^{(i)}}$ has the property that $S_i$ is the unique common source of $A_j$ and $A_{j^{(i)}}$. Then, we may choose $s_{i'}$'s for any $S_{i'}\neq S_i$ with $S_{i'}\to A_{j^{(i)}}$ such that $\phi_{i'}^{(c'_i)}(s_{i'}) =1$. We note that the choice of such $s_{i'}$'s does not affect the output of $A_j$.  Thus, we note that any source connected to $A_{j^{(i)}}$ takes color $c'_i$. This means that $A_{j^{(i)}}$ outputs color match $c'_i$. On the other hand, by assumption $A_j$ claims that $S_i$ takes color $c_i\neq c'_i$. This is a contradiction. As a result, $A_j$ outputs $\{c_i: j\to i\}$  only if $\phi_i^{(c_i)}(s_i)=1$ for any $i\to j$. For the other direction that $A_j$ outputs $\{c_i: j\to i\}$  if $\phi_i^{(c_i)}(s_i)=1$ for any $i\to j$, consider the probability of outputting  $\{c_i: j\to i\}$.

\medskip
\noindent (ii) Let $\{s_i:\, i\to j\}$ be a list of messages taken by sources connected to $A_j$ such that $\phi_i^{(c_i)}(s_i)=1$. We need to show that in this case $A_j$ outputs $\{c_i:\, i\to j\}$. For any other  $i'\in \mathcal S$ choose $s_{i'}$ such that $\phi_{i'}^{(c_{i'})}(s_{i'})=1$. Then, by the rigidity assumptions $A_{j_\ell}$, for any $1\leq \ell\leq k$, outputs $\{c_{i'}:\, i'\to j_\ell\}$. Therefore, since any source $S_i$ connected to $A_j$ is connected to at least one of $A_{j_\ell}$'s, the color taken by $S_i$ is determined by the outputs of $A_{j_1}, \dots, A_{j_\ell}$. Thus, since $\ket{c_i:\, i\to j}$ belongs to her measurement basis,  $A_j$ has no choice but outputting this list of colors. For the other direction that $A_j$ outputs $\{c_i:\, i\to j\}$ only if the connected sources take these colors consider the probability of output $\{c_i:\, i\to j\}$.

\medskip
\noindent (iii) Let $f_{j'}$ be the indicator function that $A_{j'}$ outputs $\{c_{i}:\, i\to j'\}$. Then, using Finner's inequality for $f_{j'}$'s as in the proof of Theorem~\ref{app:theorem:CM_Appendix}, we find that there are 0/1-valued functions $\psi_{i}$ such that $A_{j'}$ outputs  $\{c_{i}:\, i\to j'\}$ if and only if $\psi_{i}(s_{i})=1$. We need to show that $\psi_{i}(s_{i})=1$ if and only if $\phi_{i}^{(c_{i})}(s_i)=1$, which means that $A_{j'}$ outputs $\{c_{i}:\, i\to j'\}$ if and only if $S_{i}$ with $i\to j'$, takes color $c_{i}$. 

Fix a source $S_{i}$ and assume that $\phi_{i}^{(c'_{i})}(s_i)=1$ for some $ c'_{i}\neq c_{i}$. Using  the ECS assumption let $A_{j_1}, A_{j_2}$ be two parties connected to $S_{i}$ such that $S_{i}$ is their unique common source. Fix the message of sources connected to $A_{j_1}$ (including $S_{i}$) as before so that $A_{j_1}$ outputs $\{c_{i'}:\, i'\to j\}$. Next, choose the messages of sources $S_{i'}\neq S_i$ connected to $A_{j_2}$ such that she outputs color match $c'_i$.  We note that such a choice is feasible since $S_i$ is the unique common source of $A_{j_1}, A_{j_2}$ and  $\phi_{i}^{(c'_{i})}(s_i)=1$. This is a contradiction since now $A_{j_1}$ claims that $S_i$ takes color $c_i$, but $A_{j_2}$ claims that it takes color $c'_i$.

\end{proof}

\section{All ring scenarios with bipartite sources}\label{app:sec:ring}

We start by the proof of Claim~\ref{claim:RingScenario} of the main text.
\begin{proof}[Proof of Claim~\ref{claim:RingScenario} of the main text]
{\rm{(i)}} We compute:
\begin{align}
q(r_1, \dots,  r_n) &= \Pr\big(\alpha_j=\ket{v_{j, r_j}},\, \forall j\,\big| n_j =1,\, \forall j \big  )\nonumber\\
& = \frac{1}{\Pr[ n_j =1,\, \forall j]} \Pr\big(\alpha_j=\ket{v_{j, r_j}},\, \forall j  \big  )\nonumber\\
& = 2^{n-1} \Pr\big(\alpha_j=\ket{v_{j, r_j}},\, \forall j  \big  )\nonumber\\
& = \frac{1}{2} \Big|\prod_{j} \omega^{(1)}_{j, r_j} + \prod_{j} \omega^{(2)}_{j, r_j}  \Big|^2\label{app:eq:margin-00n}
\end{align}

\noindent
{\rm{(ii)}} We concentrate on the case $t=1$, the other case being similar. For any $j$, we have
\begin{align*}
q(r_j, t=1) & = \Pr\big(\alpha_j=\ket{v_{j, r_j}}, t=1\,\big| n_i =1,\, \forall i \big  )\\
& = 2^{n-1}\Pr\big(\alpha_j=\ket{v_{j, r_j}}, t=1 \big  )\\
& = 2^{n-1}\Pr\big(\alpha_j=\ket{v_{j, r_j}}, S_i \leadsto A_{i},\, \forall i \big  ),
\end{align*}
where by $S_i \leadsto A_{i}$ we mean that $S_i$ sends his token to $A_{i}$. We continue
\begin{align*}
&q(r_j, t=1)   =2^{n-1}\Pr\big(\alpha_j=\ket{v_{j, r_j}}, S_i \leadsto A_{i},\, \forall i \big  )\\
& =2\Pr\big(\alpha_j=\ket{v_{j, r_j}}, S_i \leadsto A_{i},\,  i=j, j-1 \big  )\\
& =4\Pr\big(\alpha_j=\ket{v_{j, r_j}}, S_i \leadsto A_{i},\,  i=j, j-1, S_{j+1}\leadsto A_{j+2} \big  ),
\end{align*}
where we use the fact that $A_j$'s output is independent of whether $S_i$ for $i\notin\{j, j-1\}$ sends the token to $A_i$ or $A_{i+1}$. 
Now, assume that $S_i \leadsto A_{i}$ for  $i=j, j-1$ and $S_{j+1}\leadsto A_{j+2}$. 
In this case, $A_{j+1}$ receives no token, hence $a_{j+1} = \ket{00}$ (i.e., $n_{j+1}=0$). 
Conversely, when $a_{j+1} = \ket{00}$ and $\alpha_j= \ket{v_{j, r_j}}$, the distribution of tokens by sources $S_{j-1}, S_j$ and $S_{j+1}$ is  $S_i \leadsto A_{i}$ for  $i=j, j-1$ and $S_{j+1}\leadsto A_{j+2}$. Therefore, we have
\begin{align}
q(r_j, t=1) &= 4\Pr\big(\alpha_j=\ket{v_{j, r_j}}, a_{j+1} = \ket{00} \big  )\nonumber\\
& = \frac{1}{2} |\omega_{j, r_j}^{(1)}|^2.\label{app:eq:margin-00t}
\end{align}
\end{proof}

The rest of this section is devoted to the proof of Proposition~\ref{propo:RingScenario} of the main text. 
We show that for certain \emph{asymptotic} choice of parameters $\omega^{(t)}_{j, r_j}$, which we assume to be real, the LP given by Claim~\ref{claim:RingScenario} of the main text is infeasible. This proposition is indeed proven in~\cite{Renou2019a} when $n$ is odd. Here, for the sake of completeness we include this case as well.

\medskip
\noindent
\emph{Proof of Proposition~\ref{propo:RingScenario} of the main text.}
Assume that the parameters $\omega^{(t)}_{j, r_j}$ are all real.
Let us define $x_{r_1, \dots,  r_n}$ by
\begin{align*}
q(r_1, \dots,  r_n,& t=1) =\\
 &\frac{1}{2} \Big(\prod_j \big(\omega_{j, r_j}^{(1)}\big)^2 + \prod_j \omega_{j, r_j}^{(1)}\omega_{j, r_j}^{(2)} + x_{r_1, \dots,  r_n}\Big)
\end{align*} 
Then, by~\eqref{app:eq:margin-00n} we have
\begin{align*}
q(r_1, \dots,  r_n,& t=2) =\\
 &\frac{1}{2} \Big(\prod_j \big(\omega_{j, r_j}^{(2)}\big)^2 + \prod_j \omega_{j, r_j}^{(1)}\omega_{j, r_j}^{(2)} - x_{r_1, \dots,  r_n}\Big)
\end{align*} 
Moreover, by~\eqref{app:eq:margin-00t} and the fact that $\ket{v_{i, 1}}$ and $\ket{v_{i, 2}}$ are orthonormal, we have
\begin{align}\label{app:eq:sum-x-0}
\sum_{r_i:\, i\neq j} x_{r_1, \dots,  r_n} =0, \qquad \forall r_j.
\end{align}
Observe that the non-negativity of $q(r_1, \dots,  r_n, t)$ gives
\begin{align*}
\prod_j \big(\omega_{j, r_j}^{(2)}\big)^2 + \prod_j &\omega_{j, r_j}^{(1)}\omega_{j, r_j}^{(2)} \geq x_{r_1, \dots,  r_n}\\
&\geq-\prod_j \big(\omega_{j, r_j}^{(1)}\big)^2 - \prod_j \omega_{j, r_j}^{(1)}\omega_{j, r_j}^{(2)}.
\end{align*}
To simplify the notation, let $\omega_{j, 1}^{(1)} = -\omega_{j, 2}^{(2)}=\lambda_j $ and $\omega_{j, 1}^{(0)} = \omega_{j, 0}^{(1)}=\mu_j$ with $\lambda_j, \mu_j>0$ and $\lambda_j^2+\mu_j^2=1$. Then, the above inequality turns to 
\begin{align}
\prod_j \lambda_{j}^{2r_j}&\mu_j^{2(1-r_j)} +(-1)^{S} \prod_j \lambda_{j}\mu_{j} \geq x_{r_1, \dots,  r_n}\nonumber\\
&\geq-\Big(\prod_j \lambda_{j}^{2(1-r_j)}\mu_j^{2r_j} +(-1)^S \prod_j \lambda_{j}\mu_{j}\Big),\label{app:eq:ineq-S-x-alpha}
\end{align}
where $S=S(r_1, \dots,  r_n)= \sum_j r_j$. 

In the following, we show that for appropriate choices of $\lambda_j, \mu_j$ equations~\eqref{app:eq:sum-x-0} and~\eqref{app:eq:ineq-S-x-alpha} do not have a solution which by the above discussion means that there is no classical strategy simulating the quantum distribution. 

\medskip
Let us assume that $\lambda_j=\lambda$ and $\mu_j=\mu$ for all $j$. Then,~\eqref{app:eq:ineq-S-x-alpha} becomes
\begin{align}
\lambda^{2S}\mu^{2(n-S)} +&(-1)^{S} \lambda^n\mu^n \geq x_{r_1, \dots,  r_n}\nonumber\\
&\geq-\Big(\lambda^{2(n-S)}\mu^{2S} +(-1)^S \lambda^n\mu^n\Big).\label{app:eq:alpha-beta-n-ineq-00}
\end{align}
Observe that in the above inequality the upper and lower bounds on 
$x_{r_1, \dots,  r_n}$ depend only on $S = \sum_j r_j$. Thus, let us define
$$x_{S} = \frac{1}{\binom{n}{S}} \sum_{r_1+\cdots +r_n=S} x_{r_1, \dots,  r_n}. $$
Then, $x_S$ satisfies 
\begin{align}
\lambda^{2S}\mu^{2(n-S)} +&(-1)^{S} \lambda^n\mu^n \geq x_{S}\nonumber\\
&\geq-\Big(\lambda^{2(n-S)}\mu^{2S} +(-1)^S \lambda^n\mu^n\Big).\label{app:eq:ineq-S-x-alpha-unif}
\end{align}
Moreover, summing~\eqref{app:eq:sum-x-0} over all $j$ with fixed $r_j=r\in \{0,1\}$, we obtain
\begin{align}\label{app:eq:sum-x-0-unif}
\sum_{S=0}^{n-1}  \binom{n-1}{S}  x_{S+r} =0, \qquad r=0, 1.
\end{align}
Thus, we need to show that~\eqref{app:eq:ineq-S-x-alpha-unif} and~\eqref{app:eq:sum-x-0-unif}  do not have a solution. 

Let us assume that $\lambda=\epsilon$ is small and $\mu=\sqrt{1-\lambda^2}$. Then, taking the leading term in $\epsilon$ (i.e., $\epsilon^n$) and replacing $x_S$ by $x_S=\epsilon^{-n}x_S$ (notice that~\eqref{app:eq:sum-x-0-unif} is still satisfied) we find that 
\begin{align}
(-1)^{S}\geq x_S, & \qquad S>n/2,\label{app:eq:S-geq-n2}\\
 x_S\geq - (-1)^S, & \qquad S<n/2.\label{app:eq:S-leq-n2}
\end{align}
In the following, by separating even and odd cases we show that for any $n\geq 3$ the above inequalities are infeasible.

\subsubsection{$n$ odd}
As mentioned before, for odd $n$ the fact that~\eqref{app:eq:sum-x-0-unif},~\eqref{app:eq:S-geq-n2} and~\eqref{app:eq:S-leq-n2} do not have a solution is already proven in~\cite{Renou2019a}. Here, for the sake of completeness we reproduce a proof.

Let $C>0$ be such that 
\begin{align}
&C\binom{n-1}{S} -\binom{n-1}{S-1}\geq 0, \qquad \forall S<n/2,\label{app:eq:const-C-1}\\
&C\binom{n-1}{S} -\binom{n-1}{S-1}\leq 0, \qquad \forall S>n/2\label{app:eq:const-C-2}.
\end{align} 
Then, multiply equation~\eqref{app:eq:sum-x-0-unif} for $r=0$ by $C$ and subtract it from the same equation for $r=1$. We obtain
\begin{align}
0& = C\sum_{S=0}^{n-1} \binom{n-1}{S} x_S - \sum_{S=1}^n \binom{n-1}{S-1}x_S\nonumber\\
& = \sum_{S=0}^{n} \left( C\binom{n-1}{S} - \binom{n-1}{S-1}     \right)x_S.\label{app:eq:mult-C-subtract}
\end{align}
Therefore, using~\eqref{app:eq:S-geq-n2} and~\eqref{app:eq:S-leq-n2} and the constrains we put on $C$ we have
\begin{align*}
0&\geq -\sum_{S=0}^{(n-1)/2} \left( C\binom{n-1}{S} - \binom{n-1}{S-1}     \right)(-1)^S\\
& \qquad + \sum_{S=(n+1)/2}^{n} \left( C\binom{n-1}{S} - \binom{n-1}{S-1}     \right)(-1)^S\\
& =  -\sum_{S=0}^{(n-1)/2} \left( C\binom{n-1}{S} - \binom{n-1}{S-1}     \right)(-1)^S\\
& \qquad + \sum_{S=0}^{(n-1)/2} \left( C\binom{n-1}{S-1} - \binom{n-1}{S}     \right)(-1)^{n-S}\\
& \stackrel{(a)}{=}  \sum_{S=0}^{(n-1)/2} \left( C\binom{n}{S} - \binom{n}{S}     \right)(-1)^{n-S}\\
& \stackrel{(b)}{=} (C-1)(-1)^{(n+1)/2}\binom{n-1}{(n-1)/2},
\end{align*}
where in (a) we use the fact that $n$ is odd and $-(-1)^S=(-1)^{n-S}$ as well as Pascal's rule. Moreover, for (b) we use Pascal's rule to obtain
\begin{align}\label{eq:pascal-sum}
\sum_{S=0}^K (-1)^S\binom{m}{S} & = \sum_{S=0}^K (-1)^S\left(\binom{m-1}{S} +\binom{m-1}{S-1}   \right)\nonumber\\
& = (-1)^R\binom{m-1}{R}.
\end{align} 
Now, note that for any $n$ and sufficiently small $\delta_n>0$, both values of $C=1+\delta_n$ and $C=1-\delta_n$ satisfy~\eqref{app:eq:const-C-1} and~\eqref{app:eq:const-C-2}. Thus, we obtain
$$0\geq \pm \delta_n(-1)^{(n+1)/2}\binom{n-1}{(n-1)/2},$$
that is a contradiction.

\subsubsection{$n$ even}

For even $n$ it is more convenient to separate the cases of $n=4k$ and $n=4k+2$.

Let us first assume that $n=4k$ for $k\geq 1$. Taking the difference of~\eqref{app:eq:sum-x-0-unif} for $r=0$ and $r=1$ and using~\eqref{app:eq:S-geq-n2} and~\eqref{app:eq:S-leq-n2} we obtain
\begin{align*}
0 & = \sum_{S=0}^n \left( \binom{n-1}{S-1} - \binom{n-1}{S}   \right) x_S\\
& = \sum_{S=0}^{2k-1} \left( \binom{n-1}{S-1} - \binom{n-1}{S}   \right) x_S\\
&\qquad +  \sum_{S=2k+1}^{4k} \left( \binom{n-1}{S-1} - \binom{n-1}{S}   \right) x_S \\
& \leq -\sum_{S=0}^{2k-1} \left( \binom{n-1}{S-1} - \binom{n-1}{S}   \right) (-1)^S\\
&\qquad +  \sum_{S=2k+1}^{4k} \left( \binom{n-1}{S-1} - \binom{n-1}{S}   \right) (-1)^S,
\end{align*}
where for the second equality we use $\binom{n-1}{2k} = \binom{n-1}{2k-1}$, and we carefully checked the sign of the expressions to derive the inequality. 
Then, we have
\begin{align*}
0 &\leq -\sum_{S=0}^{2k-1} \left( \binom{n-1}{S-1} - \binom{n-1}{S}   \right) (-1)^S\\
&\qquad +  \sum_{S=0}^{2k-1} \left( \binom{n-1}{S} - \binom{n-1}{S-1}   \right) (-1)^S \\
&=2\sum_{S=0}^{2k-1} \left(   \binom{n-1}{S}-\binom{n-1}{S-1}   \right) (-1)^S\\
& = 2 \sum_{S=0}^{2k-2} \binom{n-1}{S} \big( (-1)^S - (-1)^{S+1} \big) \\
&\qquad+ 2\binom{n-1}{2k-1}(-1)^{2k-1}\\
& = 4 \sum_{S=0}^{2k-2} (-1)^S\binom{n-1}{S}   - 2\binom{n-1}{2k-1}\\
& \stackrel{(a)}{=} 4\binom{n-2}{2k-2} -2\binom{n-1}{2k-1}\\
& \stackrel{(b)}{=} -2\Big(\frac{4k-1}{2k-1} -2\Big)\binom{n-2}{2k-2},
\end{align*}
that is a contradiction. Here, for (a) we use~\eqref{eq:pascal-sum}, and  for (b) we use $\binom{n-1}{2k-1} = \frac{n-1}{2k-1}\binom{n-2}{2k-2}$.

\medskip
It is not hard to verify that the above argument does not work for $n=4k+2$. Indeed, it can be shown that equations~\eqref{app:eq:sum-x-0-unif},~\eqref{app:eq:S-geq-n2} and~\eqref{app:eq:S-leq-n2} are feasible when $n=6$. Even replacing~\eqref{app:eq:sum-x-0-unif} with~\eqref{app:eq:ineq-S-x-alpha-unif} which is its origin, equations~\eqref{app:eq:sum-x-0-unif} and~\eqref{app:eq:ineq-S-x-alpha-unif} are again feasible for $n=6$. Thus, we need to take a different path for $n=4k+2$. 

One approach is to derive stronger equations than~\eqref{app:eq:margin-00t} for marginals of $q(r_1, \dots,  r_n, t)$. Indeed, it can be shown that for instance 
$$q(r_1, \dots,  r_{n-1}, t) = \frac{1}{2}\Big|\prod_{j=1}^{n-2} \omega_{j, r_j}^{(t)} \Big|^2.$$
Such equations give stronger constraints on $x_S$'s comparing to~\eqref{app:eq:sum-x-0-unif}.  This approach does give a proof of the result for $n=4k+2$ when $k>1$. For $n=6$, however, the resulting LP is again feasible for all choices of $\lambda, \mu$.  Therefore, we do not give the details of this approach, and instead take a different path.

Instead of assuming that all $\lambda_j$'s are equal, we assume that $\lambda_1=\cdots=\lambda_{n-1} = \epsilon$, $\mu_1=\cdots= \mu_{n-1} = \sqrt{1-\epsilon^2}$ and $\lambda_n=\mu_n=1/\sqrt 2$. Then, scaling $x_{r_1, \dots,  r_n}$ with $2\epsilon^{-(n-1)}$ for sufficiently small $\epsilon>0$ the inequality~\eqref{app:eq:ineq-S-x-alpha} gives
\begin{align*}
&(-1)^{S+r_n} \geq x_{r_1,\dots,  r_n}, \qquad  S>\frac{n-1}{2},\\
&-(-1)^{S+r_n} \leq x_{r_1,\dots,  r_n}, \qquad  S<\frac{n-1}{2},
\end{align*}
where $S= r_1+\cdots+r_{n-1}$. Using the idea of symmetrization as before, we find that there are $x_{S, r_n}$ such that 
\begin{align}
&(-1)^{S+r_n} \geq x_{S, r_n}, \qquad  S>\frac{n-1}{2},\label{app:eq:s-r-n-n-1}\\
&-(-1)^{S+r_n} \leq x_{S, r_n}, \qquad  S<\frac{n-1}{2},\label{app:eq:s-r-n-n-2}
\end{align}
Moreover,~\eqref{app:eq:sum-x-0} gives 
\begin{align}\label{app:eq:x-sum-s-rn}
\sum_{S=0}^{n-1}\binom{n-1}{S} x_{S, r_n} =0, \qquad r_n=0, 1,
\end{align}
and
\begin{align*}
\sum_{S=0}^{n-2} \binom{n-2}{S}(x_{S+r, 0} +x_{S+r, 1})= 0, \qquad r=0,1.
\end{align*}
Subtracting the above equations for $r=0$ and $r=1$, we obtain
\begin{align*}
0 & = \sum_{S=0}^{n-2} \binom{n-2}{S}(x_{S, 0} +x_{S, 1})\\
&\quad - \sum_{S=0}^{n-2} \binom{n-2}{S}(x_{S+1, 0} +x_{S+1, 1})\\
& = \sum_{S=0}^{n-1} \Big[ \binom{n-2}{S} - \binom{n-2}{S-1}   \Big]  (x_{S, 0} +x_{S, 1}).
\end{align*}
Next, using~\eqref{app:eq:s-r-n-n-1} and~\eqref{app:eq:s-r-n-n-2}  we find that $(x_{S, 0} +x_{S, 1})\leq 0$ if $S>(n-1)/2$, and  $(x_{S, 0} +x_{S, 1})\geq 0$ if $S<(n-1)/2$. As a result, all terms in the above sum are non-negative. Therefore, since their sum is zero, all the inequalities in~\eqref{app:eq:s-r-n-n-1} and~\eqref{app:eq:s-r-n-n-2} are equalities. However, for these choices of $x_{S, r_n}$, using~\eqref{app:eq:x-sum-s-rn}, we have
\begin{align*}
0& = \sum_{S=0}^{n-1}\binom{n-1}{S} x_{S, 0} \\
& =  \sum_{S=0}^{n/2-1}\binom{n-1}{S} (-1)^{S+1} +  \sum_{S=n/2+1}^{n-1}\binom{n-1}{S} (-1)^S\\
& = \sum_{S=0}^{n/2-1}\binom{n-1}{S} (-1)^{S+1} + \sum_{S=0}^{n/2-2}\binom{n-1}{S} (-1)^{S+1} \\
& = 2 \sum_{S=0}^{n/2-2}\binom{n-1}{S} (-1)^{S+1}  + \binom{n-1}{n/2-1}(-1)^{n/2}\\
& = 2 \sum_{S=0}^{n/2-2}\binom{n-1}{S} (-1)^{S+1}  + \binom{n-1}{n/2-1}(-1)^{n/2}\\
& = \Big(2 - \frac{n-1}{n/2-1}\Big)\binom{n-2}{n/2-2},
\end{align*}
that is a contradiction. Here, in the last equation we use~\eqref{eq:pascal-sum} and $ \binom{n-1}{n/2-1} =  \frac{n-1}{n/2-1}\binom{n-2}{n/2-2}$.

%*************************************************************************
\section{Bipartite-sources complete networks}\label{app:sec:BipartiteSourcesCompleteNet}

We prove here that the distribution $q(r_1, \dots,  r_n, t)$ given by
\begin{align*}
\Pr\Big(a_j = \ket{v_{j, r_j}}\,\forall j,\, t \Big|   \text{ ambiguous outputs} \Big).
\end{align*}
satisfies Claim~\ref{claim:RingScenario} of the main text.
The first equality in the claim is derived using 
$$q(r_1, \dots,  r_n)  = \Pr\big(a_j=\ket{v_{j, r_j}}, \forall j\big),$$
and computing the quantum probabilities.
For the second equality we compute:
\begin{align*}
q(r_j, t) & = \Pr\big(a_j= \ket{v_{j, r_j}}, t \big| \text{ ambiguous outputs}  \big)\\
& = \frac{1}{\Pr(\text{ambiguous outputs})}\Pr\big(a_j= \ket{v_{j, r_j}}, t \big)\\
& = 2^{\binom{n}{2}-1}\Pr\big(a_j= \ket{v_{j, r_j}}, t \big)\\
& = 2^{n-2}\Pr\big(a_j= \ket{v_{j, r_j}}, c_{jj'} = t \text{ iff } j\in\{j\pm 1\} \big)\\
& = 2^{n-2}\Pr\big(a_j= \ket{v_{j, r_j}}, c_{j, (j+1)} = t \big)\\
& = 2^{2(n-2)}\Pr\big(a_j= \ket{v_{j, r_j}}, c_{j', (j+1)} = t, \forall j' \big)\\
& = 2^{2(n-2)}\Pr\big(a_j= \ket{v_{j, r_j}}, a_{j+1} = \ket{t\cdots t} \big)\\
& = \frac{1}{2} \big|\omega_{j, r_j}^{(t)}\big|^2,
\end{align*}
where $c_{jj'}$ is the color taken by source $S_{jj'}$.

%************************************************************************************
\section{CM scenario via graph coloring}\label{app:sec:GraphColoring}

Recall that our network comes from the complete graph where the $n$ vertices are associated with sources $S_1, \dots, S_n$ and edges are associated with parties $A_{ij}$. This network satisfies ECS and admits a PFIS.
Consider the following quantum CM strategy on this network:
%To simplify our analysis we assume that $A_{ij}$ always outputs the received colors except if $c_i+c_j\in \cS_{ij}$ where $\cS_{ij}=\{i+j, 2(n+1)-(i+j)\}$. Thus our quantum CM strategy is as follows:
\begin{itemize}
\item All the sources distribute $\frac{1}{\sqrt n}\sum_{c=1}^n\ket{c}^{\otimes n}$.
\item The measurement basis of $A_{ij}$ consists of vectors 
\begin{align*}
&\ket{c, c}:\,~ 1\leq c\leq n\\
& \ket{c_i, c_j}:\,~ c_i+c_j\notin \{i+j, 2(n+1)-(i+j)\}\\
& \ket{v_{ij, r}}=\sum_{c_i+c_j\in \cS_{ij}} \omega_{ij, r}^{(c_i,c_j)} \ket{c_i, c_j}:\, 1\leq r\leq R_{ij},
\end{align*}
where $R_{ij}$ is the number of pairs $(c_i, c_j)$ satisfying $c_i\neq c_j$ and $c_i+c_j\in \{i+j, 2(n+1)-(i+j)\}$.
\end{itemize}

We claim that for $n\geq 5$ and an appropriate choice of parameters $ \omega_{ij, r}^{(c_i,c_j)}$ the resulting CM distribution is nonlocal.  Suppose that a classical strategy, which by Theorem~\ref{theorem:CM} of the main text is necessarily a CM strategy, simulates this distribution. Using part (i) of Corollary~\ref{app:corollary:CMExtraMeasures} (an extension of Corollary~\ref{corollary:CMExtraMeasures} of the main text) we find that all the basis vectors $\ket{c_i, c_j}$with $c_i+c_j\notin \{i+j, 2(n+1)-(i+j)\}$ are rigid for $A_{ij}$. That is, $A_{ij}$ outputs $\ket{c_i, c_j}$ for such pairs if and only if she receives colors $c_i$ and $c_j$ from $S_i$ and $S_j$, respectively. We then restrict to the case where all the parties' outputs are ambiguous. That is, we assume that  $A_{ij}$, for any $i<j$, outputs $\ket{v_{ij, r_{ij}}}$ for some $r_{ij}$. 

\begin{claim}\label{claim:CMKn}
Suppose that $n\geq 5$ and that $A_{ij}$, for any $i<j$, outputs $v_{ij, r_{ij}}$ for some $r_{ij}$. Then, the list of colors distributed by the sources is either $(c_1,\dots, c_n)=(1,\dots, n)$ or $(c_1,\dots, c_n)=(n, n-1,\dots, 1)$.
\end{claim}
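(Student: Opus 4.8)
The plan is to combine the rigidity results of Theorem~\ref{theorem:CM} and Corollary~\ref{app:corollary:CMExtraMeasures} with a purely combinatorial fact about permutations of $\{1,\dots,n\}$. \emph{Setup.} Suppose a classical strategy reproduces the distribution; by Theorem~\ref{theorem:CM} it is a CM strategy, so each source $S_i$ is assigned a color $c_i\in\{1,\dots,n\}$. For every party $A_{ij}$ the hypothesis of part (i) of Corollary~\ref{app:corollary:CMExtraMeasures} holds: for the source $S_i$, any party $A_{ik}$ with a third vertex $k$ witnesses that $S_i$ is the unique common source of $A_{ij}$ and $A_{ik}$ (and similarly for $S_j$); hence every computational basis vector $\ket{c_i,c_j}$ with $c_i\neq c_j$ and $c_i+c_j\notin\{i+j,\,2(n+1)-(i+j)\}$ appearing in $A_{ij}$'s basis is rigid for $A_{ij}$. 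Now assume every party outputs an ambiguous result, i.e.\ $A_{ij}$ outputs $\ket{v_{ij,r_{ij}}}$ for some $r_{ij}$. Then for every $i<j$: (a) $c_i\neq c_j$, for otherwise $A_{ij}$ would observe the color match $\ket{c_i,c_i}$, so $(c_1,\dots,c_n)$ is a permutation of $\{1,\dots,n\}$; and (b) $c_i+c_j\in\{i+j,\,2(n+1)-(i+j)\}$, for otherwise the rigid vector $\ket{c_i,c_j}$ would be output instead of some $\ket{v_{ij,r}}$.

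\emph{Combinatorial core.} It then remains to show that a permutation $(c_1,\dots,c_n)$ of $\{1,\dots,n\}$ with $n\ge5$ satisfying (b) for all $i<j$ is either the identity ($c_i=i$) or the reversal ($c_i=n+1-i$). I would set $v_i=c_i-i$ and $h_i=c_i-(n+1-i)$, so that $\sum_i v_i=\sum_i h_i=0$ and (b) reads: for all $i\neq j$, $v_i+v_j=0$ or $h_i+h_j=0$; the goal becomes $v\equiv0$ or $h\equiv0$. Applying (b) to the antipodal pairs $\{i,n+1-i\}$ gives $c_i+c_{n+1-i}=n+1$, so $c$ permutes these pairs. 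Write $P_v=\{i:v_i>0\}$, $N_v=\{i:v_i<0\}$, and define $P_h,N_h$ from $h$ analogously. Assume $c$ is neither the identity nor the reversal, so $P_v,N_v,P_h,N_h$ are all nonempty. The key step is that any two indices of $P_v$ have $h$-values summing to $0$ (and the same for $N_v$), so if $|P_v|\ge3$ then $h$ vanishes on all of $P_v$; chasing the consequences (using the antipodal structure and a comparison of $v$-values, one shows $P_v\cup N_v$ lies in the zero set of $h$) forces $c$ to be the reversal, a contradiction. The same applies to $|N_v|\ge3$, and with the roles of $v$ and $h$ interchanged to $|P_h|\ge3$ or $|N_h|\ge3$ (forcing $c$ to be the identity). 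Moreover a short fixed-point-counting argument---if $c$ disagrees with the identity at only two or three indices it is a transposition or $3$-cycle there, and pairing with its many fixed points via (b) is impossible---rules out any of $|P_v|,|N_v|,|P_h|,|N_h|$ being exactly $1$. Hence $|P_v|=|N_v|=|P_h|=|N_h|=2$, so $c$ disagrees with the identity at exactly four indices and with the reversal at exactly four indices. Because the union of these two four-element sets can miss at most the single central index $(n+1)/2$, this already forces $n\le9$; the remaining values $n\in\{5,6,7,8,9\}$ are dispatched directly, the main mechanism being that by (b) a genuine fixed point $c_i=i$ and a genuine anti-fixed point $c_j=n+1-j$ at distinct indices cannot coexist when $n$ is even (which kills the even cases), complemented by a brief enumeration for $n\in\{5,7,9\}$.

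\emph{Main obstacle.} The difficulty is concentrated in this last case analysis, and the hypothesis $n\ge5$ is genuinely needed there: for $n=4$ the permutations $(2,1,4,3)$ and $(3,4,1,2)=(1\,3)(2\,4)$ both satisfy (b) yet are neither the identity nor the reversal, so any argument must invoke $n\ge5$ precisely at the point where these sporadic four-index solutions are excluded. I would therefore organize the combinatorial core so that the ``generic'' reductions ($|P_v|\ge3$, etc.) are clean and dimension-free, isolating all use of $n\ge5$ in the final small-$n$ dispatch.
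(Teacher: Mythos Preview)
Your setup is exactly right and matches the paper: the ECS/PFIS structure, the rigidity of the non-ambiguous computational basis vectors via part~(i) of Corollary~\ref{app:corollary:CMExtraMeasures}, and the reduction to the combinatorial statement that any permutation $(c_1,\dots,c_n)$ of $\{1,\dots,n\}$ satisfying $c_i+c_j\in\{i+j,\,2(n+1)-(i+j)\}$ for all $i<j$ must be the identity or the reversal. Your counterexamples for $n=4$ are also correct and useful.

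Where you diverge sharply from the paper is in the combinatorial core. Your route through the auxiliary functions $v_i=c_i-i$, $h_i=c_i-(n+1-i)$, the sign sets $P_v,N_v,P_h,N_h$, the ``$|P_v|\ge3$ forces $h\equiv0$'' step, the fixed-point count, and the final small-$n$ dispatch is substantially heavier than what the paper does, and several of your steps are only sketched (``chasing the consequences'', ``brief enumeration for $n\in\{5,7,9\}$''); in particular, the sentence claiming that a fixed-point argument ``rules out any of $|P_v|,|N_v|,|P_h|,|N_h|$ being exactly $1$'' does not follow from the stated reason (having $|P_v|=1$ does not by itself force only two or three non-fixed indices). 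The pieces can be assembled into a correct proof, but not quite as you have organised them.

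The paper's argument is a few lines. It singles out the colors $1$ and $2$: if $c_i=1$ and $c_j=2$ then $c_i+c_j=3$, and $3\in\{i+j,\,2(n+1)-(i+j)\}$ forces $\{i,j\}=\{1,2\}$ or $\{i,j\}=\{n-1,n\}$. By the symmetry $i\mapsto n+1-i$ one may assume the first. If $(c_1,c_2)=(2,1)$, then from the pair $(1,3)$ one gets $c_3\in\{2,\,2n-4\}$, both impossible for $n\ge5$; so $(c_1,c_2)=(1,2)$. Finally, for each $k>2$ the pair $(1,k)$ gives $c_k\in\{k,\,2n-k\}$, and $2n-k>n$ forces $c_k=k$. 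That is the entire proof, and $n\ge5$ enters exactly once (to exclude $c_3=2n-4$). Your sign-set machinery buys generality you do not need here; the paper's trick of anchoring on the extreme pair $(1,2)$ collapses the problem immediately.
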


\begin{proof}
By assumption no party outputs a color match. Then, all the colors appear in $\{c_1, \dots, c_n\}$. In particular, there are $i, j$ with $c_i=1$ and $c_j=2$. As the output of $A_{ij}$ is ambiguous, we must have $c_i+c_j=3\in \{i+j, 2(n+1)-(i+j)\}$ which means that either $i+j=3$ or $i+j=2n-1$. Then, there are only two choices for $i, j$: either $\{i, j\} = \{1, 2\}$ or $\{i, j\} = \{n, n-1\}$.  By symmetry (change $i\mapsto n+1-i$) we only analyze the first case, where $c_1 + c_2=3$. In this case either $(c_1, c_2)=(1, 2)$ or $(c_1, c_2)=(2, 1)$. We first show that the latter is impossible. 

Suppose that $(c_1, c_2)=(2, 1)$. Since the output of $A_{1,k}$, for $k>2$ is ambiguous,  $c_k$ should be such that $2+c_k\in \{1+k, 2n+1-k\}$.  This means that $c_k\in \{k-1, 2n-1-k\}$. Then, for $k=3$ we have $c_3 \in \{2, 2n-4\}$. $c_3$ cannot be $2$ since there is already a source with color $2$, and we assumed that no party outputs a color match. Moreover, $c_3$ cannot be $2n-4$ since as $n>4$, we have $2n-4>n$ and the number of colors is $n$. Thus, $(c_1, c_2) =(2, 1)$ is not a valid choice.

We now suppose that $(c_1, c_2)=(1, 2)$. Considering the output of $A_{1, k}$, for $k>2$, we find that $1+c_k\in\{1+k, 2n+1-k\}$ or equivalently $c_k\in \{k, 2n-k\}$. Then, using $c_k\leq n$ we obtain $c_k=k$ as desired. Thus, we obtain  $(c_1,\dots, c_n)=(1,\dots, n)$.
\end{proof}

Now define 
\begin{align}\label{eq:def-q-Kn-CM}
q\big(r_{ij}: 1\leq i<j\leq n, t\big),
\end{align}
be the probability that $A_{ij}$, for any $1\leq i<j\leq n$ outputs $\ket{v_{ij, r_{ij}}}$ and the list of colors is 
\begin{align*}
&(c_1,\dots, c_n)=(1,\dots, n) \text{ if } t=0\\
&(c_1,\dots, c_n)=(n, n-1,\dots, 1) \text{ if } t=1,
\end{align*}
\emph{conditioned} on all outputs being ambiguous. By the above claim, $q\big(r_{ij}: 1\leq i<j\leq n, t\big)$ is a valid probability distribution. We claim that this distribution satisfies
\begin{align*}q\big(r_{ij}: 1\leq i<j\leq n\big) = \frac{1}{2} \Big|\prod_{i<j} \omega_{ij, r_{ij}}^{(i, j)} +\prod_{i<j} \omega_{ij, r_{ij}}^{(n+1-i, n+1-j)}  \Big|^2
\end{align*}
and for any $i<j$
\begin{align*}q(r_{ij}, t) = 
\begin{cases}\frac{1}{2} \Big|\omega_{ij, r_{ij}}^{(i, j)}\Big|^2  &t=1, \\
\frac{1}{2}\Big|\omega_{ij, r_{ij}}^{(n+1-i, n+1-j)}  \Big|^2 ~ &t=2.
\end{cases}
\end{align*}
These are in parallel with Claim~\ref{claim:RingScenario} of the main text used before with similar proof ideas.  

The first equation is essentially a consequence of Claim~\ref{claim:CMKn}. For simplicity of presentation we prove the second equation for $i=1, j=2$, and $t=1$, the general case being similar. We compute:
\begin{align*}
q(r_{12}&,  t=1) \\
&=   \frac{n^n}{2} \Pr(a_{12} = \ket{v_{12, r_{12}}}, t=1 )\\
&=   \frac{n^n}{2n^{n-2}} \Pr(a_{12} = \ket{v_{12, r_{12}}}, c_1=1, c_2=2 )\\
&=   \frac{n^4}{2} \Pr(a_{12} = \ket{v_{12, r_{12}}}, c_1=c_3=1, c_2=c_4=2 )\\
&=   \frac{n^4}{2} \Pr(a_{12} = \ket{v_{12, r_{12}}}, a_{13}=1, a_{24}=2 )\\
&=\frac{1}{2} \Big|\omega_{12, r_{12}}^{(1, 2)}\Big|^2.
\end{align*}
In general, as the output of $A_{ij}$ depends only on the messages of $S_i, S_j$ and the colors are chosen independently and uniformly, we may change the color of all sources except those of $S_i$ and $S_j$. We do so in such a way that $A_{ii'}$ and $A_{jj'}$, for some $i', j'$ different from $i, j$, output a color match. In this case, the colors of $S_i, S_j$ would be fixed without referring to the value of $t$ and the second equation follows. 

Finally, observe that by Proposition~\ref{propo:RingScenario} of the main text there are choices of $\omega_{ij,r}^{(c_i, c_j)}$ for which a distribution $q\big(r_{ij}: 1\leq i<j\leq n, t\big)$ with the given marginals does not exit. 
Thus, the quantum CM distribution is nonlocal. 

\end{appendix}

\end{document}